\documentclass[11pt]{article}


%
\usepackage{pgf,tikz,pgfplots}
\pgfplotsset{compat=1.14}
\usepackage{mathrsfs}
\usepackage{tikzit}

\tikzstyle{black dot}=[fill=black, draw=black, shape=circle]
\tikzstyle{small dot}=[fill=black, draw=black, shape=circle, {scale=0.3}]

\tikzstyle{double arrow dotted}=[<->, dashed]

\usetikzlibrary{arrows}
\usepackage[margin=1in]{geometry}
\def\commentsversion{0}  

\ifnum\commentsversion=0 
\newcommand{\comments}[1]{}
\fi

\def\newversion{1}  

\ifnum\newversion=0 
\newcommand{\old}[1]{#1}
\newcommand{\new}[1]{}
\fi
\ifnum\newversion=1 
\newcommand{\old}[1]{}
\newcommand{\new}[1]{\textcolor{blue}{this is a proposed version. let me know if its ok to adopt or not.~~}#1}
\fi

\usepackage{listings}

\ifnum\commentsversion=1 
\newcommand{\comments}[1]{#1}
\fi


\usepackage[T1]{fontenc}    
\usepackage{hyperref}       
\usepackage{url}            
\usepackage{booktabs}       
\usepackage{amsfonts}       
\usepackage{nicefrac}       
\usepackage{microtype}      
\usepackage{amsmath}
\usepackage{bm}
\usepackage{esvect}

\usepackage{amsthm}
\usepackage[noend,linesnumbered]{algorithm2e}
\usepackage{amssymb}
\usepackage{bbm}
\usepackage{enumitem}
\usepackage{graphicx}
\usepackage{caption}
\usepackage{subcaption}
\usepackage{multirow}
\usepackage{floatflt}
\usepackage{color}

\newcommand{\OSS}{\text{OSS }}
\makeatletter
\newtheorem*{rep@theorem}{\rep@title}
\newcommand{\newreptheorem}[2]{%
\newenvironment{rep#1}[1]{%
 \def\rep@title{#2 \ref{##1}}%
 \begin{rep@theorem}}%
 {\end{rep@theorem}}}
\makeatother

\makeatletter
\newtheorem*{rep@proposition}{\rep@title}
\newcommand{\newrepproposition}[2]{%
\newenvironment{rep#1}[1]{%
 \def\rep@title{#2 \ref{##1}}%
 \begin{rep@proposition}}%
 {\end{rep@proposition}}}
\makeatother

\makeatletter
\newtheorem*{rep@lemma}{\rep@title}
\newcommand{\newreplemma}[2]{%
\newenvironment{rep#1}[1]{%
 \def\rep@title{#2 \ref{##1}}%
 \begin{rep@lemma}}%
 {\end{rep@lemma}}}
\makeatother

\newtheorem{definition}{Definition}

\newtheorem{theorem}{Theorem}
\newtheorem{corollary}{Corollary}
\newtheorem{proposition}{Proposition}
\newtheorem{lemma}{Lemma}

\newtheorem{remark}{Remark}

\newreptheorem{theorem}{Theorem}
\newrepproposition{proposition}{Proposition}
\newreplemma{lemma}{Lemma}

        {\hspace*{\fill}$\Box$\par}

\DeclareMathOperator*{\argmax}{arg\,max}

\newcommand{\R}{\mathbb{R}}
\newcommand{\E}{\mathbb{E}}

\newcommand{\I}{\mathcal{I}}
\newcommand{\smooth}{\sigma}

\let\emptyset\varnothing

\title{ 
Beyond Submodular Maximization via One-Sided Smoothness
}

%

\author{
    Mehrdad Ghadiri\footnote{\texttt{ghadiri@gatech.edu}, Georgia Institute of Technology, Atlanta, GA, USA.} \and
    Richard Santiago\footnote{\texttt{rtorres@ethz.ch}, ETH Zurich, Switzerland.} \and
    Bruce Shepherd\footnote{\texttt{fbrucesh@cs.ubc.ca}, University of British Columbia, Vancouver, Canada.}
}
\date{}

\begin{document}
\maketitle

\begin{abstract}
The multilinear framework was developed to achieve the breakthrough $1-1/e$ approximation for maximizing a monotone submodular function subject to a matroid constraint, which includes the submodular welfare problem as special case. This framework has a continuous optimization part (solving the multilinear extension of a submodular set function) and a rounding part (rounding a fractional solution to an integral one).
We extend  both parts  so that the resulting generalized framework  may be used on a wider array of problems.
In particular, we make a conceptual contribution by  identifying  a family of
parameterized functions and their applications.
As a running example we focus on solving
diversity problems $\max f(S)=\frac{1}{2}\sum_{i,j \in A} A_{ij}: S \in \mathcal{M}$, where $\mathcal{M}$ is matroid. These diversity functions have $A_{ij} \geq 0$ as a measure of dissimilarity of $i,j$, and $A$ has $0$-diagonal. This family of problems ranges from intractable problems such as densest $k$-subgraph, to
$\frac{1}{2}$-approximable metric diversity problems.
The multilinear extension $F$ of such diversity functions satisfies $\nabla^2 F(x)=A \geq 0$ and hence the original multilinear framework (which assumes non-positive Hessians) does not directly apply.  Instead we introduce a new  parameter for functions $F \in {\bf C}^2$ which measures the approximability of the associated problem $\max\{ F(x): x \in P\}$, for solvable downwards-closed polytopes $P$. A function $F$ is called one-sided $\sigma$-smooth if  $\frac{1}{2}u^T\nabla^2 F(x) u \leq \smooth \cdot \frac{||u||_1}{||x||_1} u^T \nabla F(x)$ for all $u,x\geq 0$, $x\neq 0$. 
For $\sigma=0$ this class includes  previously studied classes such as continuous DR-submodular functions, and much more. For the 
 multlinear extension of
a diversity function, we show that it is one-sided $\sigma$-smooth whenever $A_{ij}$ forms a $\sigma$-semi-metric.

We give an $\Omega(1/\sigma)$-approximation for the continuous maximization problem of monotone, normalized one-sided $\sigma$-smooth $F$  with an  additional property: 
 non-positive third order partial derivatives.  Since the multilinear extension of a diversity function has this additional property we can apply the extended multilinear framework to this family of discrete problems. This requires new  matroid rounding techniques for quadratic objectives. The result  is a $\Omega(1/\sigma^{3/2})$-approximation for maximizing a $\sigma$-semi-metric diversity function subject to matroid constraint. This improves upon the previous best bound of $\Omega(1/\sigma^2)$ and we give evidence that it may be tight. For general one-sided smooth functions, we
show the continuous process gives an $\Omega(1/3^{2\sigma})$-approximation, independent of $n$.   In this setting, by discretizing, we present a concrete poly-time algorithm for multilinear functions that satisfy the one-sided $\sigma$-smoothness condition. We also describe a discretization for one-sided smooth functions with $L$-Lipschitz gradients.

\end{abstract}
\newpage

\section{Introduction}
\label{sec:intro}

In a breakthrough result, an optimal $1-1/e$ approximation was given for monotone submodular maximization subject to a matroid constraint  \cite{calinescu2011maximizing,vondrak2008optimal}. 
This resolved a long standing gap between the  best known $1/2$ approximation \cite{fisher1978analysis} and  $1-1/e$ lower bound \cite{feige1998threshold}. It also provides a tight approximation for the submodular welfare problem \cite{feige2006approximation}. A key insight   was to use a   continuous relaxation based on the multilinear extension (ME) $F$ of a set function $f:2^{[n]} \rightarrow \mathbb{R}_{\geq 0}$.
For $x \in [0,1]^n$, $F(x)$ is defined as $E[f(R(x))]$, where
$R(x)$ is a random set with each $i$ being selected independently with probability $x_i$. In particular, for  $S \subseteq [n]$, $F(\mathbbm{1}_S)=f(S)$. Thus a valid {\em multilinear relaxation} for a discrete problem $\max \{ f(S): S \in \mathcal{M}\}$ is obtained: $\max \{F(x): x \in P_{\mathcal{M}}$,
where $P_{\mathcal{M}}=conv(\mathbbm{1}_S: S \in \mathcal{M})\}$. 

This  framework has inspired a successful stream of research including 
on non-monotone submodular functions \cite{buchbinder2019constrained}   and new `contention resolution' rounding techniques for general polytopes  \cite{chekuri2014submodular}.
In this work we make a conceptual contribution by identifying a family of parameterized set functions where an extension to the multilinear framework can be brought to bear.  We also give several applications for this generalized framework.

Using this framework  to solve a discrete problem requires two essential ingredients. First, algorithmic tools to find a good solution $x^*$ for the multilinear relaxation. Second, to be able to convert a solution $x^*$ into a set $S$ with $f(S) \approx F(x^*)$. As multilinear extensions are neither concave nor convex, it is not a priori clear that 
the fractional problem itself would be tractable. For monotone submodular functions, however,   a gradient-based  technique\textemdash called {\em continuous greedy}\textemdash is shown to provide a $1-1/e$ approximation \cite{vondrak2008optimal}.   This analysis relies on
the fact that MEs of submodular functions have 
{\em non-positive  second derivatives}.  
Functions $F \in {\bf C}^2$  with this property
  are called  {\em  continuous DR-submodular} \cite{bian2017guaranteedI} (cf. \cite{soma2015generalization}). 
 The rounding step for matroids relies on a different property of  multilinear extensions. Namely, if $f$ is submodular, then $F$ is convex in any direction $\bm{e}_i-\bm{e}_j$, where $\bm{e}_i$ denotes the characteristic vector of $\{i\}$. This allows a lossless conversion to a discrete solution, for instance, by using pipage rounding. The combination of the fractional algorithm and rounding provides the $1-1/e$ approximation.


In this paper, we develop a wider  scope for the multilinear framework
and give evidence of its use   in other applications.  
 One motivating  example  is  {\em diversity maximization}  \cite{lin2010multi,kulesza2011k,wang2008multi} which has  applications in machine learning \cite{ZadehGMZ17,ghadiri2019distributed}, document aggregation \cite{AbbassiMT13}, web search \cite{RadlinskiD06}, recommender systems \cite{XinCYH06,CarbonellG98}, and many more. 
One widely used model is $\max\{ f(S): |S| \leq k\}$,  where $f(S)=\frac{1}{2}\sum_{i,j\in S} A_{ij}$ for all $S\subseteq [n]$.
We refer to $f$ as a {\em diversity function} if
$A \geq 0$ is symmetric and has  $0$-diagonal. We think of   $A_{ij}$ as measuring  dissimilarity between items $i,j$.  
This family of (supermodular) maximization problems ranges from challenging examples such as {\em $k$-densest subgraph}, with
best known approximation  $\Omega(1/n^{0.25+\epsilon})$ \cite{bhaskara2012polynomial,manurangsi2017almost},  to   {\em metric diversity} ($A_{ij}$ forms a metric) which is  $\frac{1}{2}$-approximable \cite{hassin1994notes, BorodinLY12}.
Since the multilinear extension of a diversity function $f$ has  Hessian $A$ which is  non-negative (as opposed to non-positive),  the standard $(1-1/e)$-approximation from continuous greedy does not directly apply. One of our main messages is that the metric property in diversity maximization is intrinsic to the tractability of the multilinear relaxation.


To describe  our extended  multilinear framework we first discuss  the fractional problem and later discuss rounding. We introduce a parameterized family of monotone, non-negative functions  $F \in {\bf C}^2$ and then show that  the parameter
governs the approximability of the  problem   $\max \{F(x): x \in P\}$, for downwards-closed polytopes $P$.  To achieve this we cannot directly rely on a crucial fact used in the analysis  of the continuous greedy  process: that the rate of change of $F$ at a point $x$ is at least the current {\em deficit}, defined as {\sc OPT}$-F(x)$. For MEs of submodular functions, this follows from  $F$ being concave in non-negative directions, which  in turn relies on  $F$'s second derivatives being non-positive.  Instead, 
we define a family  of   functions  whose growth in non-negative directions is constrained by a  parameter $\sigma$.
A  function $F:\mathbb{R}_{\geq 0}^n\rightarrow \mathbb{R}$ in ${\bf C}^2$ is called \emph{one-sided $\sigma$-smooth} (or $\sigma$-\OSS  for short)  if it satisfies 
\begin{equation}
\label{eqn:OSS} \tag{OSS}
\frac{1}{2}u^T\nabla^2 F(x) u \leq \smooth \cdot \frac{||u||_1}{||x||_1} u^T \nabla F(x),
\end{equation}

\noindent
for all $u,x\geq 0$, $x\neq 0$. 

The class of $0$-\OSS functions already contains  interesting and familiar functions. This includes the continuous DR-submodular  functions, as their  Hessians are non-positive \cite{bian2017guaranteedI,bian2017continuousII}; the DR-submodular form a superset of the 
functions originally considered for continuous greedy \cite{vondrak2008optimal}. The $0$-OSS functions contain much more however, as we discuss later in Section~\ref{sec:zero}.  For all of these functions,
the continuous greedy process  returns a solution within $1-1/e$ of the optimum; in some cases, converting this into  a concrete polytime algorithm  requires additional assumptions.

For larger values of $\sigma$, one example of $\sigma$-smooth functions is the class of $\sigma$-semi-metric diversity functions. Namely,
the parameter $\sigma$  corresponds to the matrix  $A$ being a {\em $\sigma$-semi-metric}.
This means that $A_{ik} \leq \sigma (A_{ij} + A_{jk})$ for all $i,j,k$, see Proposition~\ref{prop:ossdiscretequad} in Appendix~\ref{focsapp:semimetricOSS}. This captures diversity functions addressed in the literature, such as metric diversity \cite{BorodinLY14weak} ($\sigma=1$), and   negative-type distances \cite{CevallosEZ17,CevallosEZ16} or  Jensen-Shannon divergence which has been used to measure dissimilarity of probability measures (both have smoothness $\sigma=2$), see Appendix~\ref{focsapp:semimetricOSS}.

Our main contribution to the fractional problem is to show that
one-sided smoothness of a monotone, non-negative function governs the approximability of    $\max \{F(x): x \in P\}$, for downwards closed polytopes $P$. This is
reminiscent of how Lipschitz smoothness bounds convergence rates in convex optimization --- see Appendix~\ref{focsapp:lipschitz} for a discussion about the difference of Lipschitz smoothness and OSS. If $F$ additionally has non-positive third order partials, then we show that continuous greedy can be adapted to become a $\Omega(1/\sigma)$-approximation, and we show this is tight.  This class includes the discussed MEs for diversity maximization. 
We can combine this with new rounding techniques to obtain
unified results for maximizing diversity functions over matroids. 
Unlike for submodular functions, this requires the best-of-two rounding methods.  One is inspired by swap rounding,  previously applied to the submodular case. The other extends the  approximate integer decomposition framework \cite{chekuri2009approximate} to handle the ``pairwise  terms'' in  diversity functions.

For general  $\sigma$-smooth functions, without any third order assumption,
we can  obtain an $\Omega(1/3^{2\sigma})$ approximation (independent of $n$) for the continuous greedy process.
We can no longer use the 2nd order Taylor Polynomial since
we do not have non-positivity of 
 the third order error term. Instead we work with
 the 1st Order Taylor expansion but this requires
a new  upper bound on $u^T \nabla F(x+\epsilon u)$, the directional derivative, in a neighbhourhood of $x$.  In the fully  general setting we  need a  (strong) lower bound on $u^T \nabla F(x+\epsilon u)$ to make a concrete algorithm.
 However,  for \emph{multilinear} $\sigma$-OSS functions a polytime algorithm is shown independent of any additional assumptions. We also consider discretization
 for one-sided smooth functions with Lipschitz s gradients, and for a class of $0$-smooth functions which are not continuous DR submodular Section~\ref{app:procure}.
 
 \subsection{The Zero One-Sided Smooth Class}
 \label{sec:zero}
 
The class of $0$-\OSS functions is interesting in its own right. For monotone, non-negative members of this family our results show that the continuous greedy process yields a $1-1/e$ approximation for the fractional problem. Obtaining a polytime algorithm (discretization) is not immediate but we can establish natural conditions on cases where this can be achieved.    The general $0$-\OSS family forms  a very broad class of functions. For instance, it
contains every concave function $F \in {\bf C}^2$ (even though our results are only tailored  for the monotone, non-negative functions in this class).  This means it also   contains the continuous DR-submodular functions (Hessians are non-positive). This containment is proper since 
 there are $0$-\OSS functions with positive off-diagonal entries in their Hessian. 
It is interesting to compare with the related family of continuous submodular functions
that has been developed in the context of minimization  \cite{bach2019submodular}. Continuous submodular functions are defined as  having Hessians with non-positive off-diagonal entries,  but they {\em may} have positive diagonal entries. In contrast,  $0$-\OSS {\em must} have non-positive diagonals but {\em may} have positive off-diagonals (cf. Appendix~\ref{app:procure}) - see Figure~\ref{fig:zerosmooth}.

The (general) $0$-\OSS family can be defined as the functions $F \in {\bf C}^2$ for which $- \nabla^2 F(x)$ is copositive for every $x$\footnote{A matrix $A$ is {\em copositive} if $u^T A u \geq 0$ for every $u \geq 0$ \cite{dur2010copositive}.}. While recognition of copositive matrices is NP-hard \cite{murty1985some}, we  propose a  strategic procurement problem which is modelled as maximizing a 
 quadratic functions $F(x)=\frac{1}{2} x^T (-A) x + b^Tx$ where $A$ is a copositive matrix  defined by the user. Note that this family of objectives are a generalization   of concave quadratics.
We refer to the resulting (fractional)  maximization problem as  {\em  diversified procurement} discussed in Appendix~\ref{app:procure}. 

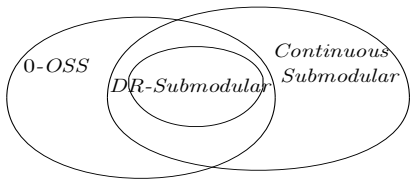
\begin{figure}[thbp]
\begin{center}
\resizebox{2.2in}{1.2in}{
\begin{tikzpicture}[scale=0.2]
	\begin{pgfonlayer}{nodelayer}
		\node [style=none] (0) at (-18, -1) {};
		\node [style=none] (1) at (4, -1) {};
		\node [style=none] (2) at (-9.75, -1) {};
		\node [style=none] (3) at (15, -1) {};
		\node [style=none] (5) at (-3, 0.25) {{\em DR-Submodular}};
		\node [style=none] (7) at (-14, 2.25) {{\em $0$-OSS}};
		\node [style=none] (8) at (8.5, 3.75) {{\em Continuous}};
		\node [style=none] (9) at (9.25, 1.25) {{\em Submodular}};
		\node [style=none] (10) at (-8, 0) {};
		\node [style=none] (11) at (3, 0) {};
		\node [style=none] (12) at (-8, 1) {};
		\node [style=none] (13) at (3, 1) {};
	\end{pgfonlayer}
	\begin{pgfonlayer}{edgelayer}
		\draw [bend left=90, looseness=1.25] (2.center) to (3.center);
		\draw [in=90, out=90, looseness=1.25] (0.center) to (1.center);
		\draw [bend right=90, looseness=1.25] (0.center) to (1.center);
		\draw [bend right=90] (2.center) to (3.center);
		\draw [bend right=75, looseness=1.25] (10.center) to (11.center);
		\draw (10.center) to (12.center);
		\draw [bend left=75] (12.center) to (13.center);
		\draw (13.center) to (11.center);
	\end{pgfonlayer}
\end{tikzpicture}
}
\caption{Venn diagram of $0$-OSS, continuous DR-submodular, and continuous submodular functions.}
\label{fig:zerosmooth}
\end{center}
\end{figure}

\section{Our Results}
\label{sec:ourResult}

Our results are of three types:  1) fractional approximations, 2) rounding, and  3) hardness results. These are presented in Sections \ref{sec:frac-approx}, \ref{sec:intgap}, and \ref{sec:hardness} respectively. In all of our results, we assume that the function is monotone and non-negative. 

\textbf{Fractional approximation.} Our main result in this part (Theorem \ref{thm:greedyBound}) shows that a modified version of the continuous greedy process gives a $(1-\exp{(-(1-\alpha)(\frac{\alpha}{\alpha+1})\strut\strut^{2\sigma})})$-approximation for maximizing a non-negative, monotone, $\sigma$-\OSS function subject to a downwards-closed polytope, where $\alpha$ is an arbitrary number in $[0,1)$. We remark that for  $\sigma=0$, our results recover the $(1-1/e)$-approximation  \cite{vondrak2008optimal,calinescu2011maximizing} for maximizing the multilinear extension of a submodular function, by setting $\alpha=0$. For $\alpha=0.5$, our approximation is better than $\frac{0.5}{3^{2\sigma}+0.5}$. For fixed $\sigma$ this gives  a constant-factor approximation independent of $n$. At present, we do not know the correct  dependence on $\sigma$.  However,
the dependence improves to linear with an additional assumption that third-order partials are non-positive.  
More precisely,  we obtain a $(1-\exp{(-\frac{1}{4\sigma+2})})$-approximation; see Theorem \ref{thm:thirdderiv}. As mentioned in Section~\ref{sec:intro} this gives  a $\Omega(1/\sigma)$ approximation which  is in fact tight within a  constant factor (cf. Corollary~\ref{cor:hardness-dks} discussed below). 
One example of such  functions are  multilinear extensions of semi-metric diversity functions, i.e., whose Hessian is a $\sigma$-semi-metric (discussed further in the rounding part). 

The `algorithm' described in the previous paragraph is a continuous-time process, and it is not immediately obvious that it can be implemented as a discrete algorithm. Some readers may wish to take it on faith that this is possible and skip ahead to the rounding results. There are actually some subtleties involved which require two distinct approaches. 
One of our methods works for multilinear functions, while the other works for general \OSS functions but needs an additional parameter that governs the growth of the first order derivatives from below. A fuller discussion is  in
 Appendix \ref{focsapp:discretization}.

 \textbf{Rounding.} In this part, we consider  maximizing  set functions of the form
 \[f(S)=\frac{1}{2} \sum_{u,v \in S} A(u,v) + \sum_{v \in S} b(v)\]
 \noindent
 where $A$ is a symmetric matrix with $0$-diagonal. If  $A,b \geq 0$,
 then these are the previously discussed diversity functions, but more generally we refer to these as  {\em discrete quadratics} (aka second-order modular \cite{KorulaMZ15}) as their  extensions are quadratic functions $F(x)=\frac{1}{2}x^T A x+b^T x$.  Since their third derivatives are obviously $0$, they `qualify' for the $\Omega(1/\sigma)$-approximation  from the preceding section.
 Hence we have a $\Omega(1/\sigma)$-approximation for maximizing $F$ over a matroid polytope $P_{\mathcal{M}}$ when $F$ is $\sigma$-\OSS (i.e., $A$ is $\sigma$-semi-metric). In order to solve the discrete problem $\max \{f(S): S \in \mathcal{M}\}$, we need to transform this fractional solution to a discrete one.


 
  We present two different rounding procedures which combined lead to a rounding gap of $O(\min\{ \frac{r}{c-2}, 1+\frac{\sigma}{r}\})$, where $r$ denotes the rank of the matroid and $c$ the size of a smallest circuit.
  Surprisingly we show this is tight (see hardness part). 
  Moreover, this  yields   an $O(\sqrt{\sigma})$ rounding gap independent of $r$ and $c$ (Theorem \ref{thm:integrality-gaps}). Combining the modified continuous greedy algorithm with our rounding result, there is an $\Omega(1/\sigma^{3/2})$-approximation for maximizing $\sigma$-semi-metric diversity functions subject to a matroid constraint (Theorem~\ref{thm:approx-sigma-diversity}). This  improves  the best known $\Omega(1/\sigma^2)$ bound \cite{ZadehG15}. In addition, we note that $\Omega(1/\sigma^{3/2})$ is a pessimistic bound in general. For instance, for uniform matroids we have $c=r+1$, which leads to an $O(1)$ rounding gap and hence an improved $\Omega(1/\sigma)$; as discussed below  this is actually tight.

This $O(1)$ rounding gap implies that for a cardinality constraint, the approximation bound of the discrete problem is asymptotically the same as the bound for the continuous problem. Thus the continuous problem of maximizing a general multilinear quadratic function over the simplex $||x||_1\leq k$, is as hard as solving the densest $k$-subgraph problem (see Corollary \ref{cor:hardness-dks}). This is similar to
the situation for continuous maximization of MEs of submodular functions. Such continuous hardness problems have received less attention, as remarked by  De Klerk \cite{de2008complexity}: ``approximation algorithms have been studied extensively for combinatorial optimization
problems, but have not received the same attention for NP-hard {\em continuous}
optimization problems.''  We close this section by discussing our hardness results for the discrete problems.

\textbf{Hardness.} In this part, we show that the hardness of approximation is also governed by the smoothness parameter of the function. More specifically, in Theorem \ref{thm:approxhardness} we show that assuming the planted clique conjecture, for a constant $\sigma$ it is hard to approximate the maximum of a $\sigma$-semi-metric diversity function subject to a cardinality constraint within a factor better than $2\sigma$. We also show that for a super constant $\sigma$, it is hard to find any constant factor approximation.

In Theorem \ref{thm:intgaplower} we give a lower bound of $\Omega(\min \{ \frac{r}{c-2},\frac{\sigma}{r}\})$ for the rounding gap of a $\sigma$-semi-metric diversity function over a matroid polytope. This shows that our rounding methods are essentially tight. In particular,  each step of our algorithm for maximizing diversity functions (i.e., maximizing the continuous function and rounding) is  tight. This leads us to speculate that the $\Omega(1/\sigma^{3/2})$-approximation (Theorem~\ref{thm:approx-sigma-diversity}) is asymptotically tight.

\section{Related Work}

We first discuss work related to solving the continuous problem in the multilinear framework.
Other adaptations of the  continuous greedy algorithm 
have been developed for applications to non-monotone submodular maximization \cite{feldman2011unified,ene2016constrained} and distributed  maximization \cite{barbosa2016new}. Another avenue aimed to generalize the class of ${\bf C}^2$ functions originally considered
\cite{vondrak2008optimal}. For instance Bach \cite{bach2019submodular} develops minimization algorithms for the family of {\em continuous submodular} functions \cite{lorentz1953inequality}  defined on compact product subsets of $\mathbb{R}^n$. A function $F \in {\bf C}^2$ is submodular if the off-diagonal entries of its Hessian are non-positive. This class is an extension of lattice submodular functions \cite{topkis1978minimizing,fujishige2005submodular} (a lattice is a poset closed under meet and join operations and hence these generalize submodular set functions). DR-submodularity is a restricted form of lattice submodular functions  introduced for maximization \cite{soma2014optimal,soma2015generalization}. These
generalize to continuous DR-submodular functions in ${\bf C}^2$ for which all entries of the Hessian are non-positive \cite{bian2017guaranteedI}. 
 The continuous greedy algorithm has also been studied
 for  maximization of these continuous functions.
 Discretization requires  an additional bound on Lipschitz smoothness and then  a $(1-1/e)$-approximation can be achieved as step sizes approach $0$. This is done
 for maximizing a (monotone and non-monotone)   DR-submodular function over a downwards-closed polytope \cite{bian2017guaranteedI,bian2017continuousII}. 
This is introduced  as an alternative to multilinear extension which is  more practical  to evaluate,  and for which a gradient-based algorithm leads to a $1/4$ approximation over downwards-closed polytopes \cite{gillenwater2012near}.

As for  discrete problems, after the introduction of
the multilinear framework,  there have been many developments. One highlight is the introduction  of contention resolution schemes \cite{chekuri2009approximate} which   allow one to work with more general polytopes. An online version of this approach has also been developed \cite{feldman2016online} with  applications in algorithmic game theory.
 In a different direction, the work of \cite{filmus2014monotone} gives a combinatorial local search $1-1/e$ approximation algorithm for maximizing monotone submodular functions over matroids.

The
diversity maximization problem $\max \{ f(S): |S| \leq k \}$ has proved extremely versatile for many applications, as noted in Section~\ref{sec:intro}. 
On the algorithmic side, a greedy $\frac{1}{2}$-approximation was
devised in \cite{hassin1994notes} and this was generalized to matroid constraints \cite{borodin2015proportionally}. The latter  was extended  to yield a $\Omega(1/\sigma^2)$-approximation
whenever the diversity costs $A_{ij}$ form a $\sigma$-semi-metric \cite{ZadehGMZ17}. That is, $A_{ik} \leq \sigma (A_{ij} + A_{jk})$ for all $i,j,k$. A PTAS has
also been developed when the $A_{ij}$'s are negative type distances \cite{CevallosEZ17,CevallosEZ16}. 

There is also work that extends set function maximization beyond submodularity. In   \cite{bian2017guaranteesIII} 
a greedy algorithm is shown to give good approximations
for  a  family of set functions which are parameterized by curvature and submodularity ratio values. In \cite{BorodinLY12}, a $\frac{1}{2}$-approximation is developed for the problem of maximizing the sum of a submodular function and a metric diversity function. A generalization of this function, called proportionally submodular functions, is considered in \cite{borodin2015proportionally}.
Another extension is to maximization of weakly submodular functions where non-negativity of the function is relaxed   \cite{harshaw2019submodular}.

\section{Fractional Approximation.}
\label{sec:frac-approx}

In this section, we first discuss a key property of one-sided smooth functions, which is the main tool in our analysis. This property asserts that for a point $x$, the directional derivative at points close to $x$ is bounded by a factor of the directional derivative at $x$. 

We then present a variant of the continuous greedy process which we use for both general $\sigma$-OSS functions and those that have non-positive third-order partial derivatives. We analyze this algorithm for both classes of functions. The discretization of the continuous greedy process is discussed in Appendix~\ref{focsapp:discretization}.

\subsection{Notations}

We use $\{\bm{e}_1,\ldots,\bm{e}_n\}$ to denote the standard basis of $\mathbb{R}^n$ and $[n]:=\{1,\ldots,n\}$ to refer to the ground set of a set function. We denote the $i$'th coordinate of a vector $x$ with $x_i$.
For a set $R\subseteq [n]$, we denote by $\mathbbm{1}_R$ its characteristic vector. Given a vector $x$ we denote its support by $supp(x)$, i.e., the set of non-zero coordinates of $x$. For a matrix $A$, we use $A_{ij}$ and $A(i,j)$ interchangeably to refer to the $i,j$ entry of $A$.

\subsection{A Key Property of One-Sided Smoothness}

The following result describes a property of one-sided smoothness that plays a key role in the analysis of the algorithm. It enables us to bound the first order Taylor's polynomial of the function.

\begin{lemma} 
Let $x \in [0,1]^n\setminus\{\vec{0}\}$, $u \in [0,1]^n$ and $\epsilon >0$ such that $x+\epsilon u \in [0,1]^n$. Let $F: [0,1]^n \rightarrow \mathbb{R}$ be a non-negative, monotone function which is one-sided $\sigma$-smooth on $\{y|x+\epsilon u \geq y\geq x\}$.
Then 
\[
u^T \nabla F(x+\epsilon u) \leq \left(\frac{||x+\epsilon u||_1}{||x||_1}\right)^{2\sigma} (u^T \nabla F(x)).
\]
\label{lemma:epsilonchangegradient}
\end{lemma}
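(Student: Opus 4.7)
My plan is to reduce the inequality to a one-variable differential inequality for the directional derivative along the segment from $x$ to $x+\epsilon u$, and then integrate.

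Define $g:[0,\epsilon]\to\mathbb{R}$ by $g(t)=u^T\nabla F(x+tu)$. Since $F$ is monotone and $u\geq 0$, we have $g(t)\geq 0$ for all $t$ in the interval. Differentiating gives $g'(t)=u^T\nabla^2 F(x+tu)\,u$. At this point I would invoke the one-sided smoothness hypothesis (OSS) at the point $x+tu$ with direction $u$ (note $x+tu>0$ since $x\neq \vec{0}$ and $x,u\geq 0$, so the denominator is nonzero) to get
\[
g'(t)\;=\;u^T\nabla^2 F(x+tu)\,u\;\leq\;2\sigma\cdot\frac{\|u\|_1}{\|x+tu\|_1}\;u^T\nabla F(x+tu)\;=\;\frac{2\sigma\,\|u\|_1}{\|x+tu\|_1}\,g(t).
\]
The key observation that lets this be integrated cleanly is that since $x,u\geq 0$, we have $\|x+tu\|_1=\|x\|_1+t\|u\|_1$, so the coefficient in front of $g(t)$ is exactly $2\sigma$ times the logarithmic derivative $h'(t)/h(t)$ of $h(t):=\|x\|_1+t\|u\|_1$.

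Next I would apply Gr\"onwall's inequality (or equivalently, compare $g(t)$ with $g(0)\cdot(h(t)/h(0))^{2\sigma}$ by differentiating the ratio) to conclude
\[
g(\epsilon)\;\leq\;g(0)\cdot\exp\!\left(\int_0^{\epsilon}\frac{2\sigma\,\|u\|_1}{\|x\|_1+t\|u\|_1}\,dt\right)\;=\;g(0)\cdot\left(\frac{\|x\|_1+\epsilon\|u\|_1}{\|x\|_1}\right)^{2\sigma},
\]
which is exactly the desired bound since $\|x+\epsilon u\|_1=\|x\|_1+\epsilon\|u\|_1$.

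The mildly delicate points I would want to handle carefully are: (i) the case $\|u\|_1=0$, which is trivial as both sides reduce to $g(0)$; (ii) the case $g(t_0)=0$ for some $t_0$, where the logarithmic manipulation needs Gr\"onwall in its integral form rather than the separation-of-variables shortcut (Gr\"onwall still yields $g(\epsilon)=0$, so the inequality holds trivially); and (iii) the verification that OSS is applicable along the whole segment, which is guaranteed by the hypothesis that $F$ is one-sided $\sigma$-smooth on $\{y\mid x+\epsilon u\geq y\geq x\}$. I do not anticipate any real obstacle; the only conceptual step is recognizing that OSS is precisely the differential inequality that, after integration along the segment, yields the polynomial growth factor $(\|x+\epsilon u\|_1/\|x\|_1)^{2\sigma}$.
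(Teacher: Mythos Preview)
Your proposal is correct and follows essentially the same approach as the paper: define $g(t)=u^T\nabla F(x+tu)$, use OSS to bound $g'(t)$ by $2\sigma\,\|u\|_1/\|x+tu\|_1$ times $g(t)$, and integrate. The only cosmetic difference is that where you invoke Gr\"onwall's inequality to handle the possibility $g(t)=0$, the paper instead adds a small parameter $h>0$ to $g(t)$ before separating variables and then lets $h\to 0$ at the end; both devices accomplish the same thing.
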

\begin{proof}
Let $g(t):=u^T \nabla F(x+tu)$. By the Chain Rule we have $g'(t)=u^T \nabla ^2 F(x+tu) u$.

\noindent
By one-sided $\sigma$-smoothness on $\{y|x+\epsilon u \geq y\geq x\}$, for any $0\leq t\leq \epsilon$,
\[
g'(t)=u^T \nabla^2 F(x+t u) u \leq 2\sigma \frac{||u||_1}{||x+tu||_1} u^T \nabla F(x+tu) =2\sigma \frac{||u||_1}{||x+tu||_1} g(t) \leq 2\sigma \frac{||u||_1}{||x+tu||_1} (g(t)+h),
\]

\noindent
for any $h>0$. Therefore, using that $g(t)+h > 0$ for all $t$ (since $g(t)\geq 0$), we have
\begin{equation}
    \label{eqn:pre-int}
    \frac{g'(t)}{g(t)+h}\leq 2\sigma \frac{||u||_1}{||x+tu||_1}.
\end{equation}

\noindent
We integrate both sides of (\ref{eqn:pre-int}) with respect to $t$. On the left hand side we get
\[
\int_{0}^{\epsilon} \frac{g'(t)}{g(t)+h} dt = \ln (g(t)+h) \biggl|_{0}^\epsilon = \ln(\frac{g(\epsilon)+h}{g(0)+h}),
\]

\noindent
and on the right hand side we get
\[
2\sigma \int_{0}^{\epsilon} \frac{||u||_1}{||x+tu||_1} dt = 2\sigma \ln(||x+tu||_1)\biggl|_{0}^\epsilon = 2\sigma \ln (\frac{||x+\epsilon u||_1}{||x||_1}),
\]

\noindent
where we use that $||u||_1 = \sum_i u_i = \frac{d}{dt} \sum_i (x_i + t u_i) = \frac{d}{dt} ||x+tu||_1$.

Therefore
$\ln (\frac{g(\epsilon)+h}{g(0)+h}) \leq \sigma \ln (\frac{||x+\epsilon u||_1}{||x||_1})$, and hence 
$
g(\epsilon)+h\leq \left(\frac{||x+\epsilon u||_1}{||x||_1}\right)^{2\sigma} (g(0)+h).
$
\noindent
Since this holds for any $h>0$ taking the limit yields the desired result.
\end{proof}

\subsection{Continuous Greedy and One-Sided $\sigma$-Smoothness}
\label{sec:cont-greedy-and-smoothness}
We now provide an adaptation of the {\em continuous greedy algorithm}, originally introduced in \cite{vondrak2008optimal}. 
Algorithm~\ref{alg:ctsgreedy} is for maximizing a monotone $\sigma$-OSS function over a polytime separable downward-close polytope.
Unlike the classical continuous greedy, our algorithm starts from a non-zero point, which allows us to take advantage of Lemma~\ref{lemma:epsilonchangegradient}. Because of this, we call our algorithm \emph{jump-start continuous greedy}.

\RestyleAlgo{algoruled}
\begin{algorithm}[t]
\footnotesize
\textbf{Input:}  A monotone $\sigma$-OSS function $F:[0,1]^n \to \mathbb{R}_{\geq 0}$, a polytime separable downward-closed polytope $P$, and  $\alpha \in [0,1)$ \\[0.6ex]
$v^* \leftarrow \argmax_{x \in P} ||x||_1$ \\
$x(0) \leftarrow \alpha v^*$ \\
$v_{max}(x) \leftarrow \argmax_{v \in P} \{ v^T \nabla F(x) \}$ \\
\For{$t \in [0,1]$} {
	Solve  $x'(t) =(1-\alpha)  v_{max}(x(t))$ ~with boundary condition $x(0) = \alpha v^*$ \\ 
}
\Return $x(1)$ \;
\caption{Jump-Start Continuous Greedy}
\label{alg:ctsgreedy}
\end{algorithm}

\begin{theorem}
\label{thm:greedyBound}
	Let $F:[0,1]^n \to \mathbb{R}_{\geq 0}$ be a monotone $\sigma$-OSS function. Let $\alpha \in [0,1)$ and
	 $P$ be a polytime separable, downward-closed, polytope. If we run the jump-start continuous greedy process (Algorithm~\ref{alg:ctsgreedy}) then $x(1) \in P$ and
     $
     F(x(1)) \geq  [1-\exp{(-(1-\alpha)(\frac{\alpha}{\alpha+1})\strut\strut^{2\sigma})}]\cdot OPT
     $
where $OPT := \max \{F(x):x\in P\}$.
\end{theorem}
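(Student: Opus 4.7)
The plan is to establish the standard continuous-greedy differential inequality $\frac{d}{dt} F(x(t)) \geq \beta\bigl(OPT - F(x(t))\bigr)$ with $\beta := (1-\alpha)\bigl(\tfrac{\alpha}{\alpha+1}\bigr)^{2\sigma}$, and then solve this ODE to conclude $F(x(1)) \geq (1-e^{-\beta})\cdot OPT$. Feasibility $x(1)\in P$ is immediate from convexity: since $x'(s)=(1-\alpha)v_{max}(x(s))\geq 0$ and the origin lies in the downward-closed polytope $P$, the trajectory is coordinatewise nondecreasing, and $x(1)=\alpha v^* + (1-\alpha)\int_0^1 v_{max}(x(s))\,ds$ is a convex combination of points of $P$.

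For the differential inequality, fix a fractional optimum $x^*\in\arg\max\{F(x):x\in P\}$ and put $u(t):=(x^*-x(t))^+$ (componentwise positive part). By monotonicity $F(x(t)\vee x^*)\geq F(x^*)=OPT$, and the fundamental theorem of calculus gives
\begin{equation*}
OPT-F(x(t))\ \leq\ F(x(t)\vee x^*)-F(x(t))\ =\ \int_0^1 u(t)^T \nabla F\bigl(x(t)+s\,u(t)\bigr)\,ds.
\end{equation*}
The crux is to bound the integrand in terms of $u(t)^T\nabla F(x(t))$ via Lemma~\ref{lemma:epsilonchangegradient} applied on the segment from $x(t)$ to $x(t)+s\,u(t)$, which yields
\begin{equation*}
u(t)^T\nabla F\bigl(x(t)+s\,u(t)\bigr)\ \leq\ \left(\frac{\|x(t)+s\,u(t)\|_1}{\|x(t)\|_1}\right)^{2\sigma} u(t)^T\nabla F(x(t)).
\end{equation*}

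The main obstacle is to bound this ratio by a constant independent of $t$, and this is exactly where the jump-start $x(0)=\alpha v^*$ is essential: without it the denominator $\|x(t)\|_1$ could be arbitrarily small and the OSS bound would be vacuous. Because the trajectory is coordinatewise nondecreasing we have $x(t)\geq \alpha v^*$, so $\|x(t)\|_1\geq \alpha\|v^*\|_1$. Using $x^*\in P$ together with the defining property $v^*=\arg\max_{x\in P}\|x\|_1$,
\begin{equation*}
\|x(t)+s\,u(t)\|_1\ \leq\ \|x(t)\|_1+\|x^*\|_1\ \leq\ \|x(t)\|_1+\|v^*\|_1,
\end{equation*}
so the ratio is at most $1+\tfrac{1}{\alpha}=\tfrac{\alpha+1}{\alpha}$, uniformly in $s,t$.

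It remains to relate $u(t)^T\nabla F(x(t))$ to $\tfrac{d}{dt}F(x(t))=(1-\alpha)\,v_{max}(x(t))^T\nabla F(x(t))$. Since $P$ is downward-closed and $0\leq u(t)\leq x^*\in P$, we have $u(t)\in P$, and the maximizing choice of $v_{max}$ gives $u(t)^T\nabla F(x(t))\leq v_{max}(x(t))^T\nabla F(x(t))$. Chaining the three inequalities yields $\tfrac{d}{dt}F(x(t))\geq \beta\bigl(OPT-F(x(t))\bigr)$. Setting $g(t):=OPT-F(x(t))$, the Gr\"onwall-type estimate $g'(t)\leq -\beta\,g(t)$ integrates to $g(1)\leq g(0)\,e^{-\beta}\leq OPT\cdot e^{-\beta}$, which is exactly the claimed approximation bound.
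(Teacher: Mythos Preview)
Your proof is correct and follows essentially the same approach as the paper: both establish the differential inequality $\frac{d}{dt}F(x(t))\geq (1-\alpha)\bigl(\tfrac{\alpha}{\alpha+1}\bigr)^{2\sigma}(OPT-F(x(t)))$ by combining monotonicity, Lemma~\ref{lemma:epsilonchangegradient}, the jump-start bound $\|x(t)\|_1\geq \alpha\|v^*\|_1$, and the optimality of $v_{max}$, then integrate. The only cosmetic differences are that the paper invokes the mean-value form of Taylor's theorem where you use the integral form, and solves the ODE via an explicit integrating factor where you cite Gr\"onwall.
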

\begin{proof}
 The main idea of the proof is to show that moving in the $v_{max}$ direction guarantees a fractional progress equal to $(\frac{\alpha}{\alpha+1})^{2\sigma} (OPT - F(x))$.
 Let $x^* \in P$ be such that $F(x^*)=OPT$. Also, let $x\in \{x(t):0\leq t\leq 1\}$ and $u=(x^*-x) \vee 0$, i.e.,  $x^* \vee x = x + u$ (where $\vee$ denotes the component-wise maximum operation).
   We have by Taylor's Theorem that for some $\epsilon \in [0,1]$:
    \[
    OPT \leq F(x^* \vee x) = F(x) + u^T \nabla F(x+\epsilon u) \leq F(x) + \left(\frac{||x+\epsilon u||_1}{||x||_1} \right)^{2\sigma}  u^T  \nabla F(x), 
\]
where the last inequality follows from Lemma~\ref{lemma:epsilonchangegradient}.
By the choice of $x(0)$ we have that $||x(0)||_1 \geq \alpha ||w||_1$ for any $w \in P$, and then since $u \in P$ and $x(t)$ is non-decreasing in each component (because $v_{max}$ is always non-negative) we also have 
    \[\frac{||x+\epsilon u||_1}{||x||_1}
    \leq \frac{||x+u||_1}{||x||_1} = 1+\frac{||u||_1}{||x||_1} \leq 1+\frac{||u||_1}{||x(0)||_1} \leq 1+\frac{1}{\alpha} = \frac{\alpha+1}{\alpha}.\]
    By the choice of $v_{max}$ and above inequalities it follows that for any $x\in\{x(t):0\leq t\leq 1 \}$,
    \[
    v_{max}(x)\cdot \nabla F(x) \geq u^T \nabla F(x) \geq \frac{1}{ \left(\frac{||x+\epsilon u||_1}{||x||_1} \right)^{2\sigma}} (OPT - F(x)) \geq (\frac{\alpha}{\alpha+1})^{2\sigma} (OPT-F(x)).
    \]
    Let $\rho=(\frac{\alpha}{\alpha+1})^{2\sigma}$. Then using chain rule, we have
	\begin{equation*}
		\frac{d}{dt} F(x(t)) = \nabla F(x(t)) \cdot x'(t) = \nabla F(x(t)) \cdot (1-\alpha) v_{max}(x(t)) \geq\rho (1-\alpha) [OPT - F(x(t))].
	\end{equation*}
	We solve the above differential inequality by multiplying by $e^{\rho (1-\alpha) t}$.
	\begin{eqnarray*}
		\frac{d}{dt} [e^{\rho (1-\alpha)t}\cdot F(x(t)) ] & = & \rho (1-\alpha)e^{\rho (1-\alpha)t}\cdot F(x(t)) + e^{\rho (1-\alpha)t}\cdot \frac{d}{dt} F(x(t)) \\
		 & \geq &  \rho (1-\alpha)e^{\rho (1-\alpha)t}\cdot F(x(t)) + \rho \cdot e^{\rho (1-\alpha)t} (1-\alpha) [OPT - F(x(t))] \\
		 & = & \rho (1-\alpha)e^{\rho (1-\alpha)t}\cdot OPT.
	\end{eqnarray*}
	
	Integrating the LHS and RHS of the above equation between $0$ and $t$ we get
	\begin{eqnarray*}	
		e^{\rho (1-\alpha)t}\cdot F(x(t)) - e^0 \cdot F(x(0)) & \geq & \rho (1-\alpha)OPT \int_{0}^{t} e^{\rho (1-\alpha)\tau} d\tau \\ &=& 
		\rho (1-\alpha)OPT \cdot [\frac{e^{\rho (1-\alpha)t}}{\rho (1-\alpha)} - \frac{1}{\rho (1-\alpha)}]
		 = OPT \cdot [e^{\rho (1-\alpha)t}-1].
	\end{eqnarray*}
	Hence
	\begin{eqnarray*}	
	F(x(t))  & \geq & [1-\frac{1}{e^{\rho (1-\alpha)t}}] OPT + \frac{F(x(0))}{e^{\rho (1-\alpha)t}} \geq [1-\frac{1}{e^{\rho (1-\alpha)t}}] OPT,
	\end{eqnarray*}
	where the last inequality follows from the fact that $F$ is non-negative.
    Substituting $t=1$ and $\rho=(\frac{\alpha}{\alpha+1})^{2\sigma}$ gives the desired result.
\end{proof}
    
In Proposition~\ref{prop:bestc} in Appendix~\ref{focsapp:contgreedy} we provide an explicit expression for the best value of $\alpha$ (in terms of $\sigma$) for Algorithm~\ref{alg:ctsgreedy} when we are dealing with $\sigma$-OSS functions. 

As discussed in Section~\ref{sec:ourResult}, if the third-order partial derivatives of $F$ are non-postive, then the approximation factor of Algorithm~\ref{alg:ctsgreedy} improves to $\Omega(1/\sigma)$.

\begin{theorem}
\label{thm:thirdderiv}
	Let $F:[0,1]^n \to \mathbb{R}_{\geq 0}$ be a monotone $\sigma$-OSS function with non-positive third-order partial derivatives.  Let $P$ be a polytime separable, downward-closed, polytope. If we run the jump-start continuous greedy process (Algorithm~\ref{alg:ctsgreedy}) with $\alpha=1/2$, then $x(1) \in P$ and
$F(x(1)) \geq  [1-\exp{(-\frac{1}{4\sigma+2})}]\cdot OPT \geq  \frac{1}{4\sigma+3} \cdot OPT$, where  $OPT := \max \{F(x):x\in P\}$.
\end{theorem}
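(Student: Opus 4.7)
The plan is to repeat the outline of the proof of Theorem~\ref{thm:greedyBound}, but to replace the multiplicative bound from Lemma~\ref{lemma:epsilonchangegradient} with an \emph{additive} estimate coming from a second-order Taylor expansion. The point of the non-positive third-order partials assumption is exactly that it makes the second-order Taylor expansion a valid upper bound, and this is what turns the $(\frac{1}{3})^{2\sigma}$ factor into a $\frac{1}{2\sigma+1}$ factor.

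Concretely, fix $x=x(t)$ on the trajectory and set $u=(x^*-x)\vee 0$, so that $x+u=x\vee x^*$ and $u\geq 0$. By Taylor's theorem with the Lagrange remainder, there is some $\xi$ on the segment from $x$ to $x+u$ such that
\[
F(x+u)=F(x)+u^{T}\nabla F(x)+\tfrac{1}{2}u^{T}\nabla^{2}F(x)u+\tfrac{1}{6}\sum_{i,j,k}u_{i}u_{j}u_{k}\,\tfrac{\partial^{3}F}{\partial x_{i}\partial x_{j}\partial x_{k}}(\xi).
\]
Since $u\geq 0$ and every third-order partial of $F$ is non-positive, the remainder term is $\leq 0$, and monotonicity gives $OPT\leq F(x\vee x^{*})=F(x+u)$. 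Thus
\[
OPT \leq F(x)+u^{T}\nabla F(x)+\tfrac{1}{2}u^{T}\nabla^{2}F(x)u.
\]
Next I would apply the $\sigma$-OSS condition to the Hessian term, which yields $\tfrac{1}{2}u^{T}\nabla^{2}F(x)u\leq \sigma\,\tfrac{\|u\|_{1}}{\|x\|_{1}}\,u^{T}\nabla F(x)$. With $\alpha=1/2$ and the jump-start $x(0)=\tfrac{1}{2}v^{*}$, exactly as in the proof of Theorem~\ref{thm:greedyBound}, monotonicity of $x(t)$ gives $\|u\|_{1}/\|x\|_{1}\leq \|u\|_{1}/\|x(0)\|_{1}\leq 2$. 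Combining,
\[
OPT \leq F(x)+(1+2\sigma)\,u^{T}\nabla F(x),\qquad\text{so}\qquad v_{\max}(x)\cdot\nabla F(x)\geq u^{T}\nabla F(x)\geq \frac{OPT-F(x)}{2\sigma+1}.
\]

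With this improved pointwise lower bound in hand, the rest is the standard differential-equation argument from Theorem~\ref{thm:greedyBound}. Since $x'(t)=(1-\alpha)v_{\max}(x(t))=\tfrac{1}{2}v_{\max}(x(t))$, the chain rule gives
\[
\frac{d}{dt}F(x(t))=\nabla F(x(t))\cdot x'(t)\geq \frac{1}{4\sigma+2}\bigl[OPT-F(x(t))\bigr].
\]
Multiplying by the integrating factor $e^{t/(4\sigma+2)}$ and integrating from $0$ to $1$, together with $F(x(0))\geq 0$, yields $F(x(1))\geq \bigl(1-\exp(-\tfrac{1}{4\sigma+2})\bigr)OPT$. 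Finally, the elementary inequality $1-e^{-y}\geq y/(1+y)$ for $y\geq 0$ (which follows from $1/(1+y)\geq e^{-y}$) applied at $y=1/(4\sigma+2)$ produces the cleaner bound $\frac{1}{4\sigma+3}\cdot OPT$.

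The only genuinely new ingredient compared to Theorem~\ref{thm:greedyBound} is the observation in the first paragraph, and this is also the step that needs the most care: one has to note that it is precisely because $u\geq 0$ (forced by working with $x\vee x^{*}$ rather than with $x^{*}$ directly) that the sign of the Lagrange remainder is controlled. I do not expect any obstacle beyond this sign analysis; everything downstream is the by-now-standard ODE computation underlying continuous greedy, specialized to $\alpha=1/2$.
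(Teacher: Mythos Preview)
Your proposal is correct and follows essentially the same argument as the paper's proof: bound $F(x\vee x^*)$ by the second-order Taylor polynomial using non-positivity of third-order partials, apply $\sigma$-OSS to the Hessian term, use the jump-start to bound $\|u\|_1/\|x\|_1\leq 1/\alpha$, and then run the standard continuous-greedy ODE. The only cosmetic difference is that the paper keeps $\alpha$ generic and specializes to $\alpha=1/2$ at the end (yielding $\rho(1-\alpha)=\tfrac{\alpha(1-\alpha)}{\alpha+\sigma}$), whereas you substitute $\alpha=1/2$ early; the final elementary inequality you use, $1-e^{-y}\geq y/(1+y)$, is exactly the paper's $e^x\geq 1+x$ in disguise.
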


The main idea for proving Theorem~\ref{thm:thirdderiv} is to use the third-order Taylor's polynomial and use the non-positivity of third-order partials and the defining property of $\sigma$-OSS functions. More specifically, because the third-order partials are non-positive, we have
\[
OPT\leq F(x^* \vee x) \leq F(x) + u^T \nabla F(x) + \frac{1}{2} u^T \nabla^2 F(x) u \leq F(x) + (1+\sigma \cdot \frac{||u||_1}{||x||_1}) u^T \nabla F(x)
\]
Then using the fact that $||x||_1$ is large (because we start from a non-zero point), we can conclude that 
$
v_{max}(x)\cdot \nabla F(x) \geq  \Big(\frac{\alpha}{\alpha+\sigma}\Big)  \Big(OPT - F(x) \Big)
$.
This inequality is then used to derive the desired result. For details of the proof of Theorem \ref{thm:thirdderiv}, see Appendix~\ref{focsapp:contgreedy}.

Algorithm~\ref{alg:ctsgreedy} is a continuous process and in general, it cannot be implemented in finite time. Therefore, we give a discretization of this process. In Appendix~\ref{focsapp:discretization}, we show that starting from $x^0=\alpha v^*$ and using the update rule $x^{t+\delta} = x^t + \delta \cdot (1-\alpha) \cdot v_{max}(x^t)$
with the appropriate step size $\delta$, we can recapture similar approximation factors. We present different results for the discretization which are very similar in nature. The first one asserts that, if $F$ is the multilinear extension of some set function $f$, then using $\delta=O(1/n^3)$, the output of the discrete algorithm satisfies
$
F(x^1)\geq (1-\exp(-\frac{1}{2}(1-\alpha)(\frac{\alpha}{\alpha+1})^{2\sigma}))(1-o(1)) OPT
$.
See Theorem~\ref{thm:discretizemultilinear} in Appendix~\ref{focsapp:discretization}.

The second result states that for a function $F$ that satisfies $u^T \nabla F(x+ \epsilon u) \geq \beta u^T \nabla F(x)$ for all $u,x\in P$ and $\epsilon\in [0,1]$, using $\delta=1/\beta n$, the output of the discrete algorithm satisfies
$F(x^{1}) \geq  \Big( 1-\exp{(- \beta (1-\alpha)(\frac{\alpha}{\alpha+1})\strut\strut^{2\sigma}} \Big) OPT$. See Theorem~\ref{thm:discretizebetageneral} in Appendix~\ref{focsapp:discretization}.
Note that, for example, the functions with a non-negative Hessian satisfy the mentioned inequality with $\beta=1$.


\section{Rounding}
\label{sec:intgap}

Let $\mathcal{M}=([n],\mathcal{I})$ be a matroid and $P_{\mathcal{M}}$ be its polytope.
In this section we study the integrality gap for a quadratic program:  $\max F(x): x \in P_{\mathcal{M}}$.
Here $F$ is a non-negative, quadratic multilinear function $F(x) =  \frac{1}{2} x^T A x + b^Tx$ such that $A,b \geq 0$ and $A$ is a symmetric, zero diagonal matrix. 

There are unbounded gaps for such quadratic programmes  even for graphic matroids if we allow parallel edges (see Theorem \ref{thm:intgaplower}).
Fortunately these large gaps transpire for a simple reason, namely when the matroids have very small circuits. We are able to obtain 
 the following integrality gap upper bound.

\begin{theorem}[Quadratic Integrality Gap over Matroids]
\label{thm:integrality-gaps}
Let $f$ 
be a set function whose   multilinear extension $F$ is $\sigma$-\OSS.
Let $\mathcal{M}$ be a matroid of rank $r$, minimum circuit size $c$, and matroid polytope $P_\mathcal{M}$.
Then there is a polytime algorithm which given $x^* \in P_\mathcal{M}$ produces an integral vector $\mathbbm{1}_I \in P_\mathcal{M}$ such that $F(x^*) \leq O(\min\{ \frac{r}{c-2}, 1+\frac{\sigma}{r}\}) f(I) \leq  O(\sqrt{\sigma}) f(I)$.
\end{theorem}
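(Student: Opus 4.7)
The plan is to design two rounding procedures that exploit complementary properties of the instance and output whichever produces the larger value of $f$. Each procedure will match one of the two quantities in $\min\{r/(c-2),\,1+\sigma/r\}$, so running both simultaneously yields the claimed combined bound. The final $O(\sqrt{\sigma})$ consequence follows by balancing the two regimes.

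\textbf{Procedure A (swap-rounding for the $1+\sigma/r$ bound).} Using a Carath\'eodory decomposition $x^*=\sum_\ell \lambda_\ell \mathbbm{1}_{B_\ell}$ of $x^*$ into bases of $\mathcal{M}$, apply a matroid swap-rounding scheme to produce a random base $B$ with marginals $\Pr[i\in B]=x_i^*$. The linear contribution $b^\top x^*$ is preserved in expectation. For the quadratic part, the issue is that $\Pr[i,j\in B]$ may be strictly smaller than $x_i^*x_j^*$, so some pairs are ``missed.'' To control this loss we use that one-sided $\sigma$-smoothness of $F$ implies $A$ is a $\sigma$-semi-metric (Proposition~\ref{prop:ossdiscretequad}), and hence $A_{ij}\leq \sigma(A_{ik}+A_{kj})$ for every $k$. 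Averaging this inequality over the $r$ elements $k\in B$ lets us charge any missed pair $A_{ij}x_i^*x_j^*$ to pairs $A_{ik}x_i^*, A_{kj}x_j^*$ that are captured, losing only a factor $1+\sigma/r$ in expectation.

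\textbf{Procedure B (pairwise integer decomposition for the $r/(c-2)$ bound).} We extend the approximate integer decomposition framework of Chekuri--Vondr\'ak--Zenklusen from linear to quadratic objectives. The key structural fact is that a matroid with minimum circuit size $c$ has every subset of size at most $c-1$ independent. Using this, we fractionally ``pair-cover'' the support of $x^*$: construct a family $\{I_\ell\}$ of independent sets of size at most $c-1$, together with weights $\mu_\ell$, such that for every $i\neq j$, $\sum_\ell \mu_\ell \mathbbm{1}[i,j\in I_\ell] \geq \Omega((c-2)/r)\cdot x_i^*x_j^*$, while marginals on singletons are also preserved up to the same factor. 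Selecting the single independent set $I_{\ell^*}$ maximizing $f(I_\ell)$ then recovers $f(I_{\ell^*})\geq \Omega((c-2)/r)\cdot F(x^*)$. Returning $\argmax\{f(I_A),f(I_B)\}$ achieves the combined $O(\min\{r/(c-2),1+\sigma/r\})$ bound. For the $O(\sqrt\sigma)$ consequence, if $r\geq \sqrt{\sigma}$ then $1+\sigma/r=O(\sqrt{\sigma})$; otherwise $r<\sqrt{\sigma}$ and so $r/(c-2)\leq r=O(\sqrt{\sigma})$ whenever $c\geq 3$. The residual case $c=2$ (parallel elements) is handled by collapsing each parallel class into a single representative before running the above.

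\textbf{Main obstacle.} The delicate step is Procedure B, since the classical approximate integer decomposition preserves only linear objectives and we need a variant tuned to the matroid's girth $c$. Concretely, we must show that any $x^*\in P_\mathcal{M}$ admits a distribution over size-$(c{-}1)$ independent sets in which each relevant pair $(i,j)$ appears with frequency $\Omega((c{-}2)/r)\cdot x_i^*x_j^*$; translating this pair-frequency bound into a guarantee on $\max_\ell f(I_\ell)$ is where $c$ enters non-trivially. The swap-rounding analysis in Procedure A is conceptually simpler but requires a careful accounting of the charging argument to avoid losing an extra $\sigma$ factor at the boundary where $r$ and $\sigma$ are comparable.
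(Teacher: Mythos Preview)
Your high-level plan (two complementary roundings, return the better) matches the paper, but both procedures as you describe them have genuine gaps.

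In Procedure~B, restricting the covering family to independent sets of size at most $c-1$ cannot deliver an $\Omega((c-2)/r)$ pair-coverage factor when $c\ll r$. A counting check: with $\sum_\ell \mu_\ell \leq 1$ and $|I_\ell|\leq c-1$ the total pair mass $\sum_\ell \mu_\ell \binom{|I_\ell|}{2}$ is at most $\binom{c-1}{2}$, whereas $\sum_{i<j} x_i^* x_j^*$ can be $\Theta(r^2)$, so the best achievable factor is only $O((c/r)^2)$. The paper (Theorem~\ref{thm:quadcovround}) does \emph{not} cap set sizes. It starts from a basis decomposition $x^*=\sum_i \lambda_i \mathbbm{1}_{B_i}$ and, for each pair $B_i,B_j$, builds a family $\mathcal{I}^{i,j}$: include the full bases $B_i,B_j$; partition $B_j\setminus B_i$ into chunks $C_k$ of size $t=\lfloor (c-1)/2\rfloor$; extend each $C_k$ greedily inside $B_i\Delta B_j$ to an independent set $R_k^{i,j}$ (which may have rank $r$); and add the small leftover $C_k\cup Z_k$ of size $\leq 2t\leq c-1$. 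The girth controls only the \emph{number} of chunks, giving $|\mathcal{I}^{i,j}|=O(r/(c-2))$ and hence a quadratic coverage of total weight $3+\frac{2r}{c-2}$.

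In Procedure~A, your one-line charging does not close: you bound $A_{ij}\leq \frac{\sigma}{r}\sum_{k\in B}(A_{ik}+A_{kj})$ and want to charge the missed mass to ``captured'' pairs, but $A_{ik}$ contributes to $f(B)$ only when $i\in B$ as well as $k\in B$, which is exactly what fails for a missed pair; pushed through, this style of averaging loses a factor $\sigma$, not $\sigma/r$. The paper's swap rounding (Theorem~\ref{thm:swapthm}) is deterministic and two-phase. First it merges bases sequentially while recording the matchings $M_k$; Lemma~\ref{lemma:swaplem} bounds the loss at step $k$ by $\lambda_k'\lambda_{k+1}\sum_{(i,j)\in M_k}A(i,j)$, so the final base $I_p'$ satisfies $F(x^*)\leq f(I_p')+\tfrac12\max_k\sum_{(i,j)\in M_k}A(i,j)$. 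Second, it runs a \emph{separate} merge on the worst pair $(I_t',I_{t+1})$ to produce a comparison base $I^*$, and only then applies the semi-metric inequality---averaged over the $r{-}1$ remaining elements of $I_t'$---to bound that maximum matching weight by $O(1+\sigma/r)\cdot f(I^*)$. The output is $\arg\max\{f(I_p'),f(I^*)\}$. (Minor: Proposition~\ref{prop:ossdiscretequad} is the direction semi-metric $\Rightarrow$ OSS, not the one you invoke; the paper's rounding theorems take the semi-metric property as the hypothesis.)
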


Combining the continuous greedy methods with  this rounding procedure we obtain the following result which improves upon the previous best bound of $\Omega(1/\sigma^2)$.
\begin{theorem}
\label{thm:approx-sigma-diversity}
The problem of maximizing a $\sigma$-semi-metric diversity function over a matroid admits a $\Omega(1/\sigma^{3/2})$-approximation.
For uniform matroids this becomes a $\Omega(1/\sigma)$-approximation.
\end{theorem}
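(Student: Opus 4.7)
The plan is to obtain Theorem~\ref{thm:approx-sigma-diversity} as a direct composition of the fractional approximation from Theorem~\ref{thm:thirdderiv} with the rounding guarantee from Theorem~\ref{thm:integrality-gaps}, after verifying that the multilinear extension of a $\sigma$-semi-metric diversity function satisfies the hypotheses of both.

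\textbf{Verifying the hypotheses.} A diversity function $f(S)=\tfrac{1}{2}\sum_{i,j\in S} A_{ij}$ has multilinear extension $F(x)=\tfrac{1}{2}x^T A x$, which is quadratic and hence has all third-order (and higher) partials identically zero, in particular non-positive. Since $A\ge 0$ has zero diagonal, $F$ is non-negative and monotone on $[0,1]^n$ with $\nabla F(x)=Ax\ge 0$. Because $A$ is a $\sigma$-semi-metric, Proposition~\ref{prop:ossdiscretequad} certifies that $F$ is one-sided $\sigma$-smooth. Finally, the matroid polytope $P_{\mathcal{M}}=\mathrm{conv}(\mathbbm{1}_I:I\in\mathcal{I})$ is downward-closed and polytime separable.

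\textbf{Composing the two results.} With these hypotheses, Theorem~\ref{thm:thirdderiv} applies to $F$ over $P_{\mathcal{M}}$: running jump-start continuous greedy with $\alpha=1/2$ produces $x^*\in P_{\mathcal{M}}$ with
\[
F(x^*) \;\ge\; \frac{1}{4\sigma+3}\,\max_{x\in P_{\mathcal{M}}} F(x) \;\ge\; \frac{1}{4\sigma+3}\,\mathrm{OPT},
\]
where $\mathrm{OPT}=\max\{f(I):I\in\mathcal{I}\}$ and the last inequality uses that $F(\mathbbm{1}_I)=f(I)$ for every independent set $I$. Next, apply Theorem~\ref{thm:integrality-gaps} to $x^*$ to obtain an integral $\mathbbm{1}_{I^*}\in P_{\mathcal{M}}$ with $f(I^*)\ge \Omega(1/\sqrt{\sigma})\,F(x^*)$. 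Chaining the two bounds gives $f(I^*)=\Omega(1/\sigma^{3/2})\,\mathrm{OPT}$, which is the first assertion.

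\textbf{Uniform matroid case.} For the uniform matroid of rank $k\ge 2$, every $(k{+}1)$-subset is a circuit, so the minimum circuit size is $c=k+1=r+1$, and therefore $\tfrac{r}{c-2}=\tfrac{k}{k-1}\le 2$. The rounding factor in Theorem~\ref{thm:integrality-gaps} is then $O(1)$ rather than $O(\sqrt{\sigma})$, and combining with the $\Omega(1/\sigma)$ fractional guarantee yields an $\Omega(1/\sigma)$ approximation (the trivial case $k=1$ reduces to picking the best singleton and is handled separately).

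\textbf{Main obstacle.} The theorem itself is a direct composition, so essentially all the work has been done in Theorems~\ref{thm:thirdderiv} and~\ref{thm:integrality-gaps}; the only subtlety is confirming that both sets of hypotheses can be met simultaneously by the multilinear extension of a diversity function (quadratic monotone non-negative with zero third-order partials, plus $\sigma$-OSS inherited from the $\sigma$-semi-metric property via Proposition~\ref{prop:ossdiscretequad}), and making sure the specialization $c=r+1$ for uniform matroids is invoked correctly to obtain the sharper $\Omega(1/\sigma)$ bound.
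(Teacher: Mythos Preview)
Your proposal is correct and matches the paper's own argument: Theorem~\ref{thm:approx-sigma-diversity} is presented in the paper simply as the composition of the $\Omega(1/\sigma)$ fractional guarantee from Theorem~\ref{thm:thirdderiv} with the $O(\sqrt{\sigma})$ (respectively $O(1)$ for uniform matroids, via $c=r+1$) rounding bound from Theorem~\ref{thm:integrality-gaps}, after noting that the multilinear extension of a $\sigma$-semi-metric diversity function is monotone, non-negative, quadratic (hence has zero third-order partials), and $\sigma$-OSS by Proposition~\ref{prop:ossdiscretequad}. Your verification of the hypotheses and the specialization to uniform matroids are exactly what the paper does (implicitly), with no meaningful difference in approach.
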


Theorem~\ref{thm:integrality-gaps} is obtained by two different rounding algorithms. One is based on modifying the approximate integer decomposition property \cite{chekuri2009approximate} to work for quadratic programs;  the second one adapts the swap rounding algorithm developed for submodular functions \cite{calinescu2011maximizing}. 
We discuss the first result here. For details regarding the second method, see Appendix~\ref{focsapp:rounding}. We remark that while our rounding results are inspired by previous techniques used for submodular maximization, the analysis requires several new insights to make it work for quadratic functions, since these are not convex in the $\bm{e}_i-\bm{e}_j$ directions.

\begin{theorem}
\label{thm:IGap}
Let $F$ be a non-negative, quadratic multilinear polynomial and $\mathcal{M}$ be a matroid with rank $r$ and minimum circuit size $c \geq 3$.
If $x^* \in P_{\mathcal{M}}$, then there is an independent set $I$  of $\mathcal{M}$ such that
$
(3+\frac{2r}{c-2})F(\mathbbm{1}_I) \geq F(x^*)
$.
\end{theorem}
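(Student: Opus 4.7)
My plan is to round $x^*$ to a single independent set by adapting the approximate integer decomposition framework for matroid polytopes to the quadratic form. As a starting point I would use Carathéodory (with a polytime matroid separation oracle) to write $x^* = \sum_{j=1}^m \lambda_j \mathbbm{1}_{I_j}$ as a convex combination of indicators of independent sets. With the shorthand $Q_{jk} := \mathbbm{1}_{I_j}^T A \mathbbm{1}_{I_k}$ and $p_j := b^T \mathbbm{1}_{I_j}$, we have $F(x^*) = \tfrac{1}{2}\sum_{j,k}\lambda_j\lambda_k Q_{jk} + \sum_j \lambda_j p_j$, while any single $I_\ell$ from the decomposition yields only $F(\mathbbm{1}_{I_\ell}) = \tfrac{1}{2} Q_{\ell\ell} + p_\ell$. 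The linear part is immediate from averaging over $\ell$; the difficulty is the cross quadratic terms $Q_{jk}$ with $j\neq k$, which are not directly realized by any $\mathbbm{1}_{I_\ell}$ in the decomposition.

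The key step is to exploit the minimum circuit size $c$ to distill, from each pair, a single independent set that inherits most of the pair's cross interaction. For each $(I_j, I_k)$ I would construct an independent set $\hat I_{jk} \subseteq I_j \cup I_k$ (possibly after first extending $I_j, I_k$ to bases through matroid-union augmentation) that captures a $\Omega(\tfrac{c-2}{r})$-fraction of $Q_{jk}$. The algorithm is a greedy selection, processing candidate elements in decreasing order of their marginal $A$-contribution and rejecting an element only when it would create a circuit; since every circuit has size at least $c$, each rejection is justified by at least $c-1$ previously accepted elements of that circuit, so the total weight of rejected contributions is bounded by $\tfrac{r}{c-2}$ times the accepted self-weight $\mathbbm{1}_{\hat I_{jk}}^T A \mathbbm{1}_{\hat I_{jk}}$.

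With such an extraction lemma in hand I would output $I$ as the maximizer of $F$ over the enriched family $\{I_\ell\}_\ell \cup \{\hat I_{jk}\}_{j \neq k}$. Weighting the contribution of each $\hat I_{jk}$ by $\lambda_j \lambda_k$ and each $I_\ell$ by $\lambda_\ell^2$, a short bookkeeping computation yields $(3 + \tfrac{2r}{c-2}) F(\mathbbm{1}_I) \geq F(x^*)$: the $\tfrac{2r}{c-2}$ is the extraction loss; the $3$ absorbs the factor $\tfrac{1}{2}$ in front of the quadratic form, the symmetry $Q_{jk} = Q_{kj}$, and the need to combine the quadratic contribution with the linear $b^T x^*$ part recovered by the same averaging.

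The main obstacle is the cross-term extraction lemma. The charging is delicate because a single accepted element in $\hat I_{jk}$ may be the blame target for many different rejected pairs, and the min-circuit bound only constrains circuit sizes, not the number of distinct circuits through a fixed element, so the amortization must be set up carefully. A fallback plan is to adapt CVZ-style pairwise swap-rounding directly on $(I_j, I_k)$ via strong base exchange, tracking the per-swap change in $F$ (rather than $F$'s expectation) and using the minimum-circuit assumption to bound the number of damaging swaps needed to merge the pair; the final accounting would again yield the $3 + \tfrac{2r}{c-2}$ gap.
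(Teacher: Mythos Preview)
Your high-level framework---convex decomposition of $x^*$ into independent sets, then handle the quadratic cross-terms $(I_j,I_k)$ pairwise---matches the paper. The divergence, and the gap, is in how each pair is handled.

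Your extraction lemma (one independent $\hat I_{jk}\subseteq I_j\cup I_k$ capturing an $\Omega(\tfrac{c-2}{r})$ fraction of $Q_{jk}$ via greedy-plus-charging) is not established by the argument you sketch, and the greedy can fail outright. Take $I_j,I_k$ disjoint bases and $A$ purely bipartite: $A(u,v)=1$ for $u\in I_j,\,v\in I_k$ and $0$ otherwise. All static degrees equal $r$, so degree-ordered greedy may accept all of $I_j$ first; thereafter every $v\in I_k$ creates a circuit (since $I_j$ is already a basis) and is rejected, leaving $\hat I_{jk}=I_j$ with internal $A$-weight $0$ while $Q_{jk}=r^2$. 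More structurally, your charging bounds the \emph{degree} of rejected elements by the degree of accepted ones, but the degree of an accepted element includes its $A$-weight to rejected elements; you cannot credit that toward the accepted \emph{self}-weight $\mathbbm{1}_{\hat I}^T A \mathbbm{1}_{\hat I}$, which is what must be large. You flag this obstacle yourself; it is real, not merely delicate.

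The paper avoids single-set extraction entirely and instead \emph{covers}. For each pair $(B_i,B_j)$ it builds a small family $\mathcal{I}^{i,j}$ of independent sets, each given weight $\lambda_i\lambda_j$, whose coverage sets jointly hit every cross-edge in $\delta(B_i\setminus B_j,\;B_j\setminus B_i,\;B_i\cap B_j)$. Concretely: set $t=\lfloor(c-1)/2\rfloor$, chop $B_j\setminus B_i$ into chunks $C_k$ of size $t$, extend each $C_k$ to a maximal independent $R_k$ inside $B_i\Delta B_j$ using only elements of $B_i\setminus B_j$, and let $Z_k$ be the (size $\le t$) leftover of $B_i\setminus B_j$. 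Then $Z_k\cup C_k$ has size $\le 2t<c$ and is therefore independent \emph{automatically}. Every pair $(u,v)$ with $u\in C_k$, $v\in B_i\setminus B_j$ is covered by one of $R_k$ or $Z_k\cup C_k$. This gives $|\mathcal{I}^{i,j}|\le K=O(r/(c-2))$, and since $\sum_i\lambda_i^2+K\sum_{i<j}\lambda_i\lambda_j\le 1+K/2$, the best set in the entire collection meets the bound. The key point is that minimum circuit size is used not for charging but to certify independence of small sets for free; no amortization is needed.

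Your swap-rounding fallback is closer in spirit to the paper's \emph{second} rounding method, but that analysis relies on the $\sigma$-semi-metric inequality on $A$ and delivers $O(1+\sigma/r)$, not $3+\tfrac{2r}{c-2}$; it does not yield this theorem by itself.
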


We actually prove the following  decomposition result which implies Theorem~\ref{thm:IGap}.
For $x^* \in P_{\mathcal{M}}$, we define the  {\em coverage}  of a pair $u,v$  to be the quantity $x^*(u)x^*(v)$. Let $Cov \in \R^{{n \choose 2}}$ be the vector with entries $Cov(u,v)=x^*(u)x^*(v)$. As $F$ is quadratic it is linear in these coverage values and the vector $x^*$:  $F(x^*)=\sum_{u \neq v} (\frac{A(u,v)}{2}) Cov(u,v)+ \sum_v b(v) x^*(v)$.
For a set $X$ we say its {\em coverage set} is $cov(X)=\{ \{u,v\}:  u,v \in X, u \neq v\}$.
A {\em quadratic coverage} of $x^*$ is a collection
$\mathcal{C}=\{\mathbbm{1}_{I_i},\mu_i\}$ of weighted independent sets  with  properties (1) for each $u \neq v$,  $\sum_{i: \{u,v\} \subseteq cov(I_i)} \mu_i \geq Cov(u,v)$, and (2) for each $v$, $\sum_{i:I_i \ni v} \mu_i \geq x^*(v)$. 
Recall that $A,b\geq 0$.
It follows that
$\sum_i \mu_i F(\mathbbm{1}_{I_i}) \geq F(x^*)$ and hence  if
the {\em size} $\sum_i \mu_i \leq K$, then some $I_i$ satisfies $F(\mathbbm{1}_{I_i}) \geq \frac{F(x^*)}{K}$. 
This bound depends on the fact that entries of $A$ are non-negative. 
By condition (1) of quadratic coverages, we have $\sum_i \mu_i \mathbbm{1}_{cov(I_i)} \geq Cov$ and by condition (2), $\sum_i \mu_i \mathbbm{1}_{I_i}\geq x^*$. Therefore, for such a collection we have $\sum_i \mu_i F(\mathbbm{1}_{I_i}) \geq F(x^*)$. This reasoning shows that to deduce Theorem~\ref{thm:IGap}, it suffices to find a quadratic coverage with $\sum_i \mu_i \leq (3+\frac{2r}{c-2})$.

\begin{theorem}
\label{thm:quadcovround}
Let $F(x)=\frac{1}{2}x^T A x + b^T x$ be a non-negative, quadratic multilinear polynomial and $\mathcal{M}$ be a matroid with rank $r=r([n])$ and minimum circuit size $c \geq 3$.
If $x^* \in P_{\mathcal{M}}$, then it has a quadratic coverage
of size at most $3+\frac{2r}{c-2}$.
\end{theorem}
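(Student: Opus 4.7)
My plan is to prove the theorem by establishing a quantitative strengthening of Caratheodory's decomposition for matroids with minimum circuit size at least $c$, and then scaling it up to obtain the claimed quadratic coverage.

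The central ingredient is the following key lemma: for every $x^* \in P_\mathcal{M}$ there is a convex decomposition $x^* = \sum_j \lambda_j \mathbbm{1}_{I_j}$ (with $\sum_j \lambda_j = 1$ and each $I_j$ independent in $\mathcal{M}$) whose pairwise joint marginals satisfy $\sum_{j : \{u,v\} \subseteq I_j} \lambda_j \geq \tfrac{c-2}{2r}\, x^*(u) x^*(v)$ for every pair $\{u,v\} \subseteq [n]$. This is a quantitative strengthening of the approximate integer decomposition property of \cite{chekuri2009approximate} tuned to matroids with large girth; note that the upper bound $p_{uv} \leq x^*(u)x^*(v)$ follows for free from negative correlation of matroid bases, so the content of the lemma is the matching lower bound in terms of $(c-2)/r$.

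Given this lemma, the quadratic coverage is constructed by setting $\mu_j = K \lambda_j$ with $K = \max(1, \tfrac{2r}{c-2})$. Including each $(\mathbbm{1}_{I_j}, \mu_j)$ in $\mathcal{C}$, condition (2) holds since $\sum_{j : v \in I_j} \mu_j = K x^*(v) \geq x^*(v)$, and condition (1) holds since $\sum_{j : \{u,v\} \subseteq I_j} \mu_j \geq K \cdot \tfrac{c-2}{2r} x^*(u) x^*(v) \geq x^*(u) x^*(v)$. The total weight is $\sum_j \mu_j = K \leq 3 + \tfrac{2r}{c-2}$, where the additive $3$ absorbs the small-$r$ regime ($r < (c-2)/2$) and a constant amount of slack I expect to incur in the lemma.

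The main obstacle is proving the key lemma. Naive approaches all blow up by a factor of $r/(c-2)$: product-distribution sampling at rate $\Theta((c-2)/r)$ gives joint marginals of order $((c-2)/r)^2 \cdot x^*(u)x^*(v)$, yielding the too-large weight $\Theta(r^2/(c-2)^2)$ instead of the linear $\Theta(r/(c-2))$. My approach is a swap/exchange argument on the space of Caratheodory decompositions: starting from an arbitrary decomposition, I iteratively repair under-covered pairs $\{u,v\}$ by exchanging elements between independent sets $I_a, I_b$ with $u \in I_a \setminus I_b$ and $v \in I_b \setminus I_a$, producing a new decomposition with improved joint coverage of $\{u,v\}$ while preserving all individual marginals $x^*(w)$ and the independence of every set. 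The minimum-circuit bound is exactly what guarantees that such exchanges remain feasible: swapping along a short cycle could fail, but $c$-girth rules this out. Equivalently (and perhaps more cleanly), one can set up the covering LP whose variables are $\{\lambda_I\}_I$ indexed by independent sets and whose constraints encode conditions (1) and (2); its dual can be bounded by using the matroid rank function together with the min-circuit condition. Controlling the exchange (or the dual computation) carefully so that the resulting ratio $(c-2)/(2r)$ is indeed achieved for every pair simultaneously is the technically delicate step I expect to need.
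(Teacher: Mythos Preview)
Your proposal hinges entirely on the key lemma---that some convex decomposition of $x^*$ achieves joint marginals $\sum_{j:\{u,v\}\subseteq I_j}\lambda_j \geq \frac{c-2}{2r}\,x^*(u)x^*(v)$ simultaneously for all pairs---but you have not proved it, and both approaches you sketch have real obstacles. For the swap argument: a symmetric exchange $I_a \to I_a - w + v$, $I_b \to I_b - v + w$ puts $u,v$ together in $I_a$ only if the exchange element $w$ can be chosen different from $u$, which is not guaranteed by the basis-exchange axiom; moreover, each such swap decreases the joint coverage of every pair $\{w,\cdot\}$ that was previously inside $I_a$, so you need a global potential to rule out cycling, and you have not said what that potential is or why its optimum gives the specific ratio $(c-2)/(2r)$. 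Your remark that ``$c$-girth rules out failure along short cycles'' is not a mechanism---the girth of $\mathcal{M}$ says nothing about which symmetric exchanges are available between two fixed bases. For the LP-duality route you have not written down a dual certificate, and the constraint matrix (rows indexed by unordered pairs, columns by independent sets) has no evident structure that the rank function alone controls. Note also that what you are asking for is strictly stronger than a quadratic coverage of size $O(r/(c-2))$: you insist the linear marginals be \emph{exactly} $x^*(v)$ (up to the common scaling $K$), not merely $\geq x^*(v)$.

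The paper sidesteps your lemma entirely. Starting from an \emph{arbitrary} convex decomposition $x^*=\sum_i \lambda_i \mathbbm{1}_{B_i}$, it expands $Cov(u,v)=x^*(u)x^*(v)$ as a sum of contributions $\lambda_i\lambda_j$ over ordered pairs $(B_i,B_j)$ with $u\in B_i$, $v\in B_j$, and then handles each unordered pair $\{B_i,B_j\}$ separately: it suffices to cover all cross-pairs in $\delta(B_i\setminus B_j,\,B_j\setminus B_i,\,B_i\cap B_j)$ by a family $\mathcal{I}^{i,j}$ of at most $K=O(r/(c-2))$ independent sets, each assigned weight $\lambda_i\lambda_j$. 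This is where the circuit bound is actually used, and concretely: chop $B_j\setminus B_i$ into blocks $C_k$ of size $t=\lfloor(c-1)/2\rfloor$, extend each $C_k$ to a maximal independent set $R_k$ inside $B_i\Delta B_j$ using only elements of $B_i\setminus B_j$, and let $Z_k$ be the unused part of $B_i\setminus B_j$; then $|C_k\cup Z_k|\leq 2t\leq c-1$, so $C_k\cup Z_k$ is automatically independent, and the two sets $R_k$, $C_k\cup Z_k$ together cover every pair with one endpoint in $C_k$ and the other in $B_i\setminus B_j$. Summing, the total weight of the coverage is $\sum_i\lambda_i^2 + \sum_{i<j}|\mathcal{I}^{i,j}|\,\lambda_i\lambda_j \leq 1 + K/2$. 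The resulting collection is \emph{not} a rescaled convex decomposition of $x^*$---different elements $v$ are over-covered by different amounts---so the paper establishes only the weaker statement that is actually needed.
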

\begin{proof}
We start with an arbitrary representation of $x^*$ as a convex combination of independent sets: $\sum_i \lambda_i \mathbbm{1}_{B_i}$.

First note that $Cov(u,v)=(\sum_{B_i \ni u} \lambda_i) (\sum_{B_j \ni v} \lambda_j)=\sum_{(i,j): B_i \ni u, B_j \ni v} \lambda_i \lambda_j$.
Hence  an  ordered pair $(B_i,B_j)$ contributes $\lambda_i \lambda_j$ to $Cov(u,v)$  if
$u \in B_i,v \in B_j$.  This implies that if $B_i=B_j$, then this contributes exactly $\lambda_i^2$ for every $u,v \in B_i$.
If $B_i\neq B_j$, then the unordered pair $\{B_i,B_j\}$ contributes to coverages as follows.
It contributes $2 \lambda_i \lambda_j$ for every $u,v \in B_i \cap B_j$
and $\lambda_i \lambda_j$ for each $uv \in \delta(B_i-B_j,B_j-B_i,B_i \cap B_j)$.
Here for disjoint node sets $X_1,X_2, \ldots , X_p$ we define $\delta(X_1,X_2, \ldots ,X_p)$ to be the set of edges which have endpoints in distinct
sets from the $X_i$'s.
Hence we can express the coverage vector $Cov$ for $x^*$ in $\R^{{n \choose 2}}$ as: 
\begin{equation}
    \label{eqn:cov-vector}
\sum_i \lambda^2_i \cdot \mathbbm{1}_{cov(B_i)} + \sum_{i < j} \lambda_i \lambda_j  \cdot (2 \cdot \mathbbm{1}_{cov(B_i \cap B_j)} + \mathbbm{1}_{\delta(B_i-B_j,B_j-B_i,B_i \cap B_j)}).
\end{equation}

We now define a quadratic coverage, that is, a weighted collection of independent sets 
satisfying conditions (1) and (2). 
In particular, for  each
$i \leq j$ we  define a family of independent sets $\mathcal{I}^{i,j}$ which will take care of all coverages associated with terms $\lambda_i \lambda_j$ in (\ref{eqn:cov-vector}).
  In the case where $i=j$, this is easy. We just include the set
$B_i$ with weight $\mu_i=\lambda_i^2$. Now consider
the case where $i < j$ which is trickier.
  For each set $I$ in this family, we always associate the weight $\mu_I=\lambda_i \lambda_j$ and so this amounts to finding a family which satisfies
\begin{equation}
    \label{eqn:cov-vector2}
\sum_{I \in \mathcal{I}^{i,j}} \mathbbm{1}_{cov(I)} \geq 2 \cdot \mathbbm{1}_{cov(B_i \cap B_j)} + \mathbbm{1}_{\delta(B_i-B_j,B_j-B_i,B_i \cap B_j)}.
\end{equation}

\noindent
We return to this construction later but we note that condition (2) will follow easily
as long as we guarantee that for each $v,i$ and $j \neq i$, if $B_i \ni v$, then the family $\mathcal{I}^{i,j}$ includes at least one set $I$ which contains $v$.
Since we have $\mu_I=\lambda_i \lambda_j$ for any such $I$, we derive the 
desired inequality (2):
 $\sum_{I \ni v} \mu_I \geq \sum_{B_i \ni v}  (\sum_j \lambda_i \lambda_j) =  \sum_{B_i \ni v} \lambda_i = x^*(v)$. 

If we can achieve this construction so   that  $| \mathcal{I}^{i,j} | \leq K$ for each $i,j$, then
 we have a quadratic coverage whose size is 
 $\sum_{i} \mu_i + \sum_{i < j} \sum_{I \in \mathcal{I}^{i,j}} \mu_I = \sum_{i} \lambda^2_i + \sum_{i < j} \lambda_i \lambda_j |\mathcal{I}^{i,j}| \leq \sum_{i} \lambda^2_i + \sum_{i < j} \lambda_i \lambda_j K \leq 1 + K/2$. 
The last inequality follows since the $\lambda_i$ are a convex combination.

We now define $\mathcal{I}^{i,j}$ for a fixed pair $i,j$ and show how to find the desired  independent sets $\mathcal{I}^{i,j} = \{ I^{i,j}_k: k=1,2, \ldots ,K \}$, where $K$ is defined later.
First, if $|B_i \cap B_j| \geq 1$, then we include the sets $B_i,B_j$. This takes care of the double-coverage of pairs in $B_i \cap B_j$
as well as any pairs $u,v$ with $u \in B_i \cap B_j$ and $v \in B_i \Delta B_j$.  Let $S_{ij}=B_i \setminus B_j$ and $S_{ji}=B_j \setminus B_i$.
Note that the excess coverage from these sets $B_i,B_j$  is to contribute an extra
$\lambda_i \lambda_j$ to each pair in  $cov(S_{ij}) \cup cov(S_{ji})$.
It now remains
to cover the edges in $\delta(S_{ij},S_{ji})$. 

 Let $t=\lfloor (c-1)/2 \rfloor$ and $m=|B_i \cap B_j|\geq 0$. 
 Decompose $B_j \setminus B_i$ into 
$\ell=\lceil (r-m)/t \rceil$
disjoint independent sets
by ripping out sets of size $t$ greedily, possibly the last being smaller than $t$. Call these
$C_1,C_2, \ldots ,C_{\ell}$.
For each $k \leq \ell$, we extend $C_k$  to an independent set $R^{i,j}_k$ in $B_i \Delta B_j$ only adding elements from $B_i \setminus B_j$.
Hence this set will have used all elements of $B_i$ except
a subset, call it $Z_k$,   of size at most $t$. Let $C^{i,j}_k = Z_k \cup C_k$
and note that $|C^{i,j}_k| \leq 2t \leq c-1$ and hence it is also independent.
 We now examine the pairs covered by $C^{i,j}_k,R^{i,j}_k$.
 Let  $u \in C_k, v \in B_i \setminus B_j$, then either  $u,v$ is covered by $R^{i,j}_k$, or $v \in Z_k$ in which case it is  covered by  $C^{i,j}_k$.

Finally, we count the number of sets for a given family. There are two cases depending on whether $B_i \cap B_j = \emptyset$ or not.
If the intersection is empty, then we just build $2 \lceil \frac{r}{t} \rceil$. Since $t \geq \frac{c-2}{2}$, this is at most
$2 \cdot (1+\frac{2r}{c-2})$. In the other case we have $m \geq 1$,
and we add the sets $B_i,B_j$ up front and then we add $2 \lceil \frac{r-m}{t} \rceil$ more sets. 
Hence the overall number of sets in this case is at most $2+2\cdot (\frac{2r}{c-2} - \frac{2}{c-2}+1)$.

It follows that $K \leq 2 \cdot (2+\frac{2r}{c-2})$, and thus we have a quadratic coverage of size at most $1 + \frac{K}{2} \leq 3 + \frac{2r}{c-2} $, as we wanted to show.
\end{proof}

\section{Hardness}
\label{sec:hardness}

It is shown that it is hard to approximate the maximum of a metric diversity function subject to a cardinality constraint within a factor better than $\frac{1}{2}$~\cite{BhaskaraGMS16,BorodinLY12}. We generalize this hardness result to $\sigma$-semi-metric diversity functions. The following result shows that our approximation factor for maximizing a $\sigma$-semi-metric diversity function, subject to a uniform matroid (Theorem~\ref{thm:approx-sigma-diversity}) is asymptotically tight. For the proof of the following theorem, see Appendix~\ref{focsapp:hardness}. Let $\theta :=  n^{1/(\log \log n)^c}$ where
$c$ is a suitably chosen universal constant independent of $n$.

\begin{theorem}
\label{thm:approxhardness}
Assuming the exponential time hypothesis (ETH): (1)  There is no polytime $4/\theta$-approximation algorithm for maximizing  $\sigma$-semi-metric diversity functions subject to a cardinality constraint, and (2)
for any fixed $\sigma \geq 1$ and $\epsilon >0$, there is no polytime algorithm which  approximates the maximum of a $\sigma$-semi-metric diversity function subject to a cardinality constraint within a factor of $2\sigma-\epsilon$.
\end{theorem}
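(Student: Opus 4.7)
The plan is a reduction from the densest $k$-subgraph problem (DkS), instantiated at two scales of $\sigma$ to obtain both parts. Given a DkS instance $G=(V,E)$ on $n$ vertices, define a zero-diagonal symmetric matrix $A\in\mathbb{R}^{n\times n}$ by $A_{ij}=2\sigma$ when $\{i,j\}\in E$ and $A_{ij}=1$ when $i\neq j$ and $\{i,j\}\notin E$, and consider the diversity function $f(S)=\frac{1}{2}\sum_{i,j\in S} A_{ij}$ under the cardinality constraint $|S|=k$. The scaling $2\sigma$ is chosen to saturate the triangle inequality, so the resulting YES/NO gap is as large as the $\sigma$-semi-metric structure allows.

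First I would verify that $A$ is $\sigma$-semi-metric. Since off-diagonal entries of $A$ lie in $\{1,2\sigma\}$, the only binding case of $A_{ik}\leq \sigma(A_{ij}+A_{jk})$ is $A_{ik}=2\sigma$ with $A_{ij}=A_{jk}=1$, where $2\sigma \leq \sigma(1+1)=2\sigma$ holds (with equality); degenerate cases with repeated indices reduce to $\sigma\geq 1$. Next, for any $k$-subset $S$, letting $e(S)$ denote the number of $G$-edges induced by $S$, a direct calculation gives $f(S)=\binom{k}{2}+(2\sigma-1)\,e(S)$. Thus a $k$-clique gives $f(S)=2\sigma\binom{k}{2}$, while a subset of induced density at most $\delta$ gives $f(S)\leq \binom{k}{2}\bigl(1+(2\sigma-1)\delta\bigr)$, producing a YES/NO ratio $\tfrac{2\sigma}{1+(2\sigma-1)\delta}$ which tends to $2\sigma$ as $\delta\to 0$.

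For part~(2), with $\sigma\geq 1$ and $\epsilon>0$ fixed I pick $\delta=\delta(\sigma,\epsilon)>0$ small enough that the above ratio exceeds $2\sigma-\epsilon$; a polytime $(2\sigma-\epsilon)$-approximation to the diversity problem would then distinguish the two DkS cases, contradicting the constant-gap ETH-hardness of DkS. For part~(1), I instead take $\sigma:=\theta$ and invoke the ETH-based $\theta$-inapproximability of DkS: in the YES case $f(S)\geq 2\theta\binom{k}{2}$, while in the NO case (every $k$-subset has density at most $1/\theta$) $f(S)\leq \binom{k}{2}\bigl(1+(2\theta-1)/\theta\bigr)\leq 3\binom{k}{2}$. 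The resulting $\Omega(\theta)$ gap rules out any $O(1/\theta)$-approximation, and a tighter accounting of the constants in the bound on $f(S)$ in the NO case yields the stated $4/\theta$ threshold.

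The reduction itself and the semi-metric verification are essentially mechanical. The main obstacle is importing the correct ETH-based DkS hardness inputs: a constant-gap version for part~(2), and the near-optimal $\theta=n^{1/(\log\log n)^c}$ gap version for part~(1). Once these are treated as black boxes, the gap-and-cardinality-preserving reduction above delivers both conclusions immediately; a minor subtlety worth flagging is that the same $k$ serves as the cardinality bound on both sides, so no blowup of parameters occurs in the reduction.
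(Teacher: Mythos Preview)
Your proposal is correct and follows essentially the same route as the paper: the identical reduction $A_{ij}=2\sigma$ on edges and $A_{ij}=1$ on non-edges, the same verification of the $\sigma$-semi-metric property, and the same appeal to Manurangsi's ETH-based clique-versus-low-density gap for densest $k$-subgraph. The only cosmetic difference is that the paper tracks the density $\rho(S)$ via the sandwich $\rho(S)\le f(S)/(2\sigma\binom{k}{2})\le \rho(S)+1/(2\sigma)$, whereas you work directly with the explicit formula $f(S)=\binom{k}{2}+(2\sigma-1)e(S)$; these are equivalent computations and yield the same conclusions.
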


Combining Theorem~\ref{thm:approxhardness} and the $O(1)$ rounding for multilinear quadratics subject to a uniform matroid (Theorem \ref{thm:IGap}), gives the following result which states that the approximation bound given in Theorem~\ref{thm:thirdderiv}, for the functions with a non-positive third-order partial derivatives, is asymptotically tight. 

\begin{corollary}
\label{cor:hardness-dks}
Let $A$ be a matrix corresponding to a $\sigma$-semi-metric distance function. Then, assuming ETH, it is hard to approximate the  continuous problem $\max x^T A x: ||x||_1 \leq k$ within a factor of $o(\sigma)$. Moreover this implies that the analysis of the jump-start continuous greedy algorithm in Theorem~\ref{thm:thirdderiv} is asymptotically tight.
\end{corollary}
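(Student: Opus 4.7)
\textbf{Proof sketch for Corollary~\ref{cor:hardness-dks}.} The plan is to chain Theorem~\ref{thm:approxhardness}(2) (discrete inapproximability for cardinality-constrained $\sigma$-semi-metric diversity) with Theorem~\ref{thm:IGap} (constant-factor rounding over uniform matroids), and then leverage the resulting continuous lower bound to deduce tightness of Theorem~\ref{thm:thirdderiv}. The key observation enabling the rounding reduction is that a cardinality constraint $|S|\leq k$ is the uniform matroid of rank $r=k$ with minimum circuit size $c=k+1$, so the integrality gap bound $3+\tfrac{2r}{c-2} = 3 + \tfrac{2k}{k-1} \leq 7$ is a constant independent of both $n$ and $\sigma$.

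For the continuous inapproximability, I would argue by contradiction: assume a polynomial-time algorithm $\mathcal{A}$ achieving an approximation ratio $\rho(\sigma) = o(\sigma)$ for $\max\{x^T A x : \|x\|_1 \leq k, x \in [0,1]^n\}$ whenever $A$ is $\sigma$-semi-metric. Given a discrete instance, its multilinear extension $F(x) = \tfrac{1}{2} x^T A x$ is quadratic, non-negative, with $A \geq 0$ and zero diagonal. Running $\mathcal{A}$ on the same matrix yields $x^* \in P_{\mathcal{M}}$ (the uniform matroid polytope) with $F(x^*) \geq OPT_c/\rho(\sigma)$, where $OPT_c \geq OPT_d$ because every characteristic vector of a feasible discrete solution lies in $P_{\mathcal{M}}$. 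Theorem~\ref{thm:IGap} then produces an independent set $I$ with $|I| \leq k$ and $f(I) \geq F(x^*)/7 \geq OPT_d/(7\rho(\sigma))$. This composition is a polynomial-time $O(\rho(\sigma)) = o(\sigma)$-approximation for the discrete problem, contradicting Theorem~\ref{thm:approxhardness}(2) under ETH, since for sufficiently large fixed $\sigma$ one has $O(\rho(\sigma)) < 2\sigma - \epsilon$.

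For the tightness of Theorem~\ref{thm:thirdderiv}, I observe that the multilinear extension of a $\sigma$-semi-metric diversity function satisfies the hypotheses of that theorem: it is $\sigma$-OSS by Proposition~\ref{prop:ossdiscretequad}, and its third-order partial derivatives are identically zero since it is quadratic. Hence Theorem~\ref{thm:thirdderiv} delivers an $\Omega(1/\sigma)$-approximation for the continuous problem. Together with the $\Omega(\sigma)$ inapproximability just established, any improvement of Theorem~\ref{thm:thirdderiv} to an $\omega(1/\sigma)$ guarantee would yield a polynomial-time $o(\sigma)$-approximation for the continuous problem, which we have just ruled out. Therefore the analysis of jump-start continuous greedy in Theorem~\ref{thm:thirdderiv} is tight up to a constant factor.

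I expect the main subtle point to be reconciling the polytope $\{x \geq 0 : \|x\|_1 \leq k\}$ appearing in the corollary statement with the uniform matroid polytope $P_{\mathcal{M}} = \{x \in [0,1]^n : \|x\|_1 \leq k\}$ on which Theorem~\ref{thm:IGap} acts. I would interpret the continuous problem of the corollary as the multilinear relaxation of the cardinality-constrained diversity problem, so that the box constraints $x \leq 1$ are implicit; this is consistent with the multilinear framework used throughout the paper. Beyond this mild bookkeeping, no serious difficulty arises, as the proof is a direct composition of two already proved results.
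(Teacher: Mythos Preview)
Your proposal is correct and follows essentially the same argument as the paper: the paper also observes that for a uniform matroid (cardinality constraint) one has $r=k$, $c=k+1$, so the rounding gap from Theorem~\ref{thm:IGap} is $O(1)$, and hence the continuous relaxation inherits the $\Omega(\sigma)$ inapproximability of the discrete problem from Theorem~\ref{thm:approxhardness}. Your treatment is more explicit (computing the constant $3+\tfrac{2k}{k-1}\le 7$ and spelling out the contradiction via a hypothetical $o(\sigma)$ algorithm), and your remark on the implicit box constraints is well taken, but the route is the same.
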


This result is conditioned on hardness of densest subgraph which
has been established under ETH \cite{manurangsi2017almost} - see Appendix~\ref{focsapp:hardness}.  
First, since the term $O(\frac{r}{c-2})$ in Theorem~\ref{thm:integrality-gaps}  does not depend on $\sigma$, it yields an $O(1)$ rounding gap for cardinality constraints (since $\frac{r}{c-2} \approx 1$).
In addition, given that the multilinear extension of the densest subgraph objective is of the form $x^T A x$, the approximability  of densest subgraph 
is within a constant factor of  its continuous relaxation.

The following result asserts that our rounding algorithm is also asymptotically tight. The proof is included in Appendix~\ref{focsapp:hardness}.

\begin{theorem}
\label{thm:intgaplower}
Let $k,t\in\mathbb{N}$ with $1 \leq t \leq k$. There exists a $\sigma$-semi-metric diversity function with multilinear extension $F$, and a matroid $\mathcal{M}=([2k],\I)$ with rank $r=k+t-1$ and minimum circuit size $c = 2t$, where the integrality gap of $F(x)$ over the matroid polytope $P_\mathcal{M}$ is $\Omega(\min \{ \frac{r}{c-2},\frac{\sigma}{r}\})$.
\end{theorem}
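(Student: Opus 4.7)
The approach is to exhibit, for each pair $k,t$ with $1\le t\le k$, an explicit matroid $\mathcal{M}$ on $[2k]$ with rank $k+t-1$ and minimum circuit size $2t$, together with a $\sigma$-semi-metric $A$ and a feasible fractional $x^*\in P_\mathcal{M}$ realizing the claimed integrality gap. The matroid of choice is the \emph{truncated partition matroid}: fix $T\subseteq[2k]$ with $|T|=2t$ and declare $S$ independent iff $|S|\le k+t-1$ and $|S\cap T|\le 2t-1$. One verifies routinely that $T$ itself is a circuit of size $2t$ (every proper subset has $\le 2t-1$ elements and so respects the cap), and that any additional circuit arising from the rank truncation has size $k+t\ge 2t$; hence $r=k+t-1$ and $c=2t$ as required. (An equivalent alternative for $k\ge 3t-1$ is the direct sum $U_{2t-1,2t}\oplus U_{k-t,2k-2t}$.)

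The choice of $A$ and $x^*$ will differ by regime. In the regime $\sigma\le r^2/(c-2)$ we partition $[2k]$ into $k$ disjoint pairs $P_1,\dots,P_k$ and set $A_{ij}=2\sigma$ on pair-internal edges and $A_{ij}=1$ elsewhere; the $\sigma$-semi-metric inequality $A_{ik}\le\sigma(A_{ij}+A_{jk})$ is then checked case-by-case, with the extremal case $A_{ik}=2\sigma$, $A_{ij}=A_{jk}=1$ giving equality. For $x^*$ I take the uniform vector $x^*=(r/(2k))\mathbbm{1}$; feasibility reduces to $x^*(T)\le 2t-1$, equivalent to $k\ge t$ (which holds), and $x^*([2k])=r$. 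A direct computation gives
\[
F(x^*)=\beta^2\bigl[2k\sigma+2k(k-1)\bigr]=\frac{r^2(\sigma+k-1)}{2k},
\]
and the matroid constraints bound $f(S^*)\le (2\sigma-1)p_{\max}+\binom{r}{2}$, where $p_{\max}$ is the largest number of full pairs an independent $S$ can contain. Since $|S\cap T|\le 2t-1$ allows at most $t-1$ full pairs inside $T$ and $|S|\le r$ allows at most $\lfloor(k-t)/2\rfloor$ more outside $T$, one obtains $p_{\max}\le (t-1)+\lfloor(k-t)/2\rfloor$. Working out the resulting ratio $F(x^*)/f(S^*)$ and keeping the $\sigma$-leading terms yields the claimed bound $\Omega(\sigma/r)$ in this regime.

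In the regime $\sigma\ge r^2/(c-2)$ the target is the $\sigma$-independent structural bound $\Omega(r/(c-2))$. This requires a more subtle choice of $A$: naively placing a clique of weight $2\sigma$ inside $T$ yields only a constant gap, because integer captures $\binom{2t-1}{2}$ heavy pairs while fractional captures $\binom{2t}{2}\bigl((2t-1)/(2t)\bigr)^2$ heavy pairs, with ratio $(2t-1)^2/(4t(t-1))\to 1$. The main obstacle is to design a heavier edge structure --- for instance a bipartite-weighted interaction between $T$ and $[2k]\setminus T$ combined with a non-uniform $x^*$ that saturates $x^*(T)=2t-1$ while spreading fractional mass broadly across $[2k]\setminus T$ --- such that any independent $S$ satisfying $|S\cap T|\le 2t-1$ is forced to forgo $\Omega(r/(c-2))$ worth of pair captures relative to $x^*$. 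The precise construction is dual to the decomposition used in the proof of Theorem~\ref{thm:quadcovround}: that argument shows every $x^*\in P_\mathcal{M}$ admits a quadratic coverage of size $\le 3+2r/(c-2)$; conversely, our lower bound exhibits a specific $x^*$ whose pair-coverage vector cannot be expressed as a positive combination of indicator vectors of independent sets with total weight $o(r/(c-2))$, which by the linearity of $F$ in pair-coverages forces the integrality gap to be $\Omega(r/(c-2))$. Combining the two regimes yields the claimed $\Omega(\min\{r/(c-2),\sigma/r\})$ bound.
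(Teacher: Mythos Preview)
Your proposal has a genuine gap, rooted in the choice of matroid.

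\textbf{The matroid is too loose.} In your truncated partition matroid (independent iff $|S|\le r$ and $|S\cap T|\le 2t-1$), an independent set can contain many full pairs: your own count gives $p_{\max}\ge (t-1)+\lfloor(k-t)/2\rfloor\approx r/2$. Plugging this into your bound $f(S^*)\le(2\sigma-1)p_{\max}+\binom{r}{2}$ yields $f(S^*)\approx\sigma r+r^2/2$, while your $F(x^*)\approx r(\sigma+r)/2$. The ratio $F(x^*)/f(S^*)$ is then $\Theta(1)$ uniformly in $\sigma$, not $\Omega(\sigma/r)$. Concretely, take $t=2$ (so $c=4$) and $\sigma=r^{3/2}$: the target $\min\{r/(c-2),\sigma/r\}=\Theta(\sqrt{r})$, but your construction gives a constant gap. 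The claimed ``$\Omega(\sigma/r)$ in this regime'' simply does not follow from your computation once $\sigma\gg r$.

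\textbf{The second regime is not a proof.} You acknowledge the obstacle (``naively placing a clique of weight $2\sigma$ inside $T$ yields only a constant gap'') and then gesture at a duality with Theorem~\ref{thm:quadcovround} without specifying $A$ or $x^*$. That passage is a plan, not an argument.

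\textbf{What the paper does instead.} The paper avoids both problems with a single construction and no regime split. The matroid is defined so that its circuits are \emph{all} unions of $t$ pairs $S_i=\{2i-1,2i\}$ (for $t=2$ this is the graphic matroid of the $\theta$-graph with $k$ length-$2$ paths). This forces every independent set to contain at most $t-1$ full pairs, so $p_{\max}=t-1=(c-2)/2$ rather than $\approx r/2$. With weights $1$ on pair edges and $1/\sigma$ elsewhere, the integer optimum is then $f(S^*)=(1-\tfrac1\sigma)(t-1)+\tfrac1\sigma\binom{r}{2}$, while the uniform fractional $x_0=\tfrac{r}{2k}\mathbbm{1}$ gives $F(x_0)\ge\tfrac{r}{4}(1+\tfrac{2(k-1)}{\sigma})$. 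A direct division yields
\[
\frac{F(x_0)}{f(S^*)}\;\ge\;\frac{1}{2}\cdot\frac{\sigma r}{\sigma(c-2)+r^2}\;\ge\;\frac{1}{4}\min\Bigl\{\frac{r}{c-2},\frac{\sigma}{r}\Bigr\},
\]
covering both regimes at once. The essential difference from your matroid is that the pair constraint must act on \emph{every} $t$-tuple of pairs, not a single designated set $T$; that symmetry is what pins $p_{\max}$ down to $t-1$ and makes the gap grow with $\sigma$.
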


\section{Conclusion}

There are a number of  directions which need exploring. The most immediate are (i)  extending the continuous greedy algorithm  to non-monotone  $\sigma$-smooth functions, (ii)  develop rounding methods (such as contention resolution) for one-sided smooth functions over more general polytopes. We believe there should be further interesting applications for the one-sided smoothness model introduced in this work.

\section{Acknowledgements}

This article benefitted greatly from previous anonymous reviews.
We are indebted to those reviewers as well as to Chandra Chekuri, Anupam Gupta and Nick Harvey who also provided  invaluable feedback.
The third author gratefully acknowledges the support from an  NSERC
Discovery Grant 109840 without which this work would not be possible.

\bibliographystyle{plain}
\bibliography{references_diversity}

\appendix
\section{Appendix: Semi-metric diversity and OSS}
\label{focsapp:semimetricOSS}

In this section, we establish the smoothness parameter associated with several of the discrete quadratic  functions discussed.
  In other words, we bound the approximate triangle inequality for their associated distance functions.

\begin{definition}
Let $d:[n]\times [n]\rightarrow \mathbb{R}_{\geq 0}$ be a distance function with the corresponding distance matrix $D\in \mathbb{R}^{n\times n}_{\geq 0}$ where $D_{a,b}=d(a,b)$. We say $d$ is a negative-type distance if for any $x\in \mathbb{R}^n$ with $||x||_1=0$ we have $x^T Dx\leq 0$.
\end{definition}

\begin{proposition}
\label{prop:negtypeSemimetric}
Any negative-type distance $d:[n]\times [n]\rightarrow \mathbb{R}_{\geq 0}$ is $2$-semi-metric.
\end{proposition}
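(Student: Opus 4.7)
The plan is to show first that $\sqrt{d}$ satisfies the usual triangle inequality, and then deduce the $2$-semi-metric bound by squaring and applying AM-GM. This route uses the classical fact that the square root of a negative-type distance is a metric; the goal here is to give a short, self-contained derivation.

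I would fix an arbitrary triple $i,j,k$ and apply the negative-type condition to the three-supported test vector $x$ with $x_i = a$, $x_k = c$, $x_j = -(a+c)$, and $x_\ell = 0$ elsewhere (so $x$ satisfies the sum-zero condition from the definition). Since $D$ is symmetric with zero diagonal, expanding $x^T D x \leq 0$ cancels the diagonal terms and collects the six off-diagonal terms into
\[
a c \, d(i,k) - a(a+c)\, d(i,j) - c(a+c)\, d(j,k) \leq 0,
\]
valid for all $a,c \geq 0$.

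The next step is to choose $a = \sqrt{d(j,k)}$ and $c = \sqrt{d(i,j)}$ (with the degenerate cases $d(i,j)=0$ or $d(j,k)=0$ handled directly by taking $a=c=1$, which yields $d(i,k) \leq 2(d(i,j)+d(j,k))$ immediately). Substituting and dividing by $\sqrt{d(i,j)\,d(j,k)}$, the right-hand side simplifies to $(\sqrt{d(i,j)} + \sqrt{d(j,k)})^2$, giving
\[
d(i,k) \leq d(i,j) + d(j,k) + 2\sqrt{d(i,j)\,d(j,k)}.
\]

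Finally, I would apply AM-GM in the form $2\sqrt{d(i,j)\,d(j,k)} \leq d(i,j) + d(j,k)$ to conclude $d(i,k) \leq 2(d(i,j) + d(j,k))$, which is the $2$-semi-metric property. The only real design choice is the test vector and the specific values of $a,c$; once those are chosen the rest is routine algebra, so I do not anticipate a genuine obstacle.
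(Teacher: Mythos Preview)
Your argument is correct, but it takes a noticeably longer route than the paper's. The paper applies the negative-type condition to the single test vector $x = \tfrac{1}{2}e_a + \tfrac{1}{2}e_b - e_c$, which yields $\tfrac{1}{2}d(a,b) - d(a,c) - d(b,c) \leq 0$ and hence $d(a,b) \leq 2\bigl(d(a,c) + d(b,c)\bigr)$ in one line. In your setup this corresponds to simply taking $a = c$ in your parametric inequality $ac\,d(i,k) \leq a(a+c)d(i,j) + c(a+c)d(j,k)$, which gives the $2$-semi-metric bound immediately without ever passing through the square-root triangle inequality or AM--GM.

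What your detour buys you is the stronger intermediate fact that $\sqrt{d}$ is a genuine metric (Schoenberg's classical observation), from which the $2$-semi-metric property then follows via the separate Proposition~\ref{prop:sqrtdist} in the paper. So your proof effectively re-derives and combines two results where the paper only needs one. This is not wrong, and it is a nice self-contained derivation of the square-root-metric fact, but for the stated proposition alone the equal-weight test vector is the cleaner choice.
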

\begin{proof}
Let $x=0.5e_a+0.5e_b-e_c$. We know
\begin{align*}
x^T Dx & = 0.5d(a,b)-d(a,c)-d(b,c)   \leq 0.
\end{align*}
Therefore $d(a,b)\leq 2d(a,c)+2d(b,c)$ and $d$ is $2$-semi metric.
\end{proof}

Jensen-Shannon Divergence is a function which measures dissimilarity between probability distributions.
It is well-known that if $d$ is a JS measure, then  $\sqrt{d}$ is a metric. Hence JS distances form a $2$-semi-metric by the following result.

\begin{proposition}
\label{prop:sqrtdist}
Let $d:[n]\times [n]\rightarrow \mathbb{R}_{\geq 0}$ be a distance function such that $\sqrt{d(\cdot,\cdot)}$ is a metric. Then $d(\cdot,\cdot)$ is a  $2$-semi-metric.
\end{proposition}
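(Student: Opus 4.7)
The plan is to start from the metric triangle inequality for $\sqrt{d}$, square both sides, and then control the cross term via the AM-GM inequality.

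More concretely, fix any $a,b,c\in [n]$. Since $\sqrt{d(\cdot,\cdot)}$ is a metric, the triangle inequality gives $\sqrt{d(a,b)}\leq \sqrt{d(a,c)}+\sqrt{d(b,c)}$. Squaring both sides (which is valid because both sides are non-negative) yields
\[
d(a,b)\leq d(a,c)+d(b,c)+2\sqrt{d(a,c)\,d(b,c)}.
\]
The first step is therefore to convert the metric property on $\sqrt{d}$ into this quadratic-looking inequality on $d$ itself.

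The second step is to bound the cross term. By the AM-GM inequality applied to the non-negative quantities $d(a,c)$ and $d(b,c)$,
\[
2\sqrt{d(a,c)\,d(b,c)}\leq d(a,c)+d(b,c).
\]
Substituting this into the previous display gives $d(a,b)\leq 2\bigl(d(a,c)+d(b,c)\bigr)$, which is exactly the $2$-semi-metric inequality, so this finishes the argument.

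I do not expect any obstacle: the proof is a two-line manipulation (square the metric inequality, then apply AM-GM to the cross term). The only thing worth noting is that non-negativity of $d$ is essential both for squaring the triangle inequality without flipping direction and for applying AM-GM, and this is given by hypothesis.
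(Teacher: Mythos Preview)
Your proof is correct and essentially identical to the paper's: both square the triangle inequality for $\sqrt{d}$ and then bound the cross term $2\sqrt{d(a,c)\,d(b,c)}$ by $d(a,c)+d(b,c)$. The paper phrases this last step as $(\sqrt{d(a,c)}-\sqrt{d(b,c)})^2\geq 0$, which is just AM-GM unpacked.
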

\begin{proof}
By definition, we have
\[
\sqrt{d(i,j)}\leq \sqrt{d(i,k)}+\sqrt{d(j,k)}.
\]
Therefore,
\[
d(i,j) \leq d(i,k) + d(j,k) + 2\sqrt{d(i,k) d(j,k)}.
\]
We also know that
\[
d(i,k) + d(j,k) - 2\sqrt{d(i,k) d(j,k)} = (\sqrt{d(i,k)}-\sqrt{d(j,k)})^2 \geq 0.
\]
Hence,
\[
d(i,j) \leq 2(d(i,k) + d(j,k)).
\]
\end{proof}

\old{
\begin{lemma}
\label{lemma:probabilisticVersion}
Let $f$ be a non-negative, monotone set function and $F$ be its multi-linear function. Let $x\in [0,1]^n$ and $\sigma\geq 0$. If for any $i,j\in [n]$ we have
\[
\E_{R \sim x}[|R|]\cdot\E_{R \sim x}[A_{ij}(R)]\leq \sigma\cdot(\E_{R \sim x}[B_i(R)]+\E_{R \sim x}[B_j(R)]),
\]
or equivalently,
\[
||x||_1\nabla_{ij}^2 F(x) \leq \sigma (\nabla_i F(x)+ \nabla_j F(x)),
\]
then $F$ is one-sided $\sigma$-smooth at $x$.
\end{lemma}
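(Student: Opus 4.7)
The plan is to translate the pointwise hypothesis on the mixed second partials $\nabla_{ij}^2 F(x)$ directly into the quadratic-form inequality defining one-sided $\sigma$-smoothness. The reduction is essentially a symmetrization of a double sum and makes essential use of two facts: (a) multilinearity of $F$ forces every diagonal entry $\nabla_{ii}^2 F(x)$ to vanish, so we never need a diagonal bound; and (b) monotonicity of $f$, together with $u \geq 0$, makes every term non-negative, which lets us use the loose bound $\sum_{j \neq i} u_j \leq ||u||_1$ without worrying about signs.

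Concretely, expanding and using (a) gives
$$\frac{1}{2} u^T \nabla^2 F(x)\, u = \sum_{i < j} u_i u_j\, \nabla_{ij}^2 F(x).$$
Since $x \neq 0$, the hypothesis rearranges to $\nabla_{ij}^2 F(x) \leq \frac{\sigma}{||x||_1}(\nabla_i F(x) + \nabla_j F(x))$ for every pair $i < j$, and plugging in yields
$$\frac{1}{2} u^T \nabla^2 F(x)\, u \leq \frac{\sigma}{||x||_1} \sum_{i < j} u_i u_j\, (\nabla_i F(x) + \nabla_j F(x)).$$

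Finally, by symmetry the last sum equals $\sum_{i \neq j} u_i u_j\, \nabla_i F(x) = \sum_i u_i \nabla_i F(x) \cdot \sum_{j \neq i} u_j$, and invoking (b) (so all factors are non-negative) this is at most $||u||_1 \cdot u^T \nabla F(x)$. Combining gives the OSS inequality $\frac{1}{2} u^T \nabla^2 F(x)\, u \leq \sigma \frac{||u||_1}{||x||_1}\, u^T \nabla F(x)$ at $x$. There is no real obstacle here: the argument is bookkeeping, and the only subtle point is (a)---without the diagonal of $\nabla^2 F(x)$ vanishing identically, the pointwise hypothesis would give no information about the terms $u_i^2 \nabla_{ii}^2 F(x)$ in the expansion, and the reduction would fail.
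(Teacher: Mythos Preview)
Your proof is correct and follows essentially the same route as the paper: expand $u^T \nabla^2 F(x)\, u$, apply the pointwise hypothesis term by term, and collapse the double sum. One minor remark on your commentary: point (a) is not actually needed, since the hypothesis is assumed for \emph{all} $i,j$ (including $i=j$), and the paper simply sums over every ordered pair to get the exact bound $u^T \nabla^2 F(x)\, u \leq 2\sigma \tfrac{\|u\|_1}{\|x\|_1}\, u^T \nabla F(x)$ without ever invoking multilinearity---so the same argument works for any $F \in {\bf C}^2$.
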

\begin{proof}
We have
\begin{align*}
u^T \nabla^2 F(x) u & = \sum_{i=1}^n \sum_{j=1}^n u_i u_j \nabla_{ij}^2 F(x) \leq \frac{\sigma}{||x||_1} \sum_{i=1}^n \sum_{j=1}^n u_i u_j (\nabla_i F(x)+\nabla_j F(x))
\\ & = \frac{\sigma}{||x||_1}(\sum_{i=1}^n \sum_{j=1}^n u_i u_j \nabla_i F(x)+\sum_{i=1}^n \sum_{j=1}^n u_i u_j \nabla_j F(x))
\\ & = \frac{\sigma}{||x||_1}(\sum_{i=1}^n u_i \nabla_i F(x) (\sum_{j=1}^n u_j) +\sum_{i=1}^n u_i (\sum_{j=1}^n u_j \nabla_j F(x)))
\\ & = \frac{\sigma}{||x||_1}(||u||_1 \sum_{i=1}^n u_i \nabla_i F(x) + ||u||_1 \sum_{j=1}^n u_j \nabla_j F(x))
\\ & = 2\sigma \left(\frac{||u||_1}{||x||_1} \right) ( u^T \nabla F(x)).
\end{align*}
\end{proof}

\begin{proposition}
\label{prop:ossdiscretequad}
Let $A\in \mathbb{R}^{n\times n}$ be a symmetric, $0$-diagonal matrix. Let $b\in \mathbb{R}^n$ and $b\geq 0$. Then $F(x)=\frac{1}{2}x^T A x + b^T x$ is one-sided $\sigma$-smooth if $A$ is $\sigma$-semi-metric.
\end{proposition}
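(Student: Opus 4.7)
The plan is to directly verify the OSS inequality by computing the derivatives, discarding the (helpful) $b$-term, and then summing the pointwise $\sigma$-semi-metric inequality against an appropriate non-negative weight.

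First, I would compute $\nabla F(x) = Ax + b$ and $\nabla^2 F(x) = A$, so the OSS condition reduces to showing
\[
\tfrac{1}{2}\, u^T A u \;\leq\; \sigma\,\frac{\|u\|_1}{\|x\|_1}\, u^T (Ax + b)
\]
for all $u, x \geq 0$ with $x \neq 0$. Since $b \geq 0$ and $u \geq 0$, we have $u^T b \geq 0$, so it suffices to prove the stronger inequality
\[
\tfrac{1}{2}\,\|x\|_1\, u^T A u \;\leq\; \sigma\, \|u\|_1\, u^T A x.
\]

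Next, the key step is to exploit the $\sigma$-semi-metric inequality $A_{ij} \leq \sigma(A_{ik} + A_{jk})$ pointwise for every triple $(i,j,k)$ (when $k=i$ or $k=j$, the zero-diagonal assumption makes this $A_{ij} \leq \sigma A_{ij}$, which is fine since $\sigma \geq 1$). I would multiply this inequality by the non-negative weight $\tfrac{1}{2} u_i u_j x_k$ and sum over all $i,j,k$. The left-hand side factors as $\tfrac{1}{2}(\sum_k x_k)(\sum_{i,j} u_i u_j A_{ij}) = \tfrac{1}{2}\|x\|_1\, u^T A u$. For the right-hand side, the two symmetric terms each factor as $(\sum_j u_j)\sum_{i,k} u_i x_k A_{ik} = \|u\|_1\, u^T A x$, giving a total of $\sigma \|u\|_1\, u^T A x$. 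This yields exactly the desired bound.

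There is no significant obstacle: the proof is essentially a double-counting argument that aggregates the semi-metric inequality against a product weight. The only minor subtlety worth flagging is the handling of $b$ (which only helps because $b\ge 0$ and $u\ge 0$) and the consistency of the semi-metric inequality with the zero-diagonal assumption when the index $k$ coincides with $i$ or $j$, which forces the convention $\sigma \geq 1$ that is standard in this setting.
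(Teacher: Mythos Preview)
Your proof is correct and essentially the same as the paper's. The paper factors the argument through an intermediate lemma (first showing that the entrywise condition $\|x\|_1\, A_{ij} \leq \sigma\big((Ax+b)_i + (Ax+b)_j\big)$ implies OSS, then verifying that condition from the semi-metric inequality summed against $x_k$), whereas you perform the same triple sum $\sum_{i,j,k} \tfrac{1}{2} u_i u_j x_k$ in one shot; unrolling the paper's two steps yields exactly your computation.
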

\begin{proof}
Note that $\nabla^2 F(x) = A$ and $\nabla F(x) = Ax+b$. Therefore,
\begin{align*}
\sigma(\nabla_i F(x)+\nabla_j F(x)) & \geq \sigma(\sum_{k=1}^n A(i,k)x_k + \sum_{k=1}^n A(j,k)x_k) = \sum_{k=1}^n \sigma(A(i,k)+A(j,k))x_k \\ & \geq \sum_{k=1}^n A(i,j)x_k = ||x||_1 A(i,j) = ||x||_1 \nabla^2_{ij}F(x),
\end{align*}
where the first inequality follows from $b\geq0$ and the last inequality holds because $A$ is $\sigma$-semi-metric. Now by Lemma~\ref{lemma:probabilisticVersion}, we conclude that $F$ is one-sided $\sigma$-smooth.
\end{proof}
}

\begin{lemma}
\label{lemma:probabilisticVersion}
Let  $F \in {\bf C}^2$, $x\in [0,1]^n$ and $\sigma\geq 0$. If for any $i,j\in [n]$ we have
\[
||x||_1\nabla_{ij}^2 F(x) \leq \sigma (\nabla_i F(x)+ \nabla_j F(x)),
\]
then $F$ is one-sided $\sigma$-smooth at $x$.
\end{lemma}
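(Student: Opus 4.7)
The plan is to derive the OSS inequality by direct expansion of $u^T \nabla^2 F(x) u$ in coordinates, applying the coordinatewise hypothesis termwise, and then recognizing that the resulting double sum factorizes into $\|u\|_1 \cdot u^T \nabla F(x)$.

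First I would write $u^T \nabla^2 F(x) u = \sum_{i,j} u_i u_j \nabla_{ij}^2 F(x)$. Since $u \geq 0$, each coefficient $u_i u_j$ is non-negative, so I may apply the hypothesis $\nabla_{ij}^2 F(x) \leq \frac{\sigma}{\|x\|_1}(\nabla_i F(x) + \nabla_j F(x))$ termwise without flipping the inequality. This yields
\[
u^T \nabla^2 F(x) u \;\leq\; \frac{\sigma}{\|x\|_1} \sum_{i,j} u_i u_j \bigl(\nabla_i F(x) + \nabla_j F(x)\bigr).
\]
Splitting the right-hand side into two sums, the first equals $\bigl(\sum_j u_j\bigr)\bigl(\sum_i u_i \nabla_i F(x)\bigr) = \|u\|_1 \cdot u^T \nabla F(x)$, and by symmetry the second equals the same quantity. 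Hence
\[
u^T \nabla^2 F(x) u \;\leq\; 2\sigma \cdot \frac{\|u\|_1}{\|x\|_1} \cdot u^T \nabla F(x),
\]
which after dividing by $2$ is exactly the OSS inequality at $x$.

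There is no real obstacle here; the only point worth being careful about is that the hypothesis is only needed as a coordinatewise inequality and that non-negativity of $u$ is what allows the pointwise bound to lift to the quadratic form. Note also that we implicitly use $u^T \nabla F(x) \geq 0$ (which follows from monotonicity of $F$ when one wants to read the bound as a positive quantity), but the derivation of the inequality itself does not require it. So the proof is essentially a three-line computation.
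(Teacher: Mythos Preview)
Your proposal is correct and follows essentially the same approach as the paper: expand $u^T \nabla^2 F(x) u$ in coordinates, apply the hypothesis termwise using $u_i u_j \geq 0$, and factor the resulting double sum as $2\|u\|_1 \cdot u^T \nabla F(x)$. The paper's proof is line-for-line the same computation.
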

\begin{proof}
We have
\begin{align*}
u^T \nabla^2 F(x) u & = \sum_{i=1}^n \sum_{j=1}^n u_i u_j \nabla_{ij}^2 F(x) \leq \frac{\sigma}{||x||_1} \sum_{i=1}^n \sum_{j=1}^n u_i u_j (\nabla_i F(x)+\nabla_j F(x))
\\ & = \frac{\sigma}{||x||_1}(\sum_{i=1}^n \sum_{j=1}^n u_i u_j \nabla_i F(x)+\sum_{i=1}^n \sum_{j=1}^n u_i u_j \nabla_j F(x))
\\ & = \frac{\sigma}{||x||_1}(\sum_{i=1}^n u_i \nabla_i F(x) (\sum_{j=1}^n u_j) +\sum_{i=1}^n u_i (\sum_{j=1}^n u_j \nabla_j F(x)))
\\ & = \frac{\sigma}{||x||_1}(||u||_1 \sum_{i=1}^n u_i \nabla_i F(x) + ||u||_1 \sum_{j=1}^n u_j \nabla_j F(x))
\\ & = 2\sigma \left(\frac{||u||_1}{||x||_1} \right) ( u^T \nabla F(x)).
\end{align*}
\end{proof}

We have defined a symmetric matrix $A$ to be a $\sigma$-semi-metric (see Section~\ref{sec:intro}) if
 $A_{ik} \leq \sigma (A_{ij} + A_{jk})$ for all $i,j,k$.
 Our main applications are to multilinear extensions where
 $A$ is non-negative and  has  zero diagonal. However, the following result applies in the more general setting.

 \begin{proposition}
\label{prop:ossdiscretequad}
Let $A\in \mathbb{R}^{n\times n}$ be a non-negative symmetric matrix.
  Let $b\in \mathbb{R}^n$ and $b\geq 0$. Then $F(x)=\frac{1}{2}x^T A x + b^T x$ is one-sided $\sigma$-smooth if $A$ is a $\sigma$-semi-metric.
\end{proposition}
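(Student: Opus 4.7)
The plan is to reduce to the pointwise Hessian condition supplied by Lemma~\ref{lemma:probabilisticVersion}, namely to verify that for every $x \in [0,1]^n \setminus \{0\}$ and every pair of indices $i,j$,
\[
\|x\|_1 \nabla^2_{ij} F(x) \;\leq\; \sigma\bigl(\nabla_i F(x) + \nabla_j F(x)\bigr).
\]
Once this is established, Lemma~\ref{lemma:probabilisticVersion} immediately gives the one-sided $\sigma$-smoothness of $F$ at every $x$, and hence on all of $[0,1]^n \setminus \{0\}$ (which is the relevant domain in~\eqref{eqn:OSS}).

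The computation is direct. For the quadratic $F(x) = \tfrac{1}{2} x^T A x + b^T x$ I would compute $\nabla F(x) = Ax + b$ and $\nabla^2 F(x) = A$, so that $\nabla_i F(x) = \sum_k A_{ik} x_k + b_i$ and $\nabla^2_{ij} F(x) = A_{ij}$. Since $b \geq 0$, it is enough to prove
\[
\|x\|_1 A_{ij} \;\leq\; \sigma \sum_{k=1}^n \bigl(A_{ik} + A_{jk}\bigr) x_k.
\]
The key observation is that the $\sigma$-semi-metric hypothesis gives $A_{ij} \leq \sigma(A_{ik} + A_{jk})$ for every single index $k$. Multiplying by $x_k \geq 0$ and summing over $k$ yields
\[
A_{ij} \sum_{k=1}^n x_k \;\leq\; \sigma \sum_{k=1}^n (A_{ik} + A_{jk})\, x_k,
\]
which is exactly the required inequality since $\sum_k x_k = \|x\|_1$.

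There is no substantive obstacle here; the proof is essentially a bookkeeping exercise. The only subtlety worth flagging is where each hypothesis is used: non-negativity of $b$ is what lets us discard $b_i + b_j$ from the right-hand side, non-negativity of $x$ is what lets us multiply the semi-metric inequality by $x_k$ without flipping it, and non-negativity of $A$ (though not strictly needed for this particular algebraic step) is consistent with the setting of Lemma~\ref{lemma:probabilisticVersion}. Combining the pointwise bound with Lemma~\ref{lemma:probabilisticVersion} completes the proof.
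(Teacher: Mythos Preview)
Your proof is correct and follows essentially the same approach as the paper: compute $\nabla F(x)=Ax+b$ and $\nabla^2 F(x)=A$, drop the non-negative $b$-terms, apply the $\sigma$-semi-metric inequality $A_{ij}\leq \sigma(A_{ik}+A_{jk})$ termwise against $x_k\geq 0$, sum over $k$, and invoke Lemma~\ref{lemma:probabilisticVersion}. There is no meaningful difference between the two arguments.
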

\begin{proof}
Note that $\nabla^2 F(x) = A$ and $\nabla F(x) = Ax+b$. Therefore,
\begin{align*}
\sigma(\nabla_i F(x)+\nabla_j F(x)) & \geq \sigma(\sum_{k=1}^n A(i,k)x_k + \sum_{k=1}^n A(j,k)x_k) = \sum_{k=1}^n \sigma(A(i,k)+A(j,k))x_k \\ & \geq \sum_{k=1}^n A(i,j)x_k = ||x||_1 A(i,j) = ||x||_1 \nabla^2_{ij}F(x),
\end{align*}
where the first inequality follows from $b\geq0$ and the last inequality holds because $A$ is $\sigma$-semi-metric. Now by Lemma~\ref{lemma:probabilisticVersion}, we conclude that $F$ is one-sided $\sigma$-smooth.
\end{proof}

\section{Appendix: Jump-Start Continuous Greedy}
\label{focsapp:contgreedy}

\begin{proposition}
\label{prop:bestc}
For any  $\sigma > 0$ the best approximation guarantee in Theorem~\ref{thm:greedyBound} is attained at
\[
\alpha = \frac{-(2\sigma+1) + \sqrt{4\sigma^2+12\sigma+1}}{2}.
\]
\end{proposition}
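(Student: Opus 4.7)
The plan is to maximize the approximation factor $1-\exp(-(1-\alpha)(\alpha/(\alpha+1))^{2\sigma})$ from Theorem~\ref{thm:greedyBound}, which (since $1-e^{-t}$ is monotone increasing in $t \geq 0$) reduces to maximizing the exponent
\[
g(\alpha) := (1-\alpha)\left(\frac{\alpha}{\alpha+1}\right)^{2\sigma}
\]
over $\alpha \in [0,1)$. Note that $g(0)=0$ and $g(\alpha)\to 0$ as $\alpha \to 1^-$, and $g > 0$ on $(0,1)$, so the maximum is attained at an interior critical point of $g$ which we find by first-order conditions.

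Since $g(\alpha) > 0$ on $(0,1)$, it is convenient to work with $\ln g(\alpha) = \ln(1-\alpha) + 2\sigma \ln \alpha - 2\sigma \ln(\alpha+1)$. Differentiating,
\[
\frac{g'(\alpha)}{g(\alpha)} = -\frac{1}{1-\alpha} + \frac{2\sigma}{\alpha} - \frac{2\sigma}{\alpha+1} = -\frac{1}{1-\alpha} + \frac{2\sigma}{\alpha(\alpha+1)}.
\]
Setting this to zero and clearing denominators gives $2\sigma(1-\alpha) = \alpha(\alpha+1)$, which rearranges to the quadratic
\[
\alpha^2 + (2\sigma+1)\alpha - 2\sigma = 0.
\]

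Applying the quadratic formula yields $\alpha = \tfrac{1}{2}(-(2\sigma+1) \pm \sqrt{(2\sigma+1)^2 + 8\sigma})$, and since $(2\sigma+1)^2 + 8\sigma = 4\sigma^2 + 12\sigma + 1$, the two roots are $\tfrac{1}{2}(-(2\sigma+1) \pm \sqrt{4\sigma^2+12\sigma+1})$. The negative root is clearly outside $[0,1)$, while for the positive root one checks that it lies in $(0,1)$ for all $\sigma > 0$: positivity follows because $\sqrt{4\sigma^2+12\sigma+1} > \sqrt{(2\sigma+1)^2} = 2\sigma+1$, and strictly less than $1$ because $\sqrt{4\sigma^2+12\sigma+1} < 2\sigma+3$ (square both sides: $4\sigma^2+12\sigma+1 < 4\sigma^2+12\sigma+9$). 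Together with the boundary values $g(0)=g(1^-)=0$, this unique interior critical point must be the global maximum on $[0,1)$, giving the claimed expression for $\alpha$.

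The only potentially delicate part is verifying that the positive root is in $[0,1)$ and that it is indeed a maximum rather than a minimum; both follow from the simple boundary analysis above, so there is no real obstacle — the proof is essentially a direct calculus computation.
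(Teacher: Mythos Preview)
Your proof is correct and follows essentially the same approach as the paper: reduce to maximizing $g(\alpha)=(1-\alpha)\bigl(\tfrac{\alpha}{\alpha+1}\bigr)^{2\sigma}$, set the derivative to zero to obtain the quadratic $\alpha^2+(2\sigma+1)\alpha-2\sigma=0$, and select the root in $[0,1)$. The only (minor) differences are that you use logarithmic differentiation rather than differentiating $g$ directly, and you add the explicit check that the positive root lies in $(0,1)$ together with the boundary argument ensuring it is a maximum.
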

\begin{proof}
We need to find the maximizer of $g(\alpha)=(1-\alpha)(\frac{\alpha}{\alpha+1})^{2\sigma}$ where $\alpha\in[0,1)$. Hence, we solve $g'(\alpha)=0$.
\begin{align*}
& g'(\alpha) = \frac{2\sigma \alpha^{2\sigma-1} (\alpha+1)^{2\sigma} - (2\sigma+1)\alpha^{2\sigma} (\alpha+1)^{2\sigma} - 2\sigma (\alpha+1)^{2\sigma-1} \alpha^{2\sigma} + 2\sigma(\alpha+1)^{2\sigma-1} \alpha^{2\sigma+1}}{(\alpha+1)^{4\sigma}} = 0 \\ & \Rightarrow
2\sigma \alpha^{2\sigma-1} (\alpha+1)^{2\sigma-1} - 2\sigma \alpha^{2\sigma} (\alpha+1)^{2\sigma-1} = \alpha^{2\sigma} (\alpha+1)^{2\sigma} \\ & \Rightarrow
2\sigma \alpha^{2\sigma-1}(\alpha+1)^{2\sigma-1} (1-\alpha) = \alpha^{2\sigma} (\alpha+1)^{2\sigma} \\ & \Rightarrow
2\sigma (1-\alpha) = \alpha(\alpha+1) \Rightarrow \alpha^2 + (1+2\sigma)\alpha - 2\sigma = 0 \Rightarrow \alpha = \frac{-(2\sigma+1) \pm \sqrt{4\sigma^2+12\sigma+1}}{2}
\end{align*}
The only solution in $[0,1)$ is $\frac{-(2\sigma+1) + \sqrt{4\sigma^2+12\sigma+1}}{2}$ and this yields the proposition.
\end{proof}

\begin{reptheorem}{thm:thirdderiv}
	Let $F:[0,1]^n \to \mathbb{R}_{\geq 0}$ be a monotone $\sigma$-OSS function with non-positive third order partial derivatives.  Let  $\alpha \in [0,1)$ and
	 $P$ be a polytime separable, downward-closed, polytope. If we run the jump-start continuous greedy process (Algorithm~\ref{alg:ctsgreedy}) then $x(1) \in P$ and
     $
     F(x(1)) \geq  [1-\exp{(-\frac{\alpha(1-\alpha)}{\alpha+\sigma})}]\cdot OPT
     $
where $OPT := \max \{F(x):x\in P\}$. In particular, taking $\alpha=1/2$ we get $
     F(x(1)) \geq  [1-\exp{(-\frac{1}{4\sigma+2})}]\cdot OPT$ and so
     $F(x(1)) \geq  \frac{1}{4\sigma+3} \cdot OPT$ (since $e^{x} \geq x + 1$ for $x<1$).
\end{reptheorem}
\begin{proof}
	For each $t\in [0,1]$ we have
	\vspace{-0.2cm}
	\begin{equation}
		x(t) = x(0) + (1-\alpha) \int_{0}^{t} v_{max}(x(\tau)) \, d\tau =
		\alpha v^* + (1-\alpha) \int_{0}^{t} v_{max}(x(\tau)) \, d\tau.
	\end{equation}
	Since $P$ is convex and $v^* \in P$, we have that $x(t) \in P$ as long as $y(t):= \int_{0}^{t} v_{max}(x(\tau)) \, d\tau \in P$.
    Given that each $v_{max}(x(\tau)) \in P$ and also $\vec{0} \in P$, it follows that $y(t)$ is a convex combination of points in $P$, and hence belongs to $P$.

   Let $x^* \in P$ be such that $F(x^*)=OPT$. Also let $x \in \{ x(t):0\leq t\leq 1\}$ and $u=(x^*-x) \vee 0$, i.e.,  $x^* \vee x = x + u$.
   By Taylor's Theorem and non-positivity of the third order derivatives of $F$ we have
    \begin{align*}
    F(x^* \vee x) & \leq F(x) + u^T \nabla F(x) + \frac{1}{2} u^T \nabla^2 F(x) u
    \leq F(x) + \Big(1+\frac{\sigma ||u||}{||x||} \Big) u^T \nabla F(x)
    \\ & \leq F(x) + \Big(1+\frac{\sigma}{\alpha}\Big) u^T \nabla F(x),
    \end{align*}
    \noindent
    where the second inequality follows from smoothness, and the third from the fact that $||x(t)|| \geq ||x(0)|| = \alpha ||v^*|| \geq \alpha ||u||$. Thus
    \begin{equation}
    \label{eqn:key1}
    u^T \nabla F(x) \geq
    \Big(\frac{\alpha}{\alpha+\sigma} \Big) \Big(F(x \vee x^*) - F(x) \Big) \geq \Big(\frac{\alpha}{\alpha+\sigma}\Big)  \Big(OPT - F(x) \Big),
 \end{equation}
    where  the last inequality follows from  monotonicity. We also have that
    $$
    v_{max} (x)\cdot \nabla F(x)\geq x^* \cdot \nabla F(x) \geq u \cdot \nabla F(x),
    $$
    where the first inequality follows by definition of $v_{max}$ and the fact that $x^* \in P$, and the second inequality from the fact that $x^* \geq u$ and $\nabla F \geq 0$.
    Combining this with (\ref{eqn:key1})  yields:
	\begin{equation}
	\label{eq1b}
		v_{max}(x)\cdot \nabla F(x) \geq  \Big(\frac{\alpha}{\alpha+\sigma}\Big)  \Big(OPT - F(x) \Big),
	\end{equation}

    \noindent
    for any $x \in \{ x(t):0\leq t\leq 1\}$. Let us denote $\rho = \alpha / (\alpha + \sigma)$.
	We can use the Chain Rule to get
	\begin{equation}
	\label{eq21}
		\frac{d}{dt} F(x(t)) = \nabla F(x(t)) \cdot x'(t) = \nabla F(x(t)) \cdot (1-\alpha) v_{max}(x(t)) \geq\rho (1-\alpha) \Big[OPT - F(x(t))\Big],
	\end{equation}
	where the last inequality follows from (\ref{eq1b}).
	
	We solve the above differential inequality by multiplying by $e^{\rho (1-\alpha) t}$.
	\begin{eqnarray*}
		\frac{d}{dt} [e^{\rho (1-\alpha)t}\cdot F(x(t)) ] & = & \rho (1-\alpha)e^{\rho (1-\alpha)t}\cdot F(x(t)) + e^{\rho (1-\alpha)t}\cdot \frac{d}{dt} F(x(t)) \\
		 & \geq &  \rho (1-\alpha)e^{\rho (1-\alpha)t}\cdot F(x(t)) + \rho \cdot e^{\rho (1-\alpha)t} (1-\alpha) [OPT - F(x(t))] \\
		 & = & \rho (1-\alpha)e^{\rho (1-\alpha)t}\cdot OPT.
	\end{eqnarray*}
	where the inequality follows from Equation (\ref{eq21}).
	
	Integrating the LHS and RHS of the above equation between $0$ and $t$ we get
	\begin{eqnarray*}	
		e^{\rho (1-\alpha)t}\cdot F(x(t)) - e^0 \cdot F(x(0)) & \geq & \rho (1-\alpha)OPT \int_{0}^{t} e^{\rho (1-\alpha)\tau} d\tau \\ &=&
		\rho (1-\alpha)OPT \cdot [\frac{e^{\rho (1-\alpha)t}}{\rho (1-\alpha)} - \frac{1}{\rho (1-\alpha)}]
		 = OPT \cdot [e^{\rho (1-\alpha)t}-1].
	\end{eqnarray*}
	Hence
	\begin{eqnarray*}	
	F(x(t))  & \geq & [1-\frac{1}{e^{\rho (1-\alpha)t}}] OPT + \frac{F(x(0))}{e^{\rho (1-\alpha)t}} \geq [1-\frac{1}{e^{\rho (1-\alpha)t}}] OPT,
	\end{eqnarray*}
	where the last inequality follows from the fact that $F$ is non-negative.
    Substituting $t=1$ and $\rho = \alpha / (\alpha + \sigma)$ gives the desired result.
\end{proof}

\section{Appendix: Discretization of the Continuous Greedy}
\label{focsapp:discretization}

We now discuss discretization of the continuous greedy process for one-sided smooth functions.

 If our goal is to find a polytime approximation algorithm, we need to establish two features. The first is an
 approximation bound; for this we  use our analysis of the continuous greedy process, Theorems~\ref{thm:greedyBound} and \ref{thm:thirdderiv}.  The second is some sort of smoothness assumption on the gradients of $F$. We consider several conditions for the latter depending on the context; the most straightforward is for multilinear \OSS functions.

To discretize the jump-start continuous greedy, we start at $x^0 = \alpha v^*$ (for $\alpha \in [0,1)$) and use the following update rule.
\[
x^{t+\delta} = x^t + \delta \cdot (1-\alpha) \cdot v_{max}(x^t),
\]
where $\delta$ is the step size, and $v_{max}(x) := \argmax_{v \in P} v^T \nabla F(x)$.

We always assume $\delta >0$ is chosen with $1/\delta$ integer, which is then clearly the number of iterations. Our main concern is to bound this by a polynomial in the input size. This is because we  primarily adopt the view that we have exact access to the function $F$ and its gradients. This  is the case for the ME of a diversity function (and in fact any quadratic function) which is our main application. For more general functions
we may not have access to the exact gradient and we should find an estimate by sampling from the function. In that case, we  need a probabilistic argument similar to the original argument of  Vondrak \cite{vondrak2008optimal}.

 There are two ingredients we need to analyze discretizations. One is an approximation bound for the continuous process itself. The second is a bound which guarantees that gradients do not decrease too suddenly.  We describe the discretization as a self-contained argument which  takes these two bounds ((\ref{eqn:mubound}) and (\ref{eqn:etabound})) as inputs.

	Let $F:[0,1]^n\rightarrow \mathbb{R}$ be a $\sigma$-OSS function and $P$ be a  downward-closed polytope. Denote $OPT := \max_{x \in P} F(x)$. We consider  generic lower bounds on the continuous greedy rate of improvement as a function of $\alpha$ and $\sigma$.
	For some $\mu = \mu(\alpha,\sigma) \in (0,1]$, we say an application satisfies a {\em $\mu$ bound }
	if for any $x \in P$ such that $x \geq \alpha v^*$ we have
	\begin{equation}
	\label{eqn:mubound}
 v_{max}(x) \cdot \nabla F(x) \geq 	
	\mu (OPT - F(x))
	\end{equation}

The following lemma encapsulates the two main bounds $\mu$ we use; these
are outcomes of the  proofs of Theorem~\ref{thm:greedyBound} and Theorem~\ref{thm:thirdderiv}.
\begin{lemma}
	\label{lem:3.1}
	Let $F:[0,1]^n\rightarrow \mathbb{R}$ be a $\sigma$-OSS function and $P$ be a  downward-closed polytope. Denote $OPT := \max_{x \in P} F(x)$. Then for any $x \in P$ such that $x \geq \alpha v^*$ we have
	\[
 v_{max}(x) \cdot \nabla F(x) \geq  \Big(\frac{\alpha}{\alpha+1} \Big)^{2\sigma} (OPT-F(x))
	\]

\noindent
If in addition $F$ has non-positive third derivatives, then we have
	\[
 v_{max}(x) \cdot \nabla F(x) \geq 	 \Big(\frac{\alpha}{\alpha+\sigma} \Big) (OPT-F(x))
	\]
\end{lemma}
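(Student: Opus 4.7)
The lemma is essentially a factoring-out of two inequality chains already appearing inside the proofs of Theorem~\ref{thm:greedyBound} and Theorem~\ref{thm:thirdderiv}. My plan is to reproduce those chains in isolation, making the dependence only on the hypothesis $x \geq \alpha v^*$ (rather than on the trajectory of the continuous process).

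For both bounds, fix an optimizer $x^* \in \argmax_{y \in P} F(y)$, set $u := (x^* - x) \vee 0$, and note $x \vee x^* = x + u \geq x^*$, so monotonicity of $F$ yields $F(x + u) \geq OPT$. A crucial estimate needed in both parts is $\|u\|_1 / \|x\|_1 \leq 1/\alpha$: here $u \leq x^*$ gives $\|u\|_1 \leq \|x^*\|_1 \leq \|v^*\|_1$ by maximality of $v^*$, while $x \geq \alpha v^*$ gives $\|x\|_1 \geq \alpha \|v^*\|_1$. Also in both parts I will invoke $v_{\max}(x) \cdot \nabla F(x) \geq x^* \cdot \nabla F(x) \geq u \cdot \nabla F(x)$, using monotonicity (so $\nabla F \geq 0$) together with $x^* \geq u$ and the definition of $v_{\max}$.

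For the first bound, I would apply the first-order mean-value form of Taylor's theorem to get $F(x+u) = F(x) + u^T \nabla F(x + \epsilon u)$ for some $\epsilon \in [0,1]$. Lemma~\ref{lemma:epsilonchangegradient} (one-sided smoothness along the segment) then upper-bounds this last term by $(\|x + \epsilon u\|_1 / \|x\|_1)^{2\sigma} \cdot u^T \nabla F(x)$. The ratio is at most $1 + \|u\|_1 / \|x\|_1 \leq (\alpha+1)/\alpha$, so rearranging and using $F(x+u) \geq OPT$ gives $u^T \nabla F(x) \geq (\alpha/(\alpha+1))^{2\sigma}(OPT - F(x))$. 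Combined with $v_{\max}(x) \cdot \nabla F(x) \geq u^T \nabla F(x)$, this is the first inequality.

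For the second bound, I use second-order Taylor's theorem with integral remainder in the direction $u$: the remainder involves third-order partials evaluated along $[x, x+u]$ contracted with $u \otimes u \otimes u$; since $u \geq 0$ and all third-order partials are non-positive, this remainder is $\leq 0$, yielding $F(x+u) \leq F(x) + u^T \nabla F(x) + \tfrac{1}{2} u^T \nabla^2 F(x) u$. The $\sigma$-OSS inequality at $x$ bounds $\tfrac{1}{2} u^T \nabla^2 F(x) u \leq \sigma (\|u\|_1/\|x\|_1) \cdot u^T \nabla F(x) \leq (\sigma/\alpha) \cdot u^T \nabla F(x)$. Rearranging with $F(x+u) \geq OPT$ gives $u^T \nabla F(x) \geq (\alpha/(\alpha+\sigma))(OPT - F(x))$, and the $v_{\max}$ comparison finishes the lemma. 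The only subtlety worth double-checking is the sign of the Taylor remainder under the nonnegativity of $u$ and the assumed non-positivity of third-order partials; this is the step I would verify most carefully, but it is routine once one writes the remainder in Lagrange or integral form.
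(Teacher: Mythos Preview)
Your proposal is correct and follows essentially the same argument as the paper: the lemma is exactly the extraction of the two inequality chains from the proofs of Theorem~\ref{thm:greedyBound} and Theorem~\ref{thm:thirdderiv}, and you reproduce those chains (mean-value Taylor plus Lemma~\ref{lemma:epsilonchangegradient} for the first bound, second-order Taylor with non-positive third-order remainder plus the OSS inequality for the second), together with the same $\|u\|_1/\|x\|_1 \leq 1/\alpha$ estimate and the $v_{\max}(x)\cdot\nabla F(x) \geq u^T\nabla F(x)$ comparison.
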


 For
 $\eta \geq 0$ (possibly a function of inputs such as $n$), we say
 $F \in {\bf C}^2$ is {\em $\eta$-local at $x,u$} if
    \begin{equation}
    \label{eqn:etabound}
        u^T \nabla F(x+ \epsilon u) \geq (1-\eta \epsilon) u^T \nabla F(x)
    \end{equation}
    for all  $\epsilon \in [0,1]$  such that   $x+\epsilon u \in P$ and where $1-\eta \epsilon \in [0,1]$. The function is {\em $\eta$-local} if this holds
    for all such choices of $x,u$.
    The next result shows how one may obtain a polytime implementation
    of continuous greedy for functions with ``bounded locality''.
    As discussed later, in some  applications, functions may only be local for a subset of  $x,u$.

\begin{theorem}
\label{thm:discretize_eta}
    Let $F:[0,1]^n\rightarrow \mathbb{R} \in {\bf C}^2$ be a monotone, non-negative $\sigma$-OSS function and $P$ a polytime separable downward-closed polytope. Assume $F$ satisfies
    a $\mu = \mu(\alpha,\sigma)$ bound and is $\eta$-local.
    Then taking $\delta\leq \min \{\frac{1}{n \eta (1-\alpha)},\frac{1}{(1-\alpha)\mu}\}$,  discrete greedy   produces $x^{1}$ satisfying:
	\begin{equation}
	F(x^{1}) \geq  (1-\exp(-(1-\alpha) \mu) (1-o(1))OPT
	\end{equation}
\end{theorem}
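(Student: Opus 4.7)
The plan is to run a standard telescoping argument on the discretized trajectory. The two hypotheses play complementary roles: the $\mu$-bound (\ref{eqn:mubound}) controls the ``ideal'' marginal improvement promised at each iterate, while $\eta$-locality (\ref{eqn:etabound}) guarantees that the actual discrete increment $F(x^{t+\delta}) - F(x^t)$ approximates $\delta(1-\alpha)\, v_{max}(x^t)\cdot\nabla F(x^t)$ up to a multiplicative $(1-o(1))$ factor. Iterating $1/\delta$ times and solving the resulting geometric recursion then produces the claimed bound.

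First, an invariant check. By induction, the iterate at time $k\delta$ equals $\alpha v^* + \delta(1-\alpha)\sum_{j=0}^{k-1} v_{max}(x^{j\delta})$, which for $k\leq 1/\delta$ is a convex combination of $v^*$, the $v_{max}(x^{j\delta})\in P$, and $\vec{0}$ (using $\vec{0}\in P$ since $P$ is downward-closed), with non-negative coefficients summing to at most one; hence $x^{k\delta}\in P$ by convexity. The same formula shows $x^{k\delta}\geq \alpha v^*$ pointwise, so the $\mu$-bound applies at every iterate.

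Next, the per-step estimate. Fix an iteration, let $u=v_{max}(x^t)$, and write
\[
F(x^{t+\delta}) - F(x^t) = \int_0^{\delta(1-\alpha)} u^T \nabla F(x^t + s u)\, ds.
\]
We invoke $\eta$-locality at $x^t$ with direction $u$ and parameter $s$: this is legitimate because $s\leq \delta(1-\alpha)\leq 1/(n\eta)\leq 1/\eta$, so $1-\eta s\in[0,1]$ throughout, and because $x^t + s u$ stays in $P$ by the invariant above. Therefore $u^T\nabla F(x^t+s u)\geq (1-\eta s)\,u^T\nabla F(x^t)$, and integrating gives
\[
F(x^{t+\delta}) - F(x^t) \geq \delta(1-\alpha)\bigl(1 - \tfrac{1}{2}\eta\delta(1-\alpha)\bigr)\,u^T\nabla F(x^t) \geq \delta(1-\alpha)\bigl(1-\tfrac{1}{2n}\bigr)\mu\,\bigl(OPT - F(x^t)\bigr),
\]
where the first inequality uses $\eta\delta(1-\alpha)\leq 1/n$ and the second uses the $\mu$-bound at $x^t\geq\alpha v^*$.

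Finally, the telescoping. Set $\rho:=\delta(1-\alpha)(1-\tfrac{1}{2n})\mu$; this lies in $(0,1]$ because of $\delta\leq 1/((1-\alpha)\mu)$. The per-step inequality rearranges to $OPT-F(x^{t+\delta})\leq (1-\rho)\,(OPT-F(x^t))$, and $1/\delta$ iterations together with $F(x^0)\geq 0$ give
\[
OPT - F(x^1) \leq (1-\rho)^{1/\delta}\,OPT \leq \exp(-\rho/\delta)\,OPT = \exp\!\bigl(-(1-\alpha)(1-\tfrac{1}{2n})\mu\bigr)\,OPT.
\]
Factoring $\exp(-(1-\alpha)\mu)\cdot(1+O(1/n))$ out of the right-hand side yields $F(x^1)\geq (1-\exp(-(1-\alpha)\mu))(1-o(1))\,OPT$. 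The main obstacle is the integration step: $\eta$-locality must apply uniformly along the entire segment traversed in one discrete step, which is precisely why the step size carries the extra factor of $1/n$ beyond the obvious $1/((1-\alpha)\mu)$ required to keep $\rho\leq 1$; that additional $1/n$ factor is exactly what supplies the $(1-o(1))$ slack in the final bound.
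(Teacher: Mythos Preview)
Your proof is correct and follows essentially the same telescoping strategy as the paper. The only differences are bookkeeping: you use the integral form of the fundamental theorem of calculus where the paper invokes the mean-value form of Taylor's theorem, and you fold the $\eta$-locality loss into the per-step rate $\rho$ whereas the paper absorbs it into a modified target $\tilde{OPT}=(1-\eta\delta(1-\alpha))\,OPT$ before telescoping. Both routes give the same $(1-o(1))$ slack; your version even picks up a slightly better constant $(1-\tfrac{1}{2n})$ versus the paper's $(1-\tfrac{1}{n})$, and you include the feasibility invariant $x^{k\delta}\in P$, $x^{k\delta}\geq\alpha v^*$ that the paper leaves implicit here.
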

\begin{proof}
	By definition of the algorithm we have $x^0 = \alpha v^*$ and $x^{t+\delta} = x^t + \delta \cdot (1-\alpha) \cdot v_{max}(x^t)$.
	Then by Taylor's Theorem  for some $\epsilon \in [0,1]$ we have
	\begin{align*}
	F(x^{t+\delta}) &= F( x^t + \delta (1-\alpha)  v_{max}(x^t) ) 	
	= F( x^t) +  \delta  (1-\alpha)  v_{max}(x^t) \cdot \nabla F(x^t + \epsilon \delta (1-\alpha)  v_{max}(x^t)) \\ &
	\geq F( x^t) +  (1-\eta \epsilon \delta (1-\alpha)) \delta (1-\alpha)  v_{max}(x^t) \cdot \nabla F(x^t) \\ &
	\geq F( x^t) +  (1-\eta  \delta (1-\alpha)) \delta (1-\alpha) \mu  \big(OPT - F(x^t) \big)
	\end{align*}
	where the first inequality follows from $\eta$-locality and the second inequality follows by the $\mu$ bound property, and $\epsilon \in [0,1]$.
	Now define $\tilde{OPT}= (1-\eta \delta (1-\alpha)) OPT$.
	We have
	\begin{align*}
	F(x^{t+\delta}) & \geq  F( x^t) +  \delta (1-\alpha) \mu \left(\tilde{OPT} - F(x^t)\right)
	\end{align*}

\noindent
because $F$ is non-negative and $(1-\eta  \delta (1-\alpha)) \in [0,1]$
by choice of $\delta$. Hence we have
\[
(1-\delta (1-\alpha)\mu)(\tilde{OPT}-F(x^t))\geq \tilde{OPT}-F(x^{t+\delta}).
\]
By induction, we have
\[
(1-\delta (1-\alpha)\mu)^{1/\delta}(\tilde{OPT}-F(x^0))\geq \tilde{OPT}-F(x^{1})
\]
Next, since $\delta \leq \frac{1}{(1-\alpha)\mu}$ , we have $(1-\delta(1-\alpha)\mu)^{1/\delta}\leq \exp(-(1-\alpha)\mu)$.  Therefore we have
\begin{align*}
F(x^1) & \geq  (1-\exp(-(1-\alpha)\mu)\tilde{OPT} \\ & = (1-\exp(-(1-\alpha)\mu)(1-\eta  \delta (1-\alpha)) OPT \\ & = (1-\exp(-(1-\alpha)\mu)(1-o(1)) OPT,
\end{align*}
where the first inequality holds because of non-negativity of $F$, and the last equality holds because $\delta \leq \frac{1}{n \eta (1-\alpha)}$.

\end{proof}

We discuss how one may apply this theorem to functions $F$
   with gradients that are $L$-Lipschitz (with respect to $\ell_2$ norm). 
   It follows that  $|u^T (\nabla F(x+ \epsilon u) -  \nabla F(x))| \leq ||u||_2 \cdot ||\nabla F(x+ \epsilon u) -  \nabla F(x)||_2 \leq \epsilon L  ||u||_2^2  \leq \epsilon n^2 L$. Define  $\eta=n^3 L$ and suppose that for some $x$ we have
   that $u=v_{max}(x)$ fails the condition for $\eta$-locality.  
   That is,  $u^T \nabla F(x+ \epsilon u) < (1- \epsilon n^3 L) u^T \nabla F(x)$
   and hence $|u^T (\nabla F(x+ \epsilon u) -  \nabla F(x))| > \epsilon n^3 L( u^T \nabla F(x))$. Together with the first inequality this yields:
      $u^T \nabla F(x) < \frac{1}{n}$. Hence if $F$ satisfies (\ref{eqn:mubound}) we have
      \[
      \frac{1}{n} > v_{max} \cdot \nabla F(x) \geq \mu (OPT-F(x)).
      \]
      
      \noindent
      It follows that $F(x) > OPT - \frac{1}{\mu n}$. Hence if we  follow the analysis in the proof of Theorem~\ref{thm:discretize_eta}, either we  achieve the claimed multiplicative bound, or we reach a point $x$ which is within a small additive constant of {\sc opt}.

We now apply discretization to our main applications. Note that in some cases, it is enough to have the locality condition on a subdomain of the function.
One may show that for non-negative monotone set functions,
their multilinear extensions are $n^2$-local on $\{x: x\leq \frac{\vec{1}(n-1)}{n}\}$. This yields the following result.

\begin{theorem}
\label{thm:discretizemultilinear}
Let $f$ be a non-negative monotone set function and $F$ be its multilinear extension such that $F$ is $\sigma$-OSS.
Let $P$ denote a polytime separable downward-closed polytope contained in $[0,1]^n$. Assume that $F$ satisfies some $\mu=\mu(\alpha,\sigma)$ bound.
Then the output of the discrete version of jump-start continuous greedy algorithm, with $\delta \leq \min\{\frac{1-\alpha}{n^3},\frac{1}{(1-\alpha)\mu}\}$,   satisfies
\[
F(x^1)\geq (1-\exp(-\frac{1}{2}(1-\alpha)\mu)(1-o(1)) OPT
\]
\end{theorem}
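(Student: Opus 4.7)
The plan is to adapt the proof of Theorem \ref{thm:discretize_eta} by exploiting the special structure of a multilinear extension, sidestepping the need for a global $\eta$-locality bound. The key observation is that for a multilinear extension $F$ of a non-negative monotone set function $f$ with $M = \max_S f(S)$, the Hessian is very well-behaved: diagonal entries $\partial_{ii}F$ vanish by multilinearity, while each off-diagonal entry $\partial_{ij}F(x)$ is expressible as a signed combination of four values of $f$ and hence is absolutely bounded by $2M$. Consequently, for any $u \in [0,1]^n$,
\[
|u^T \nabla^2 F(x)\, u| \;\leq\; 2M \sum_{i \neq j} u_i u_j \;\leq\; 2M \|u\|_1^2 \;\leq\; 2Mn^2.
\]

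Using this uniform Hessian bound, a single-step Taylor expansion (mean-value form) gives $F(x^{t+\delta}) = F(x^t) + \delta(1-\alpha)\, v_{max}(x^t) \cdot \nabla F(\xi)$ for some $\xi$ on the segment $[x^t, x^{t+\delta}]$, together with
\[
v_{max}(x^t)\cdot \nabla F(\xi) \;\geq\; v_{max}(x^t)\cdot \nabla F(x^t) \;-\; 2Mn^2 \delta(1-\alpha).
\]
The analysis now splits into two cases per step, according to the size of the current directional derivative. In the \emph{good case}, when $v_{max}(x^t) \cdot \nabla F(x^t) \geq 4Mn^2 \delta(1-\alpha)$, the additive error absorbs at most half the gradient, so $v_{max}(x^t) \cdot \nabla F(\xi) \geq \tfrac{1}{2}\, v_{max}(x^t) \cdot \nabla F(x^t)$, and combining with the $\mu$-bound yields
\[
F(x^{t+\delta}) - F(x^t) \;\geq\; \tfrac{1}{2}\, \delta(1-\alpha)\mu\,\bigl(OPT - F(x^t)\bigr).
\]
This is precisely where the factor $\tfrac{1}{2}$ in the exponent of the target bound originates. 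In the \emph{saturated case}, when $v_{max}(x^t) \cdot \nabla F(x^t) < 4Mn^2 \delta(1-\alpha)$, the $\mu$-bound yields $OPT - F(x^t) \leq 4Mn^2\delta(1-\alpha)/\mu \leq 4M(1-\alpha)^2/(n\mu)$ by the choice $\delta \leq (1-\alpha)/n^3$, so $F(x^t) = (1-o(1))\, OPT$ already.

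Iterating the good-case recurrence over $1/\delta$ steps (exactly as in the proof of Theorem~\ref{thm:discretize_eta}, but with per-step rate $\tfrac{1}{2}\, \delta(1-\alpha)\mu$) gives $OPT - F(x^1) \leq \exp(-\tfrac{1}{2}(1-\alpha)\mu)\, OPT + o(OPT)$; the monotonicity of $f$ (which implies that $F$ is non-decreasing along the algorithm's trajectory, since $\nabla F \geq 0$ and each step direction is non-negative) ensures that once we enter the saturated case, the bound $F(x^{t'}) \geq (1-o(1))\, OPT$ persists for all subsequent iterates. Combining the two cases gives the claimed approximation. The main obstacle is the bookkeeping between the good and saturated cases, and verifying that the residual error in the saturated case is genuinely $o(OPT)$ --- this uses that $OPT$ is not extraordinarily smaller than $M$ for the instances on which the algorithm is meant to be applied, and one must track the per-step error in the recurrence so it does not accumulate worse than $o(1)\cdot OPT$ in total.
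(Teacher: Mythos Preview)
Your approach diverges from the paper's and contains a genuine gap that you yourself flag but do not close.

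The paper does \emph{not} use a uniform Hessian bound in terms of $M=\max_S f(S)$. Instead it exploits the product form of the multilinear extension directly: writing $\nabla_i F(x)=\sum_R (f(R+i)-f(R-i))\,p_x(R)$ with $p_x(R)=\prod_{j\in R}x_j\prod_{j\notin R}(1-x_j)$, it observes that as long as $t\le 1-\tfrac{1}{n}$ every coordinate satisfies $x^t_j\le 1-\tfrac{1-\alpha}{n}$, and hence a step of size $\delta$ can shrink each factor $(1-x_j)$ by at most a $(1-\tfrac{n\delta}{1-\alpha})$ multiplicative factor. Bernoulli then gives the \emph{componentwise multiplicative} bound $\nabla_i F(y^t)\ge (1-\tfrac{n^2\delta}{1-\alpha})\nabla_i F(x^t)$, which feeds straight into the $\mu$-bound and yields the recursion against $\tilde{OPT}=(1-\tfrac{n^2\delta}{1-\alpha})\,OPT$. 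The factor $\tfrac12$ in the exponent comes from a completely different place than in your argument: the multiplicative gradient bound is only valid for $t\le 1-\tfrac{1}{n}$, and the number of such iterations is at least $\tfrac{1}{2\delta}$; monotonicity then carries the bound from $x^{1-1/n+\delta}$ to $x^1$.

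Your route via $|u^T\nabla^2 F(x)\,u|\le 2Mn^2$ produces an \emph{additive} error $2Mn^2\delta(1-\alpha)$ per step. In the saturated case you conclude $OPT-F(x^t)\le 4M(1-\alpha)^2/(n\mu)$ and then assert this is $o(OPT)$. That step fails in general: nothing in the hypotheses controls $M/OPT$. For a downward-closed $P$ that restricts support (say a rank-$1$ matroid polytope), $OPT$ can be $\Theta(\max_i f(\{i\}))$ while $M=f([n])$ is arbitrarily larger; your saturated-case residual is then not $o(OPT)$ and the final bound does not follow. The paper's multiplicative gradient bound sidesteps this entirely because the perturbation factor $(1-\tfrac{n^2\delta}{1-\alpha})$ multiplies $\nabla F(x^t)$ itself and hence, via the $\mu$-bound, scales with $OPT-F(x^t)$ rather than with $M$.
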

\begin{proof}
First of all, suppose $t\leq 1-\frac{1}{n}$. By Taylor's remainder theorem, for some $\epsilon\in[0,1]$, we have
\[
F(x^{t+\delta})=F(x^t +\delta(1-\alpha)v_{max}(x^t))=F(x^t)+\delta (1-\alpha) v_{max}(x^t)\cdot\nabla F(x^t +\epsilon\delta(1-\alpha)v_{max}(x^t))
\]
Now note that because $F$ is the multi-linear extension of $f$, we have
\[
\nabla_i F(x)=\sum_{R\subseteq [n]} (f(R+i)-f(R-i)) p_x(R),
\]
where $p_x(R)=\prod_{j\in R} x_j \prod_{j\notin R} (1-x_j)$ --- see \cite{vondrak2008optimal}.
Because $f$ is monotone, the term $f(R+i)-f(R-i)$ is non-negative for any $R$ and $i$. Also note that because $P \subseteq [0,1]$ is downward-closed we have $v_{max}(x^t)\geq 0$.

Since $x^0_i \leq \alpha$ and $t\leq 1-\frac{1}{n}$, for any $i$ we have $x^t_i\leq \alpha+(1-\alpha)(1-\frac{1}{n})=1-\frac{1-\alpha}{n}$. Let $y^t=x^t +\epsilon\delta(1-\alpha)v_{max}(x^t)$. Because $v_{max}(x^t)\in[0,1]^n$ and $\epsilon,\alpha\in [0,1]$, we also have $x^t_j+\delta\geq y^t_j \geq x^t_j$. Therefore
\begin{align*}
p_{y^t}(R) & =\prod_{j\in R} y^t_j \prod_{j\notin R} (1-y^t_j) \geq \prod_{j\in R} x^t_j \prod_{j\notin R} (1-x^t_j -\delta).
\end{align*}
Now note that because $x_j^t\leq 1-\frac{1-\alpha}{n}$, we have $-\frac{\delta}{1-x_j^t}\geq -\frac{n\delta}{1-\alpha}$.
Therefore we have $\frac{(1-x_j^t-\delta)}{1-x_j^t}\geq 1- \frac{n\delta}{1-\alpha}$. Hence, by Bernoulli's inequality
and choice of $\delta$, we have
\[p_{y^t}(R)\geq (1-\frac{n\delta}{1-\alpha})^n p_{x^t}(R) \geq (1-\frac{n^2\delta}{1-\alpha}) p_{x^t}(R).\]
Hence,
\[
\nabla_i F(y^t) = \nabla_i F(x^t +\epsilon\delta(1-\alpha)v_{max}(x^t)) \geq (1-\frac{n^2\delta}{1-\alpha}) \nabla_i F(x^t).
\]
Therefore, defining $\tilde{OPT} = (1-\frac{n^2\delta}{1-\alpha}) OPT$, we have
\begin{align*}
F(x^{t+\delta}) & \geq F(x^t) + (1-\frac{n^2\delta}{1-\alpha}) \delta (1-\alpha) v_{max}(x^t)\cdot\nabla F(x^t) \\ & \geq F(x^t) + (1-\frac{n^2\delta}{1-\alpha}) \delta (1-\alpha)\mu (OPT-F(x^t)) \\ & \geq F(x^t) + \delta (1-\alpha)\mu (\tilde{OPT}-F(x^t)),
\end{align*}
where the second inequality follows from assumption.  The last inequality holds because $F$ is non-negative and $(1-\frac{n^2\delta}{1-\alpha})\leq 1$. Hence we have
\[
(1-\delta (1-\alpha)\mu)(\tilde{OPT}-F(x^t))\geq \tilde{OPT}-F(x^{t+\delta}).
\]
Therefore by taking $t=1-1/n$ and using induction, we have
\[
(1-\delta (1-\alpha)\mu)^{\frac{1}{\delta}(1-\frac{1}{n})+1}(\tilde{OPT}-F(x^0))\geq \tilde{OPT}-F(x^{1-\frac{1}{n}+\delta})
\]
Note that for any $n \geq 2$, $\frac{1}{\delta}(1-\frac{1}{n})+1\geq \frac{1}{2\delta}$.
Also note that $F(x^0)\geq 0$. Therefore
\[
(1-\delta (1-\alpha)\mu)^{1/2\delta}\tilde{OPT}\geq \tilde{OPT}-F(x^{1-\frac{1}{n}+\delta})
\]
Note that $(1-\frac{2\delta}{2} (1-\alpha)\mu)^{1/2\delta}\leq \exp(-\frac{1}{2}(1-\alpha)\mu)$, as $\frac{(1-\alpha)\mu}{2} \leq \frac{1}{2\delta}$. Therefore we have
\begin{align*}
F(x^1) & \geq  F(x^{1-\frac{1}{n}+\delta}) \geq (1-\exp(-\frac{1}{2}(1-\alpha)\mu)\tilde{OPT} \\ & \textcolor{red}{=} (1-\exp(-\frac{1}{2}(1-\alpha)\mu)(1-\frac{n^2\delta}{1-\alpha}) OPT \\ & = (1-\exp(-\frac{1}{2}(1-\alpha)\mu)(1-o(1)) OPT,
\end{align*}
where the first inequality holds because of monotonicity, and the last equality holds because $\delta \leq \frac{1}{(1-\alpha)n^3}$.
\end{proof}
	
We may also show that  the discrete greedy algorithm achieves $1$-step convergence to the claimed bounds if in addition $F$ has an even stronger lower bound on its gradients. (A property effectively saying $F$ is $0$-local.) As we see, this property is satisfied for multilinear extensions of supermodular functions.


\begin{theorem}
\label{thm:discretizebetageneral}
    Let $F:[0,1]^n\rightarrow \mathbb{R}$ be a monotone, non-negative $\sigma$-OSS function and $P$ a polytime separable downward-closed polytope. Assume $F$ satisfies
    a $\mu = \mu(\alpha,\sigma)$ bound and in addition:
    \begin{equation}
    \label{eqn:betabound}
        u^T \nabla F(x+ \epsilon u) \geq \beta u^T \nabla F(x)
    \end{equation}
    for all $u,x \in P$ and $\epsilon \in [0,1]$ such that $x+\epsilon u \in P$. (We assume $\beta \in (0,1]$ and the larger the value of $\beta$ the better). Then $1$-step discrete continuous greedy computes $x^{1}$ satisfying:
	\begin{equation}
	F(x^{1}) \geq  (1-\exp(-\beta (1-\alpha) \mu) OPT
	\end{equation}
\end{theorem}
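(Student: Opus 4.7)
The proof plan mirrors the structure of Theorem~\ref{thm:discretize_eta}, but exploits the stronger multiplicative lower bound (\ref{eqn:betabound}) in place of the first-order locality bound, which is what saves us the $(1-o(1))$ factor. The key observation is that (\ref{eqn:betabound}) gives a uniform $\beta$ factor that does not degrade as the step size shrinks, so a single Taylor expansion per step suffices without needing to balance $\delta$ against a smoothness parameter on the gradient.

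First I would fix $t$ and apply Taylor's theorem to the update $x^{t+\delta} = x^t + \delta(1-\alpha) v_{max}(x^t)$, yielding
\[
F(x^{t+\delta}) = F(x^t) + \delta(1-\alpha)\, v_{max}(x^t) \cdot \nabla F\bigl(x^t + \epsilon\, \delta(1-\alpha) v_{max}(x^t)\bigr)
\]
for some $\epsilon \in [0,1]$. Next I would invoke (\ref{eqn:betabound}) with $u = v_{max}(x^t) \in P$ and the auxiliary parameter $\epsilon' = \epsilon\,\delta(1-\alpha) \in [0,1]$, obtaining
\[
v_{max}(x^t) \cdot \nabla F\bigl(x^t + \epsilon' v_{max}(x^t)\bigr) \geq \beta\, v_{max}(x^t) \cdot \nabla F(x^t).
\]
Since $x^0 = \alpha v^*$ and $v_{max}(x^t) \geq 0$ throughout, an easy induction shows $x^t \geq \alpha v^*$ for all $t$, so the $\mu$-bound (\ref{eqn:mubound}) applies and yields $v_{max}(x^t) \cdot \nabla F(x^t) \geq \mu (OPT - F(x^t))$. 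Combining the three inequalities gives the one-step progress bound
\[
F(x^{t+\delta}) \geq F(x^t) + \beta\,\delta(1-\alpha)\,\mu\,(OPT - F(x^t)),
\]
which rearranges to $OPT - F(x^{t+\delta}) \leq (1 - \beta\,\delta(1-\alpha)\,\mu)(OPT - F(x^t))$.

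Iterating over the $1/\delta$ steps from $t=0$ to $t=1$ yields
\[
OPT - F(x^1) \leq (1 - \beta\,\delta(1-\alpha)\,\mu)^{1/\delta}\,(OPT - F(x^0)) \leq e^{-\beta(1-\alpha)\mu}\,OPT,
\]
where the second inequality uses the standard bound $(1-a)^{1/a} \leq e^{-1}$ (valid for $a \in (0,1]$, which holds as long as $\delta \leq 1/(\beta(1-\alpha)\mu)$) together with $F(x^0) \geq 0$ (by non-negativity of $F$). Rearranging gives $F(x^1) \geq (1 - \exp(-\beta(1-\alpha)\mu))\, OPT$ as claimed.

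The main subtlety to watch is that (\ref{eqn:betabound}) is stated for $u \in P$, so the scaling argument must go through the choice of $\epsilon' = \epsilon\,\delta(1-\alpha)$ rather than absorbing the factor into $u$ itself; this is why we need $\delta(1-\alpha) \leq 1$. The other condition to check is that the intermediate points $x^t + \epsilon' v_{max}(x^t)$ lie in $P$, which follows from convexity of $P$ and the fact that both $x^t$ and $x^t + v_{max}(x^t)$ can be expressed as convex combinations of elements in $P$ (using the downward-closed structure and $v_{max}(x^t) \in P$). Beyond that, the argument is a clean recursion and should present no substantive difficulty.
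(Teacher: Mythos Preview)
Your proposal is correct and follows essentially the same approach as the paper: Taylor expansion on one step, apply the $\beta$-bound (\ref{eqn:betabound}) to pull the gradient back to $x^t$, apply the $\mu$-bound (\ref{eqn:mubound}), and recurse on $OPT - F(x^t)$. The only cosmetic difference is that the paper, since the theorem is stated for the \emph{1-step} version, sets $\delta=1$ immediately after deriving the per-step recurrence and then uses $1-x \leq e^{-x}$ once, whereas you keep general $\delta$ and invoke $(1-\beta\delta(1-\alpha)\mu)^{1/\delta} \leq e^{-\beta(1-\alpha)\mu}$; both yield the identical bound.
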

\begin{proof}
	By definition of the algorithm we have $x^0 = \alpha v^*$ and $x^{t+\delta} = x^t + \delta \cdot (1-\alpha) \cdot v_{max}(x^t)$.
	Then by Taylor's Theorem  for some $\epsilon \in [0,1]$ we have
	\begin{align*}
	F(x^{t+\delta}) &= F( x^t + \delta (1-\alpha)  v_{max}(x^t) ) 	
	= F( x^t) +  \delta  (1-\alpha)  v_{max}(x^t) \cdot \nabla F(x^t + \epsilon \delta (1-\alpha)  v_{max}(x^t)) \\ &
	\geq F( x^t) +  \beta \delta (1-\alpha)  v_{max}(x^t) \cdot \nabla F(x^t) \\ &
	\geq F( x^t) +  \beta \delta (1-\alpha) \mu \left(OPT - F(x^t) \right) \\ &
	= \left(1-\beta \delta (1-\alpha) \mu  \right) F( x^t) +  \beta \delta (1-\alpha)  \mu OPT.
	\end{align*}
	where the first inequality follows from lemma's assumption, and the second inequality follows by the $\mu$ bound property.
	For the $1$-step version we have $\delta=1$ and so
	\begin{equation*}
	OPT - F(x^{(1)}) \leq   \left(1-\beta  (1-\alpha)  \mu \right) \big(OPT - F(x^0) \big) \leq \left(1-\beta  (1-\alpha)  \mu \right) OPT.
	\end{equation*}
Thus
	\begin{equation*}
	F(x^1) \geq  (1- \left(1-\beta  (1-\alpha)  \mu \right))  OPT \geq (1-exp(-\beta  (1-\alpha)  \mu)) OPT,
	\end{equation*}
	where the last step uses the exponential inequality
	$1-x \leq e^{-x}$.
\end{proof}

\begin{remark}
Note that for $\beta =1$ the above approximation factor is equal to $[1-\exp{(-(1-\alpha)(\frac{\alpha}{\alpha+1})\strut\strut^{2\sigma})}]$, which matches the approximation obtained via the continuous greedy process, i.e., Theorem \ref{thm:greedyBound}.
\end{remark}
	
\begin{lemma}
A $\sigma$-\OSS function $F \in {\bf C}^2$ satisfies (\ref{eqn:betabound}) with $\beta=1$ if  $\nabla^2 F(x)$ is a copositive matrix
for any $x$.  In particular, the multilinear extension of a supermodular function has $\beta = 1$.
\end{lemma}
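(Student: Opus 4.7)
The plan is to use a simple one-variable auxiliary function and reduce the claim to the definition of copositivity. Define $g:[0,1]\to\mathbb{R}$ by $g(t) := u^T \nabla F(x+tu)$. By the Chain Rule, $g'(t) = u^T \nabla^2 F(x+tu)\, u$. Since $P$ is downward-closed and $u \in P$ we have $u \geq 0$, so copositivity of $\nabla^2 F(y)$ at every $y$ on the segment $[x, x+\epsilon u]$ yields $g'(t) \geq 0$ for all $t \in [0,\epsilon]$. Thus $g$ is non-decreasing on $[0,\epsilon]$, which immediately gives $g(\epsilon) \geq g(0)$, i.e., $u^T \nabla F(x+\epsilon u) \geq u^T \nabla F(x)$. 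This is exactly (\ref{eqn:betabound}) with $\beta = 1$.

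For the second claim, I would show that the Hessian of the multilinear extension $F$ of a supermodular set function $f$ is copositive at every point of $[0,1]^n$. Recall that $F$ is multilinear in each coordinate, so all diagonal entries $\partial^2 F/\partial x_i^2 = 0$. For off-diagonal entries, a standard identity gives
\[
\frac{\partial^2 F}{\partial x_i \partial x_j}(x) = \mathbb{E}_{R \sim x}\bigl[f(R \cup \{i,j\}) - f(R \cup \{i\}) - f(R \cup \{j\}) + f(R)\bigr],
\]
where the expectation is over $R \subseteq [n]\setminus\{i,j\}$ sampled with marginals $x$. Supermodularity of $f$ makes the integrand non-negative for every realization of $R$, so $\nabla^2 F(x) \geq 0$ entrywise. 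Hence for any $u \geq 0$,
\[
u^T \nabla^2 F(x) u = \sum_{i,j} u_i u_j \bigl(\nabla^2 F(x)\bigr)_{ij} \geq 0,
\]
so $\nabla^2 F(x)$ is copositive. Applying the first part of the statement concludes the proof.

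The argument is essentially routine; the only subtle point is ensuring we apply copositivity in the correct direction (on non-negative $u$, not arbitrary $u$), and this is guaranteed by the downward-closed hypothesis on $P$ that is already present in the statement of Theorem~\ref{thm:discretizebetageneral}. There is no real obstacle to overcome here: the lemma is a direct consequence of monotonicity of $g$ together with the observation that multilinear extensions of supermodular functions have entrywise non-negative Hessians.
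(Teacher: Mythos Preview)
Your proof is correct and follows essentially the same approach as the paper. For the first part, you define the auxiliary function $g(t)=u^T\nabla F(x+tu)$ and argue monotonicity from $g'(t)\geq 0$, while the paper writes the equivalent integral identity $\nabla F(x+\epsilon u)=\nabla F(x)+\int_0^1 \nabla^2 F(x+t\epsilon u)\,\epsilon u\,dt$ and takes the inner product with $u$; these are the same argument in slightly different dress. For the second part, you compute the second partials directly via the probabilistic formula to see they are entrywise non-negative, whereas the paper observes that $-F$ is the multilinear extension of the submodular function $-f$ and cites Vondr\'ak's result that its Hessian is non-positive with zero diagonal; both routes establish $\nabla^2 F(x)\geq 0$ entrywise and hence copositivity.
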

\begin{proof}
Using fundamental theorem of calculus, we have  $\nabla F(x+p)= \nabla F(x) + \int_0^1 \nabla^2 F(x + t p) p dt$. Now the first part of the lemma  follows with $p=\epsilon u$
and taking the inner product with $u$, since
$u^T \nabla^2 F(x + t \epsilon u) u \geq 0$ for any $u \geq 0$.
For the second part, let $F$ be the multilinear extension of a supermodular set function $f$.  Then $-F$ is a multilinear extension of the submodular set function $-f$.  Vondrak \cite{vondrak2008optimal} shows that the Hessian of $-F$ is always non-positive with $0$ diagonal. Thus
$\nabla^2 F(x) \geq 0$ and hence copositive.
\end{proof}

The version of discrete greedy for multilinear extensions of supermodular functions may appear too good in that it only requires one step.
It has two  intensive computational ingredients, however. First is to solve an LP to find  a starting iterate $x^0=\alpha v^*$. The second is to compute the gradient $\nabla F(x^0)$, which already requires $O(n^2)$ work.


\section{Appendix: Swap Rounding for multilinear quadratics}
\label{focsapp:rounding}

In this section, we analyze a modified version of the swap rounding algorithm (Algorithm~\ref{alg:swapRounding}) and we show that it finds an integral solution which is an $O(1+\frac{\sigma}{r})$-approximation of the initial fractional solution.

First we define the following notation. $A(S) = \frac{1}{2}\sum_{ i,j\in S} A(i,j)$ and $A(S,S') = \sum_{i\in S}\sum_{j\in S'} A(i,j)$ and $b(S) = \sum_{i\in S} b(i)$. With an abuse of notation, we show $A(\{i\},S)$ with $A(i,S)$. The following result provides a decomposition of the multilinear extension of a quadratic function based on the convex decomposition of a point to the bases of the matroid.
\begin{lemma}
\label{lem:2ndordermulti}
Let $f(S)=\sum_{i\in S} b(i)+\frac{1}{2}\sum_{i,j\in S} A(i,j)$ where $b:[n]\rightarrow \mathbb{R}_{\geq 0}$ and $A:[n]\times[n] \rightarrow \mathbb{R}_{\geq 0}$ is a symmetric matrix with $A(i,i)=0$ for all $i \in [n]$. Then the multilinear extension of $f$ is $F(x) = \frac{1}{2} x^T A x + x^T b$. Moreover, if $x=\sum_{k=1}^p \lambda_k \mathbbm{1}_{I_k} $ for some scalars $\lambda_k$'s and subsets $I_k \subseteq [n]$, then
\begin{align}
\label{eq:2ndordermulti}
F(x) = \sum_{k=1}^p \lambda_k b(I_k) + \sum_{k=1}^p \lambda_k^2 A(I_k) + \sum_{k=1}^{p-1}\sum_{\ell=k+1}^{p} \lambda_k \lambda_{\ell} A(I_k, I_\ell).
\end{align}
\end{lemma}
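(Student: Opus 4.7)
The proof is essentially a direct computation in two steps, corresponding to the two claims in the lemma.

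First, I would establish the formula $F(x) = \tfrac{1}{2} x^T A x + b^T x$ for the multilinear extension, using the probabilistic definition $F(x) = \mathbb{E}_{R \sim x}[f(R)]$ where $R$ contains each element $i$ independently with probability $x_i$. Linearity of expectation splits $f$ into its linear and quadratic pieces. For the linear part, $\mathbb{E}[\sum_{i \in R} b(i)] = \sum_i b(i) \Pr[i \in R] = b^T x$. For the quadratic part, since $A$ has zero diagonal only the off-diagonal terms contribute; by independence $\Pr[i,j \in R] = x_i x_j$ for $i \neq j$, and we get $\tfrac{1}{2}\sum_{i \neq j} A(i,j) x_i x_j = \tfrac{1}{2} x^T A x$ (the $i=j$ terms vanish).

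Second, I would substitute the decomposition $x = \sum_{k=1}^p \lambda_k \mathbbm{1}_{I_k}$ into this closed form. The linear term gives $b^T x = \sum_k \lambda_k \, b(I_k)$ immediately. For the quadratic term, expanding bilinearly yields
\[
\tfrac{1}{2} x^T A x = \tfrac{1}{2}\sum_{k,\ell} \lambda_k \lambda_\ell \, \mathbbm{1}_{I_k}^T A \, \mathbbm{1}_{I_\ell} = \tfrac{1}{2}\sum_{k,\ell} \lambda_k \lambda_\ell \, A(I_k, I_\ell).
\]
Separating diagonal ($k=\ell$) and off-diagonal contributions and using $A(I_k,I_k) = 2\,A(I_k)$ (by the definition $A(S) = \tfrac{1}{2}\sum_{i,j\in S} A(i,j)$) together with the symmetry $A(I_k,I_\ell) = A(I_\ell,I_k)$ collapses the off-diagonal double sum to a single sum over $k < \ell$, giving exactly $\sum_k \lambda_k^2 A(I_k) + \sum_{k<\ell} \lambda_k \lambda_\ell A(I_k,I_\ell)$.

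This is a routine verification with no real obstacle; the only points requiring mild care are (i) correctly handling the factor of $\tfrac{1}{2}$ when converting between $A(I_k)$ and the full sum $\sum_{i,j \in I_k} A(i,j)$, and (ii) using the symmetry of $A$ to merge the $(k,\ell)$ and $(\ell,k)$ cross-terms into the claimed ordered sum. Note that the hypotheses $b, A \geq 0$ play no role in this identity itself; they matter only downstream when one wants to derive approximation guarantees from the decomposition.
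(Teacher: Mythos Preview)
Your proposal is correct and follows essentially the same route as the paper: establish the closed form for $F$, then substitute the convex combination and expand bilinearly, separating diagonal from off-diagonal terms exactly as you describe. The only cosmetic difference is that the paper writes out the multilinear extension as the explicit sum $\sum_{S\subseteq [n]} f(S)\prod_{k\in S}x_k\prod_{k\notin S}(1-x_k)$ and swaps the order of summation, whereas you phrase the same computation probabilistically via $\mathbb{E}_{R\sim x}[f(R)]$ and linearity of expectation; these are of course the same argument.
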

\begin{proof}
For the first part of the lemma note that
\begin{align*}
F(x) & = \sum_{S\subseteq [n]} f(S) \prod_{k\in S} x_k \prod_{k\in [n]\setminus S} (1-x_k) = \sum_{S\subseteq [n]} (b(S)+A(S)) \prod_{k\in S} x_k \prod_{k\in [n]\setminus S} (1-x_k)  \\ & = \sum_{S\subseteq [n]} (\sum_{i\in S}b(i)) \prod_{k\in S} x_k \prod_{k\in [n]\setminus S} (1-x_k)  + \sum_{S\subseteq [n]} (\frac{1}{2}\sum_{i,j\in S}A(i,j)) \prod_{k\in S} x_k \prod_{k\in [n]\setminus S} (1-x_k)  \\ & =
\sum_{i\in [n]} b(i) \sum_{\substack{S\subseteq [n]\\i\in S}}( \prod_{k\in S} x_k \prod_{k\in [n]\setminus S} (1-x_k) ) + \frac{1}{2}\sum_{i,j\in [n]} A(i,j) \sum_{\substack{S\subseteq [n]\\\{i,j\}\subseteq S}} (\prod_{k\in S} x_k \prod_{k\in [n]\setminus S} (1-x_k) ) \\ & = \sum_{i\in [n]} b(i) x_i \sum_{\substack{S\subseteq [n]-i}}( \prod_{k\in S} x_k \prod_{k\in [n]-i\setminus S} (1-x_k)) \\ & + \frac{1}{2}\sum_{i,j\in [n]} A(i,j) x_i x_j \sum_{\substack{S\subseteq [n]-i-j}} (\prod_{k\in S} x_k \prod_{k\in [n]-i-j\setminus S} (1-x_k)) \\ & =
\sum_{i\in [n]} b(i) x_i + \frac{1}{2}\sum_{i,j\in [n]} A(i,j) x_i x_j = x^T b + \frac{1}{2} x^T A x.
\end{align*}
To see the second part, observe that
\begin{equation*}
    b^T x = b^T (\sum_k \lambda_k \mathbbm{1}_{I_k}) = \sum_k \lambda_k (b^T \mathbbm{1}_{I_k}) = \sum_k \lambda_k b(I_k),
\end{equation*}
and
\begin{align*}
    x^T A x & = (\sum_{k=1}^p \lambda_k \mathbbm{1}_{I_k})^T A (\sum_{\ell=1}^p \lambda_\ell \mathbbm{1}_{I_\ell})= \sum_{k,\ell=1}^p \lambda_k \lambda_\ell \mathbbm{1}_{I_k}^T A  \mathbbm{1}_{I_\ell} = \sum_{k,\ell=1}^p \lambda_k \lambda_\ell A(I_k, I_\ell) \\ &  =  \sum_{k=1}^p \lambda_k^2 A(I_k,I_k) + 2 \sum_{k<\ell} \lambda_k \lambda_\ell A(I_k,I_\ell) = 2 \sum_{k=1}^p \lambda_k^2 A(I_k) + 2\sum_{k=1}^{p-1}\sum_{\ell=k+1}^{p} \lambda_k \lambda_\ell A(I_k,I_\ell).
\end{align*}

\end{proof}

\begin{lemma}
\label{lemma:swaplem}
Let $\mathcal{M}=([n],\mathcal{I})$ be a matroid and $P$ be its corresponding base polytope. Let $F(z)=\frac{1}{2}z^T A z + z^T b$ where $A, b\geq 0$ and $A$ is a symmetric matrix such that its diagonal is zero. Let $f(S)=F(\mathbbm{1}_S)$ for any $S\subseteq [n]$. Let $x=\sum_{i=1}^p \lambda_i \mathbbm{1}_{I_i}  \in P$ where $I_i$'s are bases of the matroid, $\sum_{i=1}^p \lambda_i = 1$, and $\lambda_i\geq 0$, for $i=1,\ldots,p$. Let $(I',M)$ be the output of \texttt{\textsc{MergeBases}} (defined in Algorithm~\ref{alg:swapRounding}) on $(I_1,\ldots,I_p)$ and $(\lambda_1,\ldots,\lambda_p)$. Let $y = (\lambda_1+\lambda_2)\mathbbm{1}_{I'} + \sum_{i=3}^p \lambda_i \mathbbm{1}_{I_i}$. Then $F(x)\leq F(y)+\lambda_1 \lambda_2 \sum_{(i,j)\in M} A(i,j)$.
\end{lemma}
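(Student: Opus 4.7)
The plan is to analyze \texttt{\textsc{MergeBases}} one swap at a time, showing that the worst-case loss in $F$ at each step is bounded by $\lambda_1 \lambda_2 A(i,j)$ for the swapped pair $(i,j)$, and then telescoping. The key observation is that since only $I_1$ and $I_2$ are being modified, the fractional vector $z$ evolves by adding either $\lambda_1(e_j - e_i)$ or $\lambda_2(e_i - e_j)$ at each step (the contributions from $I_3, \ldots, I_p$ stay fixed throughout), so it suffices to track $F$ as a function on $[0,1]^n$ across these shifts.

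First, I would use the closed form $F(z) = \tfrac{1}{2}z^T A z + b^T z$ from Lemma~\ref{lem:2ndordermulti} to compute the one-step changes exactly. For a swap pair $(i,j)$ with $i \in I_1 \setminus I_2$, $j \in I_2 \setminus I_1$ (such a pair is guaranteed to simultaneously keep both $I_1 - i + j$ and $I_2 - j + i$ as bases by the matroid strong base exchange), and current fractional vector $z$, expanding the quadratic and using $A(i,i) = A(j,j) = 0$ gives
\begin{align*}
\Delta_a &:= F(z + \lambda_1(e_j - e_i)) - F(z) = \lambda_1\bigl[(\nabla F(z))_j - (\nabla F(z))_i\bigr] - \lambda_1^2 A(i,j), \\
\Delta_b &:= F(z + \lambda_2(e_i - e_j)) - F(z) = \lambda_2\bigl[(\nabla F(z))_i - (\nabla F(z))_j\bigr] - \lambda_2^2 A(i,j).
\end{align*}

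Next, since the algorithm selects the option with the larger $F$-value, the one-step change equals $\max(\Delta_a, \Delta_b)$. The crucial trick is a weighted-average bound with weights chosen to cancel the gradient terms: for any reals $u,v$ and positive weights summing to one, $\max(u,v) \geq pu + qv$, so applying this with $p = \lambda_2/(\lambda_1+\lambda_2)$ and $q = \lambda_1/(\lambda_1+\lambda_2)$ yields
\[
\max(\Delta_a, \Delta_b) \;\geq\; \frac{\lambda_2 \Delta_a + \lambda_1 \Delta_b}{\lambda_1 + \lambda_2} \;=\; \frac{-\lambda_1\lambda_2(\lambda_1+\lambda_2)A(i,j)}{\lambda_1+\lambda_2} \;=\; -\lambda_1\lambda_2 A(i,j),
\]
because the gradient contributions cancel exactly. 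Note that after each swap, $i$ and $j$ are no longer in $I_1 \triangle I_2$ (one becomes common to both bases, the other disappears from both), so each pair is used at most once and $M$ is a genuine matching.

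Finally, I telescope: summing the lower bound $-\lambda_1\lambda_2 A(i_t, j_t)$ over the $|M|$ iterations of \texttt{\textsc{MergeBases}} gives
\[
F(y) - F(x) \;\geq\; -\lambda_1\lambda_2 \sum_{(i,j) \in M} A(i,j),
\]
which rearranges to the claimed bound. The main obstacle is really just the one-line weighted-average cancellation in the second paragraph --- getting the weighting $(\lambda_2, \lambda_1)$ rather than the naive $(\lambda_1, \lambda_2)$ is essential so that the first-order gradient terms annihilate, leaving only the quadratic penalty $-\lambda_1\lambda_2 A(i,j)$. Everything else is bookkeeping that exploits the fact that the other $p-2$ bases are untouched and that $A$ has zero diagonal.
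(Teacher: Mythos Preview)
Your argument is correct, and it is cleaner than the paper's. Both proofs analyze one swap at a time and telescope, but the paper expands $F(x_{m-1})$ and $F(x_m)$ fully via the convex-combination decomposition of Lemma~\ref{lem:2ndordermulti} and then cancels terms by hand, whereas you work directly with the quadratic form, compute the two possible increments $\Delta_a,\Delta_b$ in gradient notation, and use the weighted average with weights $(\lambda_2,\lambda_1)/(\lambda_1+\lambda_2)$ to kill the first-order terms. Your route makes it transparent \emph{why} the loss is exactly $\lambda_1\lambda_2 A(i,j)$ and not something depending on $\lambda_1^2$ or $\lambda_2^2$.

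One step you assert without justification: ``the algorithm selects the option with the larger $F$-value.'' The \texttt{\textsc{MergeBases}} comparison in Algorithm~\ref{alg:swapRounding} is stated as a test on
\[
\text{LHS}=b(i)+\lambda_1 A(i,I_1-i)+\lambda_2 A(i,I_2-j)+\textstyle\sum_{k\ge 3}\lambda_k A(i,I_k)
\]
versus the analogous quantity for $j$, which is not \emph{a priori} the same as $\Delta_b\ge\Delta_a$. It does turn out to be equivalent: using $A_{ii}=A_{jj}=0$, $i\in I_1$, $j\in I_2$, one checks $(\nabla F(z))_i=\text{LHS}+\lambda_2 A(i,j)$ and $(\nabla F(z))_j=\text{RHS}+\lambda_1 A(i,j)$, whence $\Delta_b-\Delta_a=(\lambda_1+\lambda_2)(\text{LHS}-\text{RHS})$. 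You should add this one-line verification; once it is in place your proof is complete.
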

\begin{proof}
Let $I_1^0 = I_1$ and $I_2^0=I_2$ (the original inputs of the function). Let $I_1^m$ and $I_2^m$ be the resulting $I_1$ and $I_2$ after the $m$-th iteration of the while loop. Let $x_m=\lambda_1 \mathbbm{1}_{I_1^m} + \lambda_2 \mathbbm{1}_{I_2^m} + \sum_{k=3}^p \lambda_k \mathbbm{1}_{I_k}$. Let $i_m,j_m$ be the elements we pick at the $m$-th iteration of the loop. We show that $F(x_{m-1})\leq F(x_{m})+\lambda_1 \lambda_2 A(i_m,j_m)$ and this yields the desired result using a simple recursion argument. Without loss of generality, we assume
\begin{align}
\label{eq:swaplem}
 \nonumber & b(i_{m})+\lambda_1 A(i_m, I^{m-1}_1-i_m)+\lambda_2 A(i_m, I^{m-1}_2-j_m) + \sum_{k=3}^{p} \lambda_k A(i_m, I_k) \\ & \geq b(j_{m})+ \lambda_1 A(j_m,I^{m-1}_1-i_m) + \lambda_2 A(j_m,I^{m-1}_2-j_m) + \sum_{k=3}^{p} \lambda_k A(j_m, I_k)
\end{align}
We have
\begin{align*}
    F & (x_{m-1})  = \lambda_1 b(I^{m-1}_1) + \lambda_2 b(I^{m-1}_2) + \sum_{k=3}^p \lambda_k b(I_k) + \lambda_1^2 A(I^{m-1}_1)+\lambda_2^2 A(I^{m-1}_2) + \sum_{k=3}^p \lambda_k^2 A(I_k) \\ & +
    \lambda_1 \lambda_2 A(I_1^{m-1},I_2^{m-1}) + \lambda_1 \sum_{k=3}^p \lambda_k A(I_1^{m-1},I_k) + \lambda_2 \sum_{k=3}^p \lambda_k A(I_2^{m-1},I_k) + \sum_{k=3}^{p-1} \sum_{k'=k+1}^{p} \lambda_k \lambda_{k'} A(I_k,I_{k'}) \\ & = \lambda_1 b(I^{m-1}_1) + \lambda_2 b(I^{m-1}_2 - j_m) + \sum_{k=3}^p \lambda_k b(I_k) + \lambda_1^2 A(I^{m-1}_1)+\lambda_2^2 A(I^{m-1}_2 - j_m) + \sum_{k=3}^p \lambda_k^2 A(I_k) \\ & +
    \lambda_1 \lambda_2 A(I_1^{m-1},I_2^{m-1} - j_m) + \lambda_1 \sum_{k=3}^p \lambda_k A(I_1^{m-1},I_k) + \lambda_2 \sum_{k=3}^p \lambda_k A(I_2^{m-1}-j_m,I_k) \\ & + \sum_{k=3}^{p-1} \sum_{k'=k+1}^{p} \lambda_k \lambda_{k'} A(I_k,I_{k'}) + \lambda_2 b(j_m) + \lambda_2^2 A(j_m,I_2^{m-1}-j_m) + \lambda_1 \lambda_2 A(j_m, I_1^{m-1} - i_m) \\ & + \lambda_2 \sum_{k=3}^p \lambda_k A(j_m,I_k) + \lambda_1 \lambda_2 A(i_m,j_m) \\ & \leq
    \lambda_1 b(I^{m-1}_1) + \lambda_2 b(I^{m-1}_2 - j_m) + \sum_{k=3}^p \lambda_k b(I_k) + \lambda_1^2 A(I^{m-1}_1)+\lambda_2^2 A(I^{m-1}_2 - j_m) + \sum_{k=3}^p \lambda_k^2 A(I_k) \\ & +
    \lambda_1 \lambda_2 A(I_1^{m-1},I_2^{m-1} - j_m) + \lambda_1 \sum_{k=3}^p \lambda_k A(I_1^{m-1},I_k) + \lambda_2 \sum_{k=3}^p \lambda_k A(I_2^{m-1}-j_m,I_k) \\ & + \sum_{k=3}^{p-1} \sum_{k'=k+1}^{p} \lambda_k \lambda_{k'} A(I_k,I_{k'}) + \lambda_2 b(i_m) + \lambda_2^2 A(i_m,I_2^{m-1}-j_m) + \lambda_1 \lambda_2 A(i_m, I_1^{m-1} - i_m) \\ & + \lambda_2 \sum_{k=3}^p \lambda_k A(i_m,I_k) + \lambda_1 \lambda_2 A(i_m,j_m) \\ & =
    \lambda_1 b(I^{m}_1) + \lambda_2 b(I^{m}_2) + \sum_{k=3}^p \lambda_k b(I_k) + \lambda_1^2 A(I^{m}_1)+\lambda_2^2 A(I^{m}_2) + \sum_{k=3}^p \lambda_k^2 A(I_k) \\ & +
    \lambda_1 \lambda_2 A(I_1^{m},I_2^{m}) + \lambda_1 \sum_{k=3}^p \lambda_k A(I_1^{m},I_k) + \lambda_2 \sum_{k=3}^p \lambda_k A(I_2^{m},I_k) \\ & + \sum_{k=3}^{p-1} \sum_{k'=k+1}^{p} \lambda_k \lambda_{k'} A(I_k,I_{k'}) + \lambda_1 \lambda_2 A(i_m,j_m) = F(x^m)+ \lambda_1 \lambda_2 A(i_m,j_m).
\end{align*}
The inequality holds because of (\ref{eq:swaplem}), and the first and the last equalities follow from Lemma~\ref{lemma:swaplem}. The second to the last equality uses that $I_1^{m}=I_1^{m-1}$ and $I_2^{m} = I_2^{m-1} - j_m + i_m$.
\end{proof}

\begin{theorem}
\label{thm:swapthm}
Let $\mathcal{M}([n],\mathcal{I})$ be a matroid of rank $r$ and $P$ be its corresponding base polytope. Let $F(z)=\frac{1}{2}z^T A z + z^T b$ where $A, b\geq 0$ and $A$ is a symmetric matrix with zero diagonal that satisfies the $\sigma$-semi-metric inequality, i.e., $A(i,j) \leq \sigma (A(i,k)+A(j,k))$ for all $i,j,k\in [n]$. Let $f(S)=F(\mathbbm{1}_S)$ for any $S\subseteq [n]$. Let $x\in P$ and $S$ be the output of the modified swap rounding (Algorithm~\ref{alg:swapRounding}) on $x$. Then $F(x)\leq O(1+\frac{\sigma}{r}) f(S)$.
\end{theorem}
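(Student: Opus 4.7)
The plan is to iteratively apply Lemma~\ref{lemma:swaplem} to track the loss incurred by successive merges. Write $x = \sum_{k=1}^p \lambda_k \mathbbm{1}_{I_k}$ as a convex combination of bases of $\mathcal{M}$. At iteration $t$ of the swap rounding, two current bases with weights $\lambda_a^{(t)}, \lambda_b^{(t)}$ are merged via a matching $M_t$ of total $A$-weight $w_t := \sum_{(i,j) \in M_t} A(i,j)$. Lemma~\ref{lemma:swaplem} gives a per-iteration drop in $F$ of at most $\lambda_a^{(t)} \lambda_b^{(t)} w_t$, so telescoping over all $p-1$ merges yields
\[
F(x) \;\leq\; f(B) \;+\; \sum_{t=1}^{p-1} \lambda_a^{(t)} \lambda_b^{(t)} w_t,
\]
where $B$ is the final basis produced by the merging process.

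Next, let $w^* = \max_t w_t$ be the largest matching weight encountered. A short tree-induction on the merging sequence (which forms a binary tree whose leaves carry weights $\lambda_1,\ldots,\lambda_p$) shows $\sum_t \lambda_a^{(t)} \lambda_b^{(t)} = \sum_{i<j} \lambda_i \lambda_j = \tfrac12 (1 - \sum_k \lambda_k^2) \leq \tfrac12$. Thus the total loss is at most $\tfrac12 w^*$, giving $F(x) \leq f(B) + \tfrac{1}{2} w^*$. If $w^* \leq O(f(B))$ we would already be done, but in general $w^*$ can be much larger; this is exactly why the algorithm also computes $B'$ by merging only the two bases $I_a, I_b$ that attained $w^*$.

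The key remaining step, and also the main obstacle, is to show $w^* \leq O(\sigma / r)\, f(B')$. Let $E \subseteq B'$ be the $\ell \leq r$ swap representatives chosen from the matched pairs of $M^*=\{(i_t,j_t)\}$, and set $C = I_a \cap I_b$, so $B' = E \cup C$ has size $r$. For each pair $(i_t,j_t)$ exactly one element $e_t$ lies in $B'$ while $\bar e_t \notin B'$. Applying the $\sigma$-semi-metric inequality $A(i_t,j_t) \leq \sigma(A(i_t,k) + A(j_t,k))$ for every $k \in B'$ and summing gives
\[
r\,A(i_t,j_t) \;\leq\; \sigma\bigl(A(e_t, B'\setminus e_t) + A(\bar e_t, B')\bigr).
\]
Summing over $t$, the first term collects into $\sum_t A(e_t, B'\setminus e_t) = A(E,C) + 2 A(E) \leq 2 A(B')$ by double-counting edges inside $B'$. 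The second term, $\sum_t A(\bar e_t, B')$, requires using the greedy swap rule of Algorithm~\ref{alg:swapRounding}: when only two bases are being merged, the inequality defining which element stays implies that each $\bar e_t$'s contribution to $A(\cdot,B')$ is dominated by that of its partner $e_t$, modulo the non-negative $b$ terms. This yields $\sum_t A(\bar e_t, B') \leq O(A(B')) \leq O(f(B'))$, hence $r\, w^* \leq O(\sigma)\, f(B')$.

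Putting it together, $F(x) \leq f(B) + \tfrac12 w^* \leq f(B) + O(\sigma/r) f(B') \leq O(1+\sigma/r)\,f(S)$, where $S$ is whichever of $B$ and $B'$ has larger $f$-value. The hard part is the bound on $\sum_t A(\bar e_t, B')$, which rests on translating the local optimality of each swap into a global comparison between the removed elements and the retained basis; this is where the specific form of the swap rule from Algorithm~\ref{alg:swapRounding} (rather than a generic exchange) is essential.
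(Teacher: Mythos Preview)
Your first half is correct and matches the paper: iterating Lemma~\ref{lemma:swaplem} and using $\sum_t \lambda_a^{(t)}\lambda_b^{(t)} = \sum_{i<j}\lambda_i\lambda_j \le \tfrac12$ gives $F(x)\le f(B)+\tfrac12 w^\ast$ with $w^\ast=\max_t w_t$.

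The second half has two genuine gaps. First, you conflate the matching $M_t$ (attaining $w^\ast$ during the full run) with the matching $M^\ast$ produced by the re-merge of $I_a,I_b$ with weights $(1/2,1/2)$. These are different in general: in the original run the comparison inside \texttt{MergeBases} involves the other bases $I_3,\dots,I_p$ and weights $\lambda'_t,\lambda_{t+1}$, so the sequence of swap decisions can differ. In particular, for $(i_t,j_t)\in M_t$ there is no guarantee that exactly one of $i_t,j_t$ lies in $B'$, so your decomposition into $e_t\in B'$ and $\bar e_t\notin B'$ is not justified. Your semi-metric calculation therefore bounds $w(M^\ast)$, not $w^\ast$. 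Second, even for $M^\ast$, the claim $\sum_t A(\bar e_t,B')\le O(A(B'))$ does not follow from the swap rule as stated: each swap compares $i_m$ and $j_m$ against the \emph{current} bases $I_1^{m-1},I_2^{m-1}$, not against the final $B'$, so ``$\bar e_t$'s contribution to $A(\cdot,B')$ is dominated by its partner'' is not a direct consequence.

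The paper avoids both issues by never comparing $M_t$ directly to $B'$. It inserts the fractional value $F\bigl(\tfrac12\mathbbm{1}_{I'_t}+\tfrac12\mathbbm{1}_{I_{t+1}}\bigr)$ as an intermediary and uses Lemma~\ref{lemma:swaplem} twice more. Concretely: (i) $w(M_t)\le A(I'_t,I_{t+1})\le 4F(\tfrac12\mathbbm{1}_{I'_t}+\tfrac12\mathbbm{1}_{I_{t+1}})$; (ii) Lemma~\ref{lemma:swaplem} on the two-basis combination gives $F(\tfrac12\ldots)\le f(I^\ast)+\tfrac14 w(M^\ast)$; (iii) for $w(M^\ast)$ the semi-metric inequality is summed over $k\in I'_t-i$ (the \emph{fixed original} set, not $B'$), so the right-hand side is bounded by $A(I'_t)+A(I'_t,I_{t+1})-w(M^\ast)\le 4F(\tfrac12\ldots)-w(M^\ast)\le 4f(I^\ast)$, again via Lemma~\ref{lemma:swaplem}. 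Combining yields $w(M_t)\le (4+\tfrac{4\sigma}{r-1})f(I^\ast)$, hence the $O(1+\sigma/r)$ bound. The point is that summing over the fixed set $I'_t$ (rather than $B'$) sidesteps the need to track which endpoint survives, and the double use of the merge lemma absorbs the discrepancy between $M_t$ and $M^\ast$.
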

\begin{proof}
Let $x=\sum_{i=1}^p \lambda_i \mathbbm{1}_{I_i}\in P$ where $I_i$'s are bases of the matroid, $\sum_{i=1}^p \lambda_i = 1$, and $\lambda_i\geq 0$, for $i=1,\ldots,p$. Let $S$ be the output of the swap rounding (Algorithm~\ref{alg:swapRounding}) if it starts from $(I_1,\ldots,I_p)$ and $(\lambda_1,\ldots,\lambda_p)$. Let $x_k$ denote the vector corresponding to $\bm{I}_k = (I'_{k}, I_{k+1}, \ldots, I_p)$ and $\bm{\lambda}_k= (\lambda'_{k}, \lambda_{k+1}, \ldots, \lambda_p)$, i.e. $x_k= \lambda'_{k} \mathbbm{1}_{I'_k} + \sum_{i=k+1}^p \lambda_i \mathbbm{1}_{I_i}$. By Lemma~\ref{lemma:swaplem}, for $k=1,\ldots,n-1$, we have
\begin{equation*}
F(x_k) \leq F(x_{k+1})+\lambda_k' \lambda_{k+1} \sum_{(i,j)\in M_k} A(i,j) \leq F(x_{k+1})+\lambda_k' \lambda_{k+1} \sum_{(i,j)\in M_t} A(i,j),
\end{equation*}
where $t=\argmax_{k=1,\ldots,p-1} \{\sum_{(i,j)\in M_k} A(i,j)\}$. Therefore
\begin{align}
\label{eq:swapthm0}
\nonumber F(x_1) &\leq F(x_p)+(\sum_{k=1}^{p-1} \lambda'_{k} \lambda_{k+1} ) \sum_{(i,j)\in M_t} A(i,j) = F(x_p) + (\sum_{k=1}^{p-1} \sum_{m=1}^k \lambda_m \lambda_{k+1} ) \sum_{(i,j)\in M_t} A(i,j) \\ &
\leq F(x_p) + \frac{1}{2}\sum_{(i,j)\in M_t} A(i,j) = f(I_p')+\frac{1}{2}\sum_{(i,j)\in M_t} A(i,j),
\end{align}
where the last inequality holds since
$2 \sum_{k=1}^{p-1} \sum_{m=1}^k \lambda_m \lambda_{k+1} \leq (\sum_{k=1}^p \lambda_k)^2 = 1$.
Now, we bound the term $\sum_{(i,j)\in M_t} A(i,j)$. By definition of $M_t$, note that $M_t\subseteq I'_t\times I_{t+1}$. Using this and Lemma~\ref{lem:2ndordermulti} it follows that
\begin{equation}
\label{eq:swapthm1}
\sum_{(i,j)\in M_t} A(i,j) \leq A(I'_t,I_{t+1}) \leq 4 \cdot F(\frac{1}{2} \mathbbm{1}_{I'_t}+\frac{1}{2}\mathbbm{1}_{I_{t+1}}).
\end{equation}
By Lemma~\ref{lemma:swaplem} and the $\sigma$-semi-metric assumption, we also know that
\begin{align}
\label{eq:swapthm2}
F(\frac{1}{2}\mathbbm{1}_{I'_t}+\frac{1}{2}\mathbbm{1}_{I_{t+1}}) \leq F(\mathbbm{1}_{I^*})+ \frac{1}{4}\sum_{(i,j)\in M^*} A(i,j) \leq F(\mathbbm{1}_{I^*}) + \frac{1}{4}\sum_{(i,j)\in M^*} \frac{\sigma}{r-1} \big(A(i,I'_t-i)+A(j,I'_t-i) \big).
\end{align}
Note that none of the edges of $M^*$ is present in the right hand side summation. Therefore
\begin{align}
\label{eq:swapthm3}
 \nonumber \sum_{(i,j)\in M^*} (A(i,I'_t-i)+A(j,I'_t-i)) & \leq A(I'_t) + A(I'_t, I_{t+1}) - \sum_{(i,j)\in M^*} A(i,j) \\ & \leq 4 \cdot F(\frac{1}{2}\mathbbm{1}_{I'_t}+\frac{1}{2}\mathbbm{1}_{I_{t+1}}) - \sum_{(i,j)\in M^*} A(i,j) \leq 4 F(\mathbbm{1}_{I^*}) = 4 f(I^*).
\end{align}
where the second inequality follows from Lemma~\ref{lem:2ndordermulti} and the last inequality holds because of Lemma~\ref{lemma:swaplem}. 
Combining (\ref{eq:swapthm1}), (\ref{eq:swapthm2}), and (\ref{eq:swapthm3}), we get
\begin{align}
\label{eq:swapthm4}
\sum_{(i,j)\in M_t} A(i,j) \leq \big(4+\frac{4 \sigma}{r-1} \big) f(I^*).
\end{align}
Hence, by (\ref{eq:swapthm0}) and (\ref{eq:swapthm4}), we have
\[
F(x_1)\leq f(I'_p)+\Big(2+\frac{2 \sigma}{r-1} \Big) f(I^*) ,
\]
and this yields the result.
\end{proof}

\begin{algorithm}[H]
\RestyleAlgo{algoruled}
\caption{Swap rounding for monotone second-order-modular functions under matroid constraints}
\label{alg:swapRounding}
\footnotesize
\SetKwFunction{MB}{MergeBases}
    \SetKwProg{Fn}{Function}{:}{}
\textbf{Input:} A matroid $\mathcal{M}=([n],\mathcal{I})$, its base polytope $P$, and a fractional solution $x\in P$. A set function $f(S)=\sum_{i\in S} b(i)+\sum_{\{i,j\}\subseteq S} A(i,j)$. \\[0.6ex]
Find $\bm{\lambda}_1=(\lambda_1, \lambda_2,\ldots,\lambda_p)$ and $\bm{I}_1=(I_1,I_2,\ldots,I_p)$ such that $x=\sum_{i=1}^p \lambda_i I_i$, $\lambda_i\geq 0$ (for any $i$), $\sum_{i=1}^p \lambda_i = 1$, and $I_i$'s are bases of the matroid\;
$I_1'\leftarrow I_1$\;
$\lambda_1'\leftarrow \lambda_1$\;
\For{$k=1,\ldots,p-1$} {
    $(I_{k+1}', M_{k})\leftarrow$\MB{$\bm{I}_k$,$\bm{\lambda}_k$}\;
    $\lambda_{k+1}'\leftarrow \lambda_{k}'+\lambda_{k+1}$\;
    $\bm{I}_{k+1} \leftarrow (I_{k+1}',I_{k+2},\ldots,I_{p})$\;
    $\bm{\lambda}_{k+1} \leftarrow (\lambda_{k+1}',\lambda_{k+2},\ldots,\lambda_p)$\;
	}
    $t \leftarrow \argmax_{k=1,\ldots,p-1} \{\sum_{(i,j)\in M_k} A(i,j)\}$\;
    $(I^*,M^*)\leftarrow$\MB{$(I_t',I_{t+1})$,$(0.5,0.5)$}\;
\Return $\argmax \{f(I^*),f(I_p')\}$\;
 \BlankLine  \BlankLine
    \Fn{\MB{$\bm{I}=(I_1,I_2,\ldots,I_m) \;$  , $\; \bm{\lambda}=(\lambda_1,\lambda_2,\ldots,\lambda_m)$}}{
    $M \leftarrow \emptyset$\;
        \While{$I_1\neq I_2$}{
 Pick $i\in I_1 \setminus I_2$ and $j\in I_2 \setminus I_1$ such that $I_1-i+j\in\mathcal{I}$ and $I_2-j+i\in\mathcal{I}$\;
 $M\leftarrow M\cup \{(i,j)\}$\;
\If{$b(i)+\lambda_1 A(i, I_1-i)+\lambda_2 A(i, I_2-j) + \sum_{k=3}^{m} \lambda_k A(i, I_k) \geq b(j)+ \lambda_1 A(j,I_1-i) + \lambda_2 A(j,I_2-j) + \sum_{k=3}^{m} \lambda_k A(j, I_k)$}{$I_2\leftarrow I_2-j+i$\;}\Else{$I_1\leftarrow I_1-i+j$\;}}
 \Return $(I_1, M)$\;}
 \textbf{End Function}
\end{algorithm}

\section{Appendix: Hardness of Approximation for $\sigma$-Semi-Metric Diversity}
\label{focsapp:hardness}

In this section, we provide a hardness result for approximate maximization of $\sigma$-semi-metric diversity functions defined on a semi-metric distance. Our results are based on  inapproximability results for finding densest subgraphs.

Given a graph $G$ and integer $k$,  the {\em densest $k$-subgraph problem} aims to  find an induced subgraph of size $k$ with the maximum number of edges. Let $R$ be a subset of vertices of $G$ and $E(R)$ be the number of edges in the induced subgraph of $R$. The density of $R$ is defined as $\rho(R)=E(R)/{|R| \choose 2} \leq 1$.
A recent breakthrough \cite{manurangsi2017almost}  shows that, assuming  the exponential time hypothesis (ETH), there is no
subpolynomial approximation algorithm for densest subgraph. More precisely, there is no polytime algorithm which  can distinguish between
two cases: (i) an instance $G$ which contains a $k$-clique
and (ii) an instance where the density of every $k$-subset
$S$ satisfies $\rho(S) \leq n^{-1/(\log \log n)^c}$, where
$c$ is a universal constant independent of $n$. In the following, we let $\theta :=  n^{1/(\log \log n)^c}$.
Existence of constant-factor approximations had previously been ruled out under the unique games conjecture with small set expansion \cite{raghavendra2010graph}.

\begin{reptheorem}{thm:approxhardness}
Assuming ETH: (1)  There is no polytime $4/\theta$-approximation algorithm for maximizing  $\sigma$-semi-metric functions subject to a cardinality constraint, and (2)
for any fixed $\sigma \geq 1$ and $\epsilon >0$, there is no polytime algorithm which  approximates the maximum of a $\sigma$-semi-metric function subject to a cardinality constraint within a factor of $2\sigma-\epsilon$.
\end{reptheorem}
\begin{proof}

For $\sigma \geq 1$, we can reduce the densest $k$-subgraph problem to $\sigma$-semi-metric function maximization in the following way. Consider an instance of densest $k$-subgraph  on graph $G$ with vertex set $[n]$. Create a distance function $A:[n]\times [n]\rightarrow \mathbb{R}$. If there is an edge between $i,j\in [n]$ in $G$, set $A(i,j)=2\sigma>1$; otherwise set $A(i,j)=1$. It is easy to see that this distance function is $\sigma$-semi-metric.  Let $f(R)=\sum_{\{i,j\}\subseteq R} A(i,j)$. If $|R|=k$, we have
\[
f(R) = 2\sigma E(R) + ({k \choose 2}-E(R)).
\]
We know ${k \choose 2}\geq E(R)$. Therefore
\begin{equation*}
2\sigma E(R) \leq f(R) \leq 2\sigma E(R) + {k \choose 2},
\end{equation*}
and dividing both sides by $2\sigma {k \choose 2}$ we get
\begin{equation}
\label{eqn:densesubhard}
\rho(R)\leq \frac{f(R)}{2\sigma {k\choose 2}}\leq \rho(R) + \frac{1}{2\sigma}.
\end{equation}
It is also easy to see that
\[
\argmax_{\substack{R\subseteq [n]\\|R|=k}}\rho(R) = \argmax_{\substack{R\subseteq [n]\\|R|=k}} f(R).
\]

 Suppose there is   a $\frac{4}{\theta}$-approximation algorithm for  maximizing $\sigma$-semi-metric functions.
Let its output on $G$ be $S$ and choose
\[
\texttt{OPT}\in \argmax_{\substack{R\subseteq [n]\\|R|=k}}\rho(R).
\]
We have
\[
\rho(\texttt{OPT})\leq \frac{f(\texttt{OPT})}{2\sigma {k\choose 2}}\leq \frac{(\theta/4) f(S)}{2 \sigma {k\choose 2}}\leq (\theta/4) \rho(S)+ \frac{\theta}{8\sigma}.
\]
We can choose our $\sigma=\sigma(n) \geq \theta$  so that
 $\frac{\theta}{8\sigma} \leq \frac{1}{2}$. Hence $\rho(\texttt{OPT})\leq (\theta/4)\rho(S)+\frac{1}{2}$.  If $G$ is a graph in which the density of every subset of vertices of size $k$ is at most $1/\theta$, then clearly $\rho(S)\leq 1/\theta$.  If $G$ is a graph that contains a clique of size $k$, then $1=\rho(\texttt{OPT})\leq (\theta/4)\rho(S)+\frac{1}{2}$, and so $\rho(S)\geq \frac{2}{\theta}$. This means that our $1/\theta$-approximation algorithm can distinguish between these two graphs,  contradicting the implications from \cite{manurangsi2017almost}.

For (2), consider a given  $\sigma \geq 1, \epsilon >0$ and suppose there is a $(2\sigma - \epsilon)$-factor approximate algorithm for maximizing a $\sigma$-semi-metric function. Denote its output on $G$ by $S$, and let $OPT$ be defined as above. We then have
\[
\rho(\texttt{OPT})\leq \frac{f(\texttt{OPT})}{2\sigma {k\choose 2}}\leq \frac{(2\sigma - \epsilon)f(S)}{2\sigma {k\choose 2}}\leq (2\sigma - \epsilon) \rho(S)+\frac{2\sigma - \epsilon}{2\sigma}.
\]
Set $\delta=(\frac{1}{2\sigma -\epsilon}-\frac{1}{2\sigma})/2 = \frac{\epsilon}{4\sigma (2\sigma - \epsilon)}$, and note that $\delta > 0$ is a constant. If $G$ is a graph in which the density of every subset of vertices of size $k$ is at most $1/n^{\frac{1}{\log \log n)^c}}$, then clearly $\rho(S)\leq 1/n^{\frac{1}{\log \log n)^c}}$ and this is at most $\delta$ for $n$ sufficiently large. If $G$ is a graph that contains a clique of size $k$, then $1=\rho(\texttt{OPT})\leq (2\sigma - \epsilon) \rho(S)+\frac{2\sigma - \epsilon}{2\sigma}$ which means $\rho(S)\geq \frac{1}{2\sigma - \epsilon}-\frac{1}{2\sigma} = 2 \delta$. This means that our $(2\sigma - \epsilon)$-factor approximate algorithm can distinguish between these two graphs which again contradicts the implications of \cite{manurangsi2017almost}.
\end{proof}

\old{
\begin{reptheorem}{thm:approxhardness}
Assuming the Planted Clique Conjecture: (1) for any constant $\sigma \geq 1$, it is hard to approximate the maximum of a $\sigma$-semi-metric function subject to a cardinality constraint within a factor of $2\sigma-\epsilon$ for any $\epsilon > 0$ and (2)
for a super-constant $\sigma$, there is no constant factor (polytime) approximation algorithm for maximizing a $\sigma$-semi-metric function subject to a cardinality constraint.
\end{reptheorem}
\begin{proof}
Planted Clique problem asks for an algorithm to distinguish between the following graphs with probability of at least $3/4$: 1) A graph drawn from $\mathcal{G}(n,1/2)$, 2) A graph drawn from $\mathcal{G}(n,1/2)$ and then a clique of size $n^{1/2-\delta}$ is planted in it ($\delta>0$)~\cite{karp1976probabilistic}. The planted clique conjecture states that there is no polynomial time algorithm to do this task~\cite{alon2011inapproximability, Jerrum92}. It has been shown that assuming the planted clique conjecture, it is hard to approximate the maximum of a metric diversity function within a factor better than $2$~\cite{BhaskaraGMS16,BorodinLY12}.

Given a graph $G$, in the densest $k$-subgraph problem we need to find an induced subgraph of size $k$ with the maximum number of edges. Let $R$ be a subset of vertices of $G$ and $E(R)$ be the number of edges in the induced subgraph of $R$. The density of $R$ is defined as $\rho(R)=E(R)/{|R| \choose 2}$. Alon et al.~\cite{alon2011inapproximability} showed that if there is no polynomial time algorithm for the planted clique problem for a planted clique of size $n^{1/3}$, then there is no polynomial time algorithm for distinguishing between a graph $G_1$ of size $n$ that contains a clique of size $n^{1/3}$, and a graph $G_2$ of the same size in which the density of every subset of vertices of size $n^{1/3}$ is at most $\delta>0$.

For $\sigma \geq 1$, we can reduce the densest $k$-subgraph problem to $\sigma$-semi-metric function maximization in the following way. Consider an instance of densest $k$-subgraph ($k=n^{1/3}$) on graph $G$ with vertex set $[n]$. Create the distance function $A:[n]\times [n]\rightarrow \mathbb{R}$. If there is an edge between $i,j\in [n]$ in $G$, set $A(i,j)=2\sigma$, otherwise set $A(i,j)=1$. It is easy to see that this distance function is $\sigma$-semi-metric.  Let $f(R)=\sum_{\{i,j\}\subseteq R} A(i,j)$. If $|R|=k$, we have
\[
f(R) = 2\sigma E(R) + ({k \choose 2}-E(R)).
\]
We know ${k \choose 2}\geq E(R)$. Therefore
\begin{equation*}
2\sigma E(R) \leq f(R) \leq 2\sigma E(R) + {k \choose 2},
\end{equation*}
and dividing both sides by $2\sigma {k \choose 2}$ we get
\begin{equation}
\label{eqn:densesubhard}
\rho(R)\leq \frac{f(R)}{2\sigma {k\choose 2}}\leq \rho(R) + \frac{1}{2\sigma}.
\end{equation}
It is also easy to see that
\[
\argmax_{\substack{R\subseteq [n]\\|R|=k}}\rho(R) = \argmax_{\substack{R\subseteq [n]\\|R|=k}} f(R).
\]
Now, assume that for some fixed constant $\theta\geq 1$ there is a $\frac{1}{\theta}$-factor approximate algorithm for finding the maximum of $\sigma$-semi-metric function ($\sigma$ is super-constant) and its output on $G$ is $S$. Also, let
\[
\texttt{OPT}\in \argmax_{\substack{R\subseteq [n]\\|R|=k}}\rho(R).
\]
We have
\[
\rho(\texttt{OPT})\leq \frac{f(\texttt{OPT})}{2\sigma {k\choose 2}}\leq \frac{\theta f(S)}{2\sigma {k\choose 2}}\leq \theta\rho(S)+ \frac{\theta}{2\sigma}
\]
Since $\sigma \in \omega(1)$, for some $n$ large enough we have that $\frac{\theta}{2\sigma} \leq \frac{1}{2}$. Hence $\rho(\texttt{OPT})\leq \theta\rho(S)+\frac{1}{2}$. Set $\delta=\frac{1}{4\theta}$ and note that $\delta>0$ is a constant. If $G$ is a graph in which the density of every subset of vertices of size $k$ is at most $\delta$ then clearly $\rho(S)\leq \delta$. If $G$ is a graph that contains a clique of size $k$ then $1=\rho(\texttt{OPT})\leq \theta\rho(S)+\frac{1}{2}$, which means $\rho(S)\geq \frac{1}{2\theta} = 2 \delta$. This means that our $\theta$-factor approximate algorithm can distinguish between these two graphs which is in contrast with the planted clique conjecture and Alon et al. result.

For the first part, given any constant $\sigma$,
assume there is a $(2\sigma - \epsilon)$-factor approximate algorithm for some $\epsilon >0$ for finding the maximum of $\sigma$-semi-metric function. Denote its output on $G$ by $S$, and let $OPT$ be defined as above. We then have
\[
\rho(\texttt{OPT})\leq \frac{f(\texttt{OPT})}{2\sigma {k\choose 2}}\leq \frac{(2\sigma - \epsilon)f(S)}{2\sigma {k\choose 2}}\leq (2\sigma - \epsilon) \rho(S)+\frac{2\sigma - \epsilon}{2\sigma}.
\]
Set $\delta=(\frac{1}{2\sigma -\epsilon}-\frac{1}{2\sigma})/2 = \frac{\epsilon}{4\sigma (2\sigma - \epsilon)}$, and note that $\delta > 0$ is a constant. If $G$ is a graph in which the density of every subset of vertices of size $k$ is at most $\delta$ then clearly $\rho(S)\leq \delta$. If $G$ is a graph that contains a clique of size $k$ then $1=\rho(\texttt{OPT})\leq (2\sigma - \epsilon) \rho(S)+\frac{2\sigma - \epsilon}{2\sigma}$ which means $\rho(S)\geq \frac{1}{2\sigma - \epsilon}-\frac{1}{2\sigma} = 2 \delta$. This means that our $(2\sigma - \epsilon)$-factor approximate algorithm can distinguish between these two graphs which is in contrast with the planted clique conjecture and Alon et al. result.
\end{proof}
}

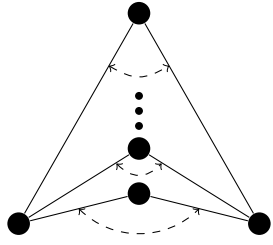
\begin{figure}
    \centering
    \centering
\begin{tikzpicture}[scale=0.4]
\tikzstyle{s}=[circle,fill=black,minimum size=0.3cm,inner sep=0pt]
\tikzstyle{t}=[circle,fill=black,minimum size=0.15cm,inner sep=0pt]
	\begin{pgfonlayer}{nodelayer}
		\node [s] (0) at (-4, -4) {};
		\node [s] (1) at (4, -4) {};
		\node [s] (2) at (0, -3) {};
		\node [s] (3) at (0, -1.5) {};
		\node [s] (4) at (0, 3) {};
		\node [style=none] (5) at (-2, -3.5) {};
		\node [style=none] (6) at (2, -3.5) {};
		\node [style=none] (7) at (0.75, -2) {};
		\node [style=none] (8) at (-0.75, -2) {};
		\node [style=none] (9) at (-1, 1.25) {};
		\node [style=none] (10) at (1, 1.25) {};
		\node[scale=0.7] [t] (11) at (0, -0.75) {};
		\node[scale=0.7] [t] (12) at (0, -0.25) {};
		\node[scale=0.7] [t] (13) at (0, 0.25) {};
	\end{pgfonlayer}
	\begin{pgfonlayer}{edgelayer}
		\draw (0) to (2);
		\draw (2) to (1);
		\draw (1) to (3);
		\draw (3) to (0);
		\draw (0) to (4);
		\draw (4) to (1);
		\draw [style=double arrow dotted, bend right, looseness=1.25] (9.center) to (10.center);
		\draw [style=double arrow dotted, bend right=45] (5.center) to (6.center);
		\draw [style=double arrow dotted, bend right=45, looseness=1.25] (8.center) to (7.center);
	\end{pgfonlayer}
\end{tikzpicture}
    \caption{Lower bound on integrality gap for quadratic functions.}
    \label{fig:intgaplowerbound}
\end{figure}

\begin{reptheorem}{thm:intgaplower}
Let $k,t\in\mathbb{N}$ with $1 \leq t \leq k$. There exists a $\sigma$-semi-metric with multilinear extension $F$, and a matroid $\mathcal{M}=([2k],\I)$ with rank $r=k+t-1$ and minimum circuit size $c = 2t$, where the integrality gap of $F(x)$ over the matroid polytope $P_\mathcal{M}$ is $\Omega(\min \{ \frac{r}{c-2},\frac{\sigma}{r}\})$.
\end{reptheorem}
\begin{proof}
Let $S_i=\{2i-1,2i\}$ for $1\leq i\leq k$, and $\mathcal{S} = \{ S_1, S_2, \ldots, S_k\}$. We define a matroid $\mathcal{M}=([2k],\I)$ in terms of its circuits as follows. A set $C$ is a circuit of  $\mathcal{M}$ if and only if $C$ is the union of any $t$ sets $S_i$. It is then clear that the minimum size $c$ of a circuit is $2t$, and the rank $r$ of the matroid is $k+t-1$. 
For example, $\mathcal{M}$ could be the graphic matroid corresponding to the graph in Figure~\ref{fig:intgaplowerbound}. Circuits here correspond to cycles of size $4$, and the dashed lines show the non-zero coefficients of $F$.

Let $F(x) = \sum_{\{u,v\} \in \mathcal{S} } x_u x_v + \sum_{\{u,v\} \notin \mathcal{S} } \frac{1}{\sigma} x_u x_v$. It is straightforward to see that $F$ is the multilinear extension of a $\sigma$-semi-metric diveristy function induced by a complete graph which has weight $1$ on edges from $\mathcal{S}$ and weight $1/\sigma$ otherwise.

By definition of $\mathcal{M}$ and $F$, it is clear that any integral solution $x_I \in P_\mathcal{M}$ maximizing $F$ will pick $t-1$ pairs from $\mathcal{S}$ and then singletons from other pairs.
Therefore
\begin{align*}
F(x_I) & :=\max_{x\in P_\mathcal{M} \cap \{0,1\}^{2k}} F(x) = (t-1) + \frac{1}{\sigma} \Big(  {r \choose 2} - (t-1) \Big)   = (1-\frac{1}{\sigma})(t-1)+\frac{1}{\sigma} {r \choose 2} \\ & = \frac{ (\sigma-1)(c-2) + r (r-1) }{2 \sigma}.
\end{align*}
On the other hand, $x_0 := \frac{k+t-1}{2k}\mathbbm{1}_{[2k]} \in P_\mathcal{M}$ 
and
$$
F(x_0)= k (\frac{k+t-1}{2k})^2 + \Big({2k \choose 2} - k  \Big)  \frac{1}{\sigma} (\frac{k+t-1}{2k})^2 =  k \big(\frac{k+t-1}{2k}\big)^2 \big(1 + \frac{2 (k-1)}{\sigma }  \big).
$$
Using that $r = k+t-1$ and $k = r-\frac{c}{2}+1$ we have
$$
    k(\frac{k+t-1}{2k})^2 = \frac{r^2}{4(r-\frac{c}{2}+1)} = \frac{r^2}{2(2r-c+2)} \geq \frac{r}{4},
$$
where the last inequality follows since $c\geq 2$.
Hence,
$
F(x_0) \geq \frac{r}{4} (1 + \frac{2 (k-1)}{\sigma } ).
$
It follows that the integrality gap is at least
$$
\frac{F(x_0)}{F(x_I)} \geq \frac{1}{2} \cdot \frac{\sigma r + 2 r (k-1) }{(\sigma - 1)(c-2) + r (r-1)}
\geq \frac{1}{2} \cdot \frac{\sigma r }{\sigma (c-2) + r^2}
\geq \frac{1}{4} \cdot \min\{ \frac{r}{c-2}, \frac{\sigma}{r} \}.
$$
\end{proof}

\section{Appendix: One-Sided Smoothness versus Lipschitz Smoothness}
\label{focsapp:lipschitz}
Lipschitz smoothness is an important, widely-used property in convex optimization and machine learning.
One-sided $\smooth$-smoothness  is different from Lipschitz smoothness (and other smoothness notions based on  Holder's or uniform continuity) and we believe it may also have applications to these areas.

A differentiable function is {\em  Lipschitz smooth} if its gradient is Lipschitz continuous. In other words, $f$ is Lipschitz smooth if there exists $L\geq 0$ such that for any $x$ and $y$,
$
||\nabla f(x)- \nabla f(y)||_2 \leq L||x-y||_2
$
or equivalently for twice differentiable functions,
$
u^T \nabla^2 f(x) u \leq L||u||_2^2.
$
We then call  $f$  {\em  $L$-Lipschitz smooth}. One could define the one-sided version of this smoothness if the above inequality holds for any $x\leq y$ (second definition/inequality holds for any $u\geq \vec{0}$). With this definition, it is easy to see that submodular functions are one-sided $0$-Lipschitz smooth. On the other hand one-sided $\smooth$-smoothness is not equivalent to one-sided $L$-Lipschitz smoothness. To see an important  difference, consider $g=cf$ function where $c$ is a constant and $f$ is one-sided smooth. We have $\nabla g(x) = c\nabla f(x)$. Thus if  $f$ is one-sided $L$-Lipschitz smooth we may only assert that $g$ is one-sided $cL$-Lipschitz smooth. In particular,  Lipschitz smoothness is not closed under multiplication. On the other hand,  the one-sided $\smooth$-smooth functions form a cone. Intuitively, the reason is that in $\smooth$-smooth functions, the ratio of the gradients is bounded (as shown in Lemma~\ref{lemma:epsilonchangegradient}) unlike Lipschitz smoothness where the difference of the gradients is bounded.

\section{Appendix: Other Applications}

\subsection{Appendix: The Diversified Procurement Problem}
\label{app:procure}

Consider a problem whereby an organization decides how to outsource the building or servicing of a system to a collection of $n$ competing  vendors. The outcome  is an allocation of  work across the vendors. An allocation is
represented by a vector $x \in P \subseteq \mathbb{R}^n$; we focus on the case where $P \subseteq [0,1]^n$. Possibly $P$ is just $P=\{x: ||x||_1 \leq k\}$ but may also incorporate structural constraints imposed by the system or to enforce a  resilience solution (e.g., avoid allocations where one vendor becomes  too big to fail). 
Given bids $b_i$ for each $i$, the  payoff  to the organization is $b^T x$.  A different type of consideration for the procuring organization is to  build  diversity into the  work-plan which results. We consider two sources for lack of diversity.  First, there may be   collusions  and these  are to be subdued.  The organization can define
 a matrix  $[A_{ij}] \geq 0$ which  estimates  pairwise collusions. A solution which lessens the value of $x^T A x$ is more desired.
Second, the system may  be serving a collection of $m$ stakeholder communities. Different vendors may be more desirable than others to
distinct communities. Again, the procuring organization can model this by defining   vectors $g_i \in \mathbb{R}^m$, where $g_i(j) \in \mathbb{R}$ represents the level of support (positive or negative) it receives from community $j$. The overall measure of quality seeks  a solution which promotes representation across more communities (vectors with $g_i^T g_{i'} < 0$ are pointing in different directions; hence  good). We propose the following model to address this multi-criteria objective: 
\[
F(x) = \frac{1}{2} x^T (-A + - G^T G) x  + b^T x,
\]

\noindent
where $G$ is the $m \times n$ matrix whose columns are $g_i$. 
Hence $F$ consists of a revenue part $b^Tx$ and  a penalty part for lack of diversity.

It is easily
seen that $A+G^T G$ is copositive and hence $F(x)$ is $0$-\OSS --- see Section~\ref{sec:zero}.\footnote{This also leads to  examples of $0$-\OSS functions whose Hessians have positive off-diagonals. For instance,  by taking $A=0$ and select  vectors $g_i$ which are pairwise oblique (i.e., $g_i \cdot   g_j < 0$).}  Hence
 the jump-start continuous greedy process of Section~\ref{sec:cont-greedy-and-smoothness} can be applied if the model has been defined so that $\nabla F(x) \geq 0$ (since $F$ is normalized this ensures non-negativity). Checking this gradient condition is easy. Moreover, it  is useful in the modelling phase for  exploring the trade-offs between the vector $b$ of bids and the community representation objectives which are  determined by $G$. 
 
 
Note that Theorem~\ref{thm:thirdderiv} implies that the continuous greedy process produces a solution which is within a factor  $1-1/e$ 
of the  fractional optimum of
 $\max\{ F(x): x \in P\}$. We may also create
 a (weakly) polytime $1-1/e$ approximation as follows. 
 
 We assume that the model has been constructed so that the bids dominate the
 diversity penalties. Concretely we assume that $\nabla F(x)=Cx + b > \vec{\frac{1}{n}}$.  We also let $U$ be an upper bound on the entries of $C:=A+G^TG$, i.e., $\max |C_{ij}|$.
 We now examine for which values $\eta$ is  the function $F(x)$  $\eta$-local --- see paragraph before Theorem~\ref{thm:discretize_eta}.
That is, we want:
 \[
 u^T \nabla F(x+\epsilon u) \geq (1-\eta \epsilon) u^T \nabla F(x)
 \]
 
 \noindent
 for any $x,u \geq 0$, $\epsilon \in [0,1]$. This holds if we have:
  \[
 u^T (Cx+b) + \epsilon u^T C u  \geq (1-\eta \epsilon) (Cx+b).
 \]
 
 \noindent 
 By re-arranging this holds as long as
  \begin{equation}
      \label{eqn:eigs}
 - u^T C u \leq \eta u^T (Cx+b)
 \end{equation}

\noindent
 We now use the fact
 that $|u^T C u| \leq \lambda_{|max|} ||u||^2_2$, where $\lambda_{|max|}$ is the maximum value of $|\lambda|$ for an eigenvalue $\lambda$ of $C$.
 By  the Gershgorin Circle Theorem \cite{gershgorin1931uber} 
$\lambda_{|max|}$  is
 at most $|C_{ii}|+ \sum_{j \neq i} |C_{ij}| \leq n U$.
 Hence $|u^T C u| \leq (n U) ||u||^2_2$
 which is at most $n^2 U ||u||_1$ if $||u||_{\infty} \leq 1$.  By our gradient assumption,  the right hand side of (\ref{eqn:eigs}) is then at least $\frac{\eta}{n} ||u||_1$. Note that 
 we only need to establish $\eta$-locality for vectors $u=v_{Max}$ selected by the greedy process. Since these vectors lie in $P$ which  is in the unit hypercube, we have the desired inequality.
 Hence we may choose $\eta = n^3 U$ and
 a discretization follows from
 Theorem~\ref{thm:discretize_eta}.

\end{document}